\newtheorem{theorem}{Theorem}[section]
\newtheorem{lemma}[theorem]{Lemma}
\newtheorem{definition}[theorem]{Definition}
\newtheorem{proposition}[theorem]{Proposition}
\newtheorem{corollary}[theorem]{Corollary}
\newtheorem{conjecture}[theorem]{Conjecture}
\newtheorem{fact}[theorem]{Fact}
\newtheorem{remark}[theorem]{Remark}
\newtheorem{example}[theorem]{Example}
\newcommand{\ffrac}[2]{#1/#2}
\newcommand{\envy}{\mathsf{envy}}
\newcommand{\wt}{\widetilde}
\newcommand{\eps}{\epsilon}
\newcommand{\R}{\mathbb{R}}
\newcommand{\Grid}{\mathrm{Grid}}
\renewcommand{\varepsilon}{\epsilon}
\renewcommand{\tilde}{\wt}
\DeclareMathOperator{\poly}{poly}
\newcommand{\UTP}{{\sc Unary 3-Partition}\xspace}
\newcommand{\TP}{{\sc 3-Partition}\xspace}
\renewcommand{\cal}[1]{\mathcal{#1}}
\newcommand{\GHA}{\textsc{Graphical House Allocation}}
\newcommand{\MLA}{\textsc{Minimum Linear Arrangement}}
\newcommand{\cw}{\mathsf{cutwidth}}
\newcommand{\pw}{\mathsf{pathwidth}}
\newcommand{\tw}{\mathsf{treewidth}}
\newcommand{\polylog}{\text{polylog}}
\definecolor{mygreen}{RGB}{80,180,0}
\definecolor{b2}{RGB}{51,153,255}
\definecolor{mycy2}{RGB}{255,51,255}
\newcommand{\Envy}{\mathsf{Envy}}
\colorlet{thechosenone}{red}
\colorlet{thechosentwo}{blue}
\newcommand*{\RN}[1]{\expandafter\@slowromancap\romannumeral #1@}
\newif\ifcomments
\newcommand{\rik}[1]{{\textcolor{red}{Rik: { #1}}}}
\newcommand{\rik}[1]{}
\newif\ifcomments
\newcommand{\justin}[1]{{\textcolor{blue}{Justin: { #1}}}}
\newcommand{\justin}[1]{}
\newif\ifcomments
\newcommand{\vignesh}[1]{{\textcolor{orange}{Vignesh: { #1}}}}
\newcommand{\vignesh}[1]{}
\newif\ifcomments
\newcommand{\rohit}[1]{{\textcolor{magenta}{Rohit: { #1}}}}
\newcommand{\rohit}[1]{}
\newif\ifcomments
\newcommand{\hadi}[1]{{\textcolor{purple}{Hadi: { #1}}}}
\newcommand{\hadi}[1]{}
\newif\ifcomments
\newcommand{\andrew}[1]{{\textcolor{teal}{Andrew: { #1}}}}
\newcommand{\andrew}[1]{}
\title{Tight Approximations for  Graphical House Allocation}
\author{%
  Hadi Hosseini\\
  Penn State University\\
  \texttt{hadi@psu.edu} \\
  \and
  Andrew McGregor \\
  UMass Amherst\\
  \texttt{mcgregor@cs.umass.edu} \\
  \and
  Rik Sengupta \\
  UMass Amherst\\
  \texttt{rsengupta@cs.umass.edu} \\
  \and
  Rohit Vaish \\
  IIT Delhi\\
  \texttt{rvaish@iitd.ac.in} \\
  \and
  Vignesh Viswanathan \\
  UMass Amherst\\
  \texttt{vviswanathan@umass.edu} \\
}
\date{}
\begin{document}

\maketitle

\begin{abstract}
The {\GHA} problem asks: how can $n$ houses (each with a fixed non-negative value) be assigned to the vertices of an undirected graph $G$, so as to minimize the ``aggregate local envy'', i.e., the sum of absolute differences along the edges of $G$? This problem generalizes the classical {\MLA} problem, as well as the well-known \emph{House Allocation Problem} from Economics, the latter of which has notable practical applications in organ exchanges. Recent work has studied the computational aspects of {\GHA} and observed that the problem is NP-hard and inapproximable even on particularly simple classes of graphs, such as vertex disjoint unions of paths. However, the dependence of any approximations on the structural properties of the underlying graph had not been studied.

In this work, we give a complete characterization of the approximability of {\GHA}. We present algorithms to approximate the optimal envy on general graphs, trees, planar graphs, bounded-degree graphs, bounded-degree planar graphs, and bounded-degree trees.
For each of these graph classes, we then prove \emph{matching} lower bounds, showing that in each case, no significant improvement can be attained unless P = NP. We also present general approximation ratios as a function of structural parameters of the underlying graph, such as treewidth; these match the aforementioned tight upper bounds in general, and are significantly better approximations for many natural subclasses of graphs. Finally, we present constant factor approximation schemes for the special classes of complete binary trees and random graphs. 

Some of the technical highlights of our work are the use of expansion properties of Ramanujan graphs in the context of a classical resource allocation problem, and approximating optimal cuts in binary trees by analyzing the behavior of consecutive runs in bitstrings.
\end{abstract}

\section{Introduction}


In the EconCS community, the \emph{House Allocation Problem} has been a topic of significant interest for some time \citep{shapley1974cores, svensson1999strategy, beynier2018localenvy, gsvfairhouse, kmsfairhouse}. In its canonical form, the problem involves a set of $n$ agents, a set of $n$ items (``houses''), and possibly different valuation functions for each agent. In general, given this framework, the problem asks for an ``optimally fair'' allocation of the houses to the agents. For instance, we might wish to minimize the total envy, or maximize the number of envy-free pairs of agents. In this context, as it is common in the fairness literature, an agent $i$ envies an agent $j$ in a particular allocation if according to agent $i$'s valuation function, the item received by agent $j$ is worth more than the item received by agent $i$; the amount of envy is the difference in these two values. The canonical problem has been studied in a variety of contexts, and is well-known as an algorithmically difficult problem to solve, for most reasonable fairness objectives.

\citet{canon} introduced a variant of the house allocation problem called {\GHA}. In this setting, there are $n$ agents, but now they are placed on the vertices of an undirected $n$-vertex graph $G = (V, E)$. There are still $n$ items with arbitrary values, but the agents are \emph{identical} in how they value these $n$ items (i.e., they all agree on the value of each house). {\GHA} now asks: how do we allocate each house to an agent so as to minimize the total envy along the edges of $G$?

We remark here that the setting where the agents are on a graph and only the envy along the graph edges is considered was studied before as well by \citet{beynier2018localenvy}, who considered ordinal preferences in such a setting, and were interested in maximizing the number of envy-free edges in the underlying graph. 

Observe that {\GHA} is a purely combinatorial problem: we are given an $n$-vertex graph $G = (V, E)$ and a multiset $H = \{h_1, \ldots, h_n\} \subseteq \R_{\geq 0}$. We wish to find the bijective function $\pi : V \to H$ that minimizes
$\sum_{(x, y) \in E}|\pi(x) - \pi(y)|$.

If the set of values were $H = \{1, \ldots, n\}$, then {\GHA} would be identical to the well-known {\MLA} problem. This was observed by \citet{canon}, who went on to show some remarkable differences between the two problems. For instance, while all hardness results carry over from {\MLA} to {\GHA}, the latter is actually a significantly harder problem even on very simple graphs. In particular, in {\MLA}, we can assume without loss of generality that the underlying graph is connected; this is because an optimal solution is given by taking each connected component separately, and optimally assigning a contiguous subset of values to it. We lose this guarantee in {\GHA}, even for  small graphs with just two connected components. As a typical example of the differences between the two problems, observe that if the underlying graph is a disjoint union of paths, then solving {\MLA} optimally takes linear time, but even on these simple instances, {\GHA} is NP-complete \citep{canon}.

We do note, however, that all the hardness constructions by \citet{canon} used the disconnectedness of the underlying graphs crucially, in finding reductions from bin packing instances. Their results also show that for very simple classes of disconnected graphs, {\GHA} is inapproximable to any finite factor. However, these proof techniques do not carry over to \emph{connected} graphs, and so it was not known whether any of these reductions would go through for connected graphs. For instance, a well known result by \citet{mlatrees} states that {\MLA} is solvable in polynomial time on trees; the complexity of this problem for {\GHA} was open.



\subsection{Our Contributions}

We present a complete characterization of the approximability of {\GHA} on various classes of connected graphs, summarized in Table \ref{table:summary1}. In particular, for any instance on the following graph classes, we show a polynomial-time\footnote{ In all our results, $\tilde{O}$ hides $\polylog(n)$ factors} algorithm on an $n$-vertex graph $G$ (with maximum degree $\Delta$) in that class for obtaining the stated multiplicative approximation to the optimal envy, and then demonstrate a matching lower bound that shows that any polynomial improvement on the approximation ratio is impossible on that graph class unless P = NP:
\begin{itemize}
    \item If $G$ is any connected graph, \emph{any} allocation attains the trivial upper bound of $O(n^2)$ (Proposition \ref{prop:trivialgeneral}). In Theorem \ref{thm:general-approx-lower-bound}, we show that we cannot have an $O(n^{2 - \epsilon})$-approximation for any $\epsilon > 0$. We also give a polynomial-time $\tilde{O}(\tw(G)\cdot\Delta)$-approximation algorithm (Corollary \ref{cor:cutwidth-upperbounds}).
    
    \item If $G$ is a tree, \emph{any} allocation attains the trivial upper bound of $O(n)$ (Proposition \ref{prop:trivialgeneral}). In Theorem \ref{thm:trees-approx-lower-bound}, we show that we cannot have an $O(n^{1 - \epsilon})$-approximation for any $\epsilon > 0$. This is in stark contrast to {\MLA}, where there are sub-quadratic algorithms for \emph{exact} solutions on trees \citep{mlatrees}. We also explicitly show a simple divide-and-conquer procedure (Algorithm \ref{alg:treelogn}) that gives the same $O(\Delta\log n)$-approximation in $O(n\log n)$ time.

    \item If $G$ is planar, Corollary \ref{cor:cutwidth-upperbounds} gives us a polynomial-time algorithm to achieve an $\tilde{O}(\sqrt{n\Delta})$-approximation to the optimal envy. In the worst case, $\Delta = \Theta(n)$, so this is a worst-case approximation of $\tilde{O}(n)$. Once again, Theorem \ref{thm:trees-approx-lower-bound} shows that we cannot have an $O(n^{1 - \epsilon})$-approximation for any $\epsilon > 0$.

    \item If $G$ is a bounded-degree graph, Corollary \ref{cor:cutwidth-upperbounds} gives us a polynomial-time algorithm to achieve an $\tilde{O}(\tw(G))$-approximation to the optimal envy. Again, this is a worst-case approximation of $\tilde{O}(n)$. Using Theorem \ref{thm:bounded-degree-approx-lower-bound}, we show that we cannot have an $O(n^{1 - \epsilon})$-approximation for any $\epsilon > 0$. This is our 
    most involved technical result, and it uses expansion properties of Ramanujan graphs.

    \item If $G$ is a bounded-degree planar graph, Corollary \ref{cor:cutwidth-upperbounds} gives us a polynomial-time $\tilde{O}(\sqrt{n})$-approximation algorithm. We match this by showing that we cannot have an $O(n^{0.5 - \epsilon})$-approximation for any $\epsilon > 0$ (Theorem \ref{thm:bounded-degree-planar-approx-lower-bound}).

    \item If $G$ is a bounded-degree tree, both Algorithm \ref{alg:treelogn} and Corollary \ref{cor:cutwidth-upperbounds} give us a polynomial-time algorithm that outputs an $\tilde{O}(1)$-approximation to the optimal envy. We show that finding the exact optimal envy is NP-hard (Theorem \ref{thm:bounded-degree-trees-np-complete}).
\end{itemize}

\begin{table*}
    \centering
    \def\arraystretch{1.3}
    \begin{tabular}{ |l|l|c|  }
 \hline
 \multicolumn{3}{|c|}{Approximations for {\GHA}} \\
 \hline
 \bf Graph Class & \bf Upper Bound & \bf Lower Bound \\
 \hline
\multirow{2}{15em}{Connected graphs}  & $O(n^2)$ (Prop.~\ref{prop:trivialgeneral}) & \multirow{2}{10em}{\centering $\omega(n^{2 - \varepsilon})$ (Thm.~\ref{thm:general-approx-lower-bound})} \\
& $O(\tw(G)\cdot\Delta\log^{2.5} n)$ (Cor.~\ref{cor:cutwidth-upperbounds}(\ref{cor:cwgeneral})) & \\
\hline 
\multirow{2}{15em}{Trees}  & $O(n)$ (Prop.~\ref{prop:trivialgeneral}) & \multirow{2}{10em}{\centering $\omega(n^{1 - \varepsilon})$ (Thm.~\ref{thm:trees-approx-lower-bound})} \\
& $O(\Delta\log n)$ (Alg.~\ref{alg:treelogn}, Cor.~\ref{cor:cutwidth-upperbounds}(\ref{cor:cwtrees})) & \\
\hline
Planar graphs  & $O(\sqrt{n\Delta}\log^{1.5}n)$ (Cor.~\ref{cor:cutwidth-upperbounds}(\ref{cor:cwplanar})) & $\omega(n^{1 - \varepsilon})$ (Thm.~\ref{thm:trees-approx-lower-bound}) \\
\hline
Bounded-degree graphs  & $O(\tw(G)\cdot\log^{2.5}n)$ (Cor.~\ref{cor:cutwidth-upperbounds}(\ref{cor:cwgeneral})) & $\omega(n^{1 - \varepsilon})$ (Thm.~\ref{thm:bounded-degree-approx-lower-bound}) \\
\hline
Bounded-degree planar graphs &  $O(\sqrt{n}\log^{1.5}n)$ (Cor.~\ref{cor:cutwidth-upperbounds}(\ref{cor:cwplanar})) & $\omega(n^{0.5 - \varepsilon})$ (Thm.~\ref{thm:bounded-degree-planar-approx-lower-bound}) \\
\hline
Bounded-degree trees & $O(\log n)$ (Thm.~\ref{thm:treelogn}, Cor.~\ref{cor:cutwidth-upperbounds}(\ref{cor:cwtrees})) & $> 1$ (NP-hard, Thm.~\ref{thm:bounded-degree-trees-np-complete}) \\
 \hline \hline
 Random graphs & $1+ O(\sqrt{\ln (n)/n})$ (Thm.~\ref{thm:random})) \emph{w.h.p.} & -- \\
 \hline
 Complete binary trees & $3.5$ (Thm.~\ref{thm:inorder})) & \textcolor{red}{open} (Conj.~\ref{conj:completebintrees}) \\
 \hline
\end{tabular}
\caption{Summary of our results. Here, $\Delta$ is the maximum degree of the graph in question, and the lower bounds assume P $\neq$ NP. Note that in all cases, the upper and lower bounds match up to polylogarithmic factors, showing that nontrivial improvements to these upper bounds are impossible unless P = NP. All our upper bounds are polynomial time.}
\label{table:summary1}
\end{table*}

Note that assuming connectivity in the results above is necessary, since \citet{canon} showed that disconnected graphs cannot have the optimal envy approximated to any finite factor. We give the first known results for connected graphs.

We also show that for random graphs, any allocation is a $(1 + o(1))$-approximation with high probability (Theorem \ref{thm:random}).

Finally, we investigate complete binary trees in further detail. We first show that the class of binary trees is not ``well-behaved'', by refuting a conjecture by \citet{canon} about the structural properties of exact optimal allocations on binary trees by means of a counterexample (Section \ref{sec:boundeddegreetrees}). The hardness results in Theorems \ref{thm:trees-approx-lower-bound}, \ref{thm:bounded-degree-approx-lower-bound}, and \ref{thm:bounded-degree-trees-np-complete} might have suggested that complete binary trees cannot have $o(\log n)$-approximations in general. We show, however, that just the in-order traversal on a complete binary tree achieves a $3.5$-approximation to the optimal envy (Theorem \ref{thm:inorder}). We also show that this approximation ratio cannot be improved beyond 1.67 by a natural class (``value-agnostic'') of algorithms. 



Our paper is organized as follows. In Section \ref{sec:prelims}, we set up preliminaries. In Sections \ref{sec:upper} and \ref{sec:lower}, we present our upper and lower bounds respectively from Table \ref{table:summary1}. In Section \ref{sec:completebintrees}, we discuss binary trees. We finish with concluding remarks and open directions in Section \ref{sec:conclusions}.

\subsection{Other Related Work} 
Our work is very close to the large body of results on the computability of {\MLA}. While finding optimal linear arrangements is intractable in general \citep{mlabinaryhard}, there have been several papers presenting approximation algorithms for the problem \citep{richarao2005mla,feige2007mla, even200mla}, with the best known approximation ratio being $O(\sqrt{\log n}\log{\log n})$ \citep{feige2007mla}. Note that it is relatively straightforward to show that an $\alpha$-approximation algorithm for the \MLA~problem yields an $\alpha\phi$ approximation for the $\GHA$ problem where $\phi=\max_{1\leq i\leq n-1} (h_{i+1}-h_i)/\min_{1\leq i\leq n-1} (h_{i+1}-h_i)$.

Our problem also generalizes the classical problem of \textsc{Minimum Bisection}, which asks how to partition a graph $G$ into two almost equally-sized components with the smallest number of edges going across the cut. This problem is NP-complete \citep{mlahard} and it is also known to be inapproximable by an additive factor of $n^{2-\epsilon}$ \citep{bj92}. These lower bounds carry over to the \GHA{} problem as well, although the latter is strictly harder. For instance, \textsc{Minimum Bisection} is known to be solvable exactly in polynomial time for forests, but {\GHA} is NP-hard~\citep{canon}.

The canonical house allocation problem has also been well-studied in the literature. Recall that, in the canonical house allocation problem, agents are allowed to disagree on the values of the houses. In this setting, the existence and computational complexity of envy-free allocations on graphs have been reasonably well-studied \citep{beynier2018localenvy,eiben2020parameterized,bredereck2022envy}, with the problem, unsurprisingly, being computationally intractable in most settings. 
There have also been a few papers studying the complexity of minimizing various notions of envy when the underlying graph is {\em complete} \citep{gsvfairhouse, kamiyama2021envy, aigner2022envy,MMS23complexity}.

\section{Model and Preliminaries}\label{sec:prelims}


We have a set of $n$ {\em agents} $V = [n]$ placed on the vertices of an undirected graph $G = (V, E)$. 
There are $n$ {\em houses}, each with a nonnegative \emph{value}, that need to be allocated to the agents. We represent the houses simply by the multiset of values $H = \{h_1, \ldots, h_n\}$, and assume WLOG that $h_1 \leq \ldots \leq h_n$. We will interchangeably talk about the house with value $h_i$ and the real number $h_i$. The pair $(G, H)$ defines an \emph{instance} of {\GHA}. 

An {\em allocation} $\pi: V \rightarrow H$ is a bijective mapping from agents (or nodes) to house values. Given an allocation $\pi$ and an edge $(i, j) \in E$, we define the {\em envy} along the edge $(i, j)$ as $|\pi(i) - \pi(j)|$. Our goal in {\GHA} is to compute an allocation $\pi^\ast$ that {\em minimizes} the total envy 
along all the edges of $G$:
\begin{align*}
    \Envy(\pi, G) := \sum_{(i, j) \in E} |\pi(i) - \pi(j)|.
\end{align*}

We adopt the following definition from  \citet{canon} that provides a geometric representation to visualize allocations.

\begin{definition}[Valuation Interval]\label{def:valn_interval}
For an instance $(G, H)$ of {\GHA}, define the \emph{valuation interval} as the closed interval $\left[h_1, h_n\right] \subset \R_{\geq 0}$. For any allocation $\pi$, the envy along the edge $(i, j) \in E$ is exactly the length of the interval $[\pi(i), \pi(j)]$ (assuming $\pi(i) \leq \pi(j)$). We sometimes call the intervals $[h_i, h_{i+1}]$ for $1 \leq i \leq n - 1$ the \emph{smallest subintervals} of the valuation interval.
\end{definition}

 An optimal allocation $\pi^\ast$ would minimize the sum of the lengths of the intervals corresponding to each of its edges.
An allocation $\pi$ is \emph{$\alpha$-approximate} if $\Envy(\pi, G) \le \alpha\cdot\Envy(\pi^\ast, G)$.


Fix any arbitrary class $\cal G$ of graphs (we allow $\cal G$ to be a singleton set). We say an algorithm $\mathsf{ALG}_\mathcal{G}$ is \emph{defined} on $\cal G$ if $\mathsf{ALG}_\mathcal{G}$ is well-specified and outputs a valid allocation on every instance $(G, H)$ of {\GHA} with $G \in \mathcal{G}$. Such an algorithm $\mathsf{ALG}_\mathcal{G}$ is an \emph{$\alpha$-approximation} if for all instances $(G, H)$ of {\GHA} with $G \in \mathcal{G}$, $\mathsf{ALG}_{\mathcal{G}}$ always outputs an $\alpha$-approximate allocation. A $1$-approximation is an exact algorithm.


\begin{definition}[Value-Agnostic Algorithms]
\label{defn:valueagnostic}
An algorithm $\mathsf{ALG}_\mathcal{G}$ defined on a graph class $\mathcal{G}$ is \emph{value-agnostic} if on every input $(G, H)$ with $G \in \mathcal{G}$, $\mathsf{ALG}_\mathcal{G}$ returns the same allocation on all instances where the \emph{ordering} of house values is the same (in other words, the algorithm only requires the ordinal ranking and not the numerical values).
If the graph class $\cal G$ admits a value-agnostic $\alpha$-approximation algorithm, we say $\cal G$ is \emph{$\alpha$-value-agnostic}. Otherwise, it is \emph{$\alpha$-value-sensitive}.
\end{definition}

How can we re-frame existing results on {\GHA} in the light of Definition \ref{defn:valueagnostic}? \citet{canon} show that, unless P = NP, there is no $1$-approximation algorithm $\mathsf{ALG}_\mathcal{G}$ when $\cal G$ is the set of vertex-disjoint unions of paths, cycles, or stars. In contrast, they show that value-agnostic \emph{exact} algorithms exist when $\cal G$ is the set of paths, cycles, or stars.

Of course, value-agnostic $\alpha$-approximations are extremely powerful algorithms, as they can exploit the graph structure \emph{independently} of the values in the {\GHA} instance. As we would expect, value-agnostic $1$-approximations do not always exist, even on very simple graph classes and even if we allow for unlimited time. For instance, consider the graph consisting of the disjoint union of $K_2$ and $K_3$. Figure \ref{fig:value_agnostic_ex} shows that this graph does not admit an $\alpha$-value-agnostic algorithm for any finite $\alpha$.

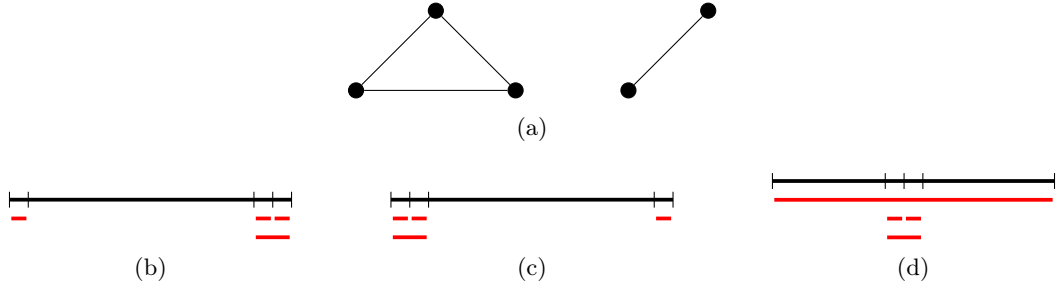
\begin{figure}[t]
    \centering
    \begin{subfigure}[b]{0.3\textwidth} 
    \centering
    \begin{tikzpicture}[,
    mycirc/.style={circle,fill=black, draw = black,minimum size=0.2cm,inner sep = 1.2pt}, node distance = 1.5cm]
    \node[mycirc] (center){};
    \node[mycirc, below left of=center] (left) {};
    \node[mycirc, below right of=center] (right){};
    \node[mycirc, right of=right] (p1) {};
    \node[mycirc, above right of=p1] (p2) {}; 
    \draw (p1)--(p2);
    \draw (center)--(left);
    \draw (center)--(right);
    \draw (left)--(right);
    \end{tikzpicture}
    \caption{}
    \label{subfig:two_edge_path}
    \end{subfigure}
    \\
    \vspace{10pt}
    \begin{subfigure}[b]{0.3\textwidth} 
    \centering
    \begin{tikzpicture}[xscale=0.5]
    \xdef\eps{0.05}
        \foreach \x in {1,2,14,15,16}
            {
                \coordinate (A\x) at ($(\x/2,0)$) {};
                \draw ($(A\x)+(0,3pt)$) -- ($(A\x)-(0,3pt)$);
            }
            \draw[line width=0.5 mm] (0.5,0) -- (8,0);
            \draw[color=thechosenone,line width=0.5 mm] (0.5+\eps,-0.25) -- (1-\eps,-0.25);
            \draw[color=thechosenone,line width=0.5 mm] (7+\eps,-0.25) -- (7.5-\eps,-0.25);
            \draw[color=thechosenone,line width=0.5 mm] (7.5+\eps,-0.25) -- (8-\eps,-0.25);
            \draw[color=thechosenone,line width=0.5 mm] (7+\eps,-0.5) -- (8-\eps,-0.5);
        \end{tikzpicture}
        \caption{}
    \label{subfig:p2_k3_1}
    \end{subfigure} 
    \begin{subfigure}[b]{0.3\textwidth} 
    \centering
    \begin{tikzpicture}[xscale=0.5]
    \xdef\eps{0.05}
        \foreach \x in {1,2,3,15,16}
            {
                \coordinate (A\x) at ($(\x/2,0)$) {};
                \draw ($(A\x)+(0,3pt)$) -- ($(A\x)-(0,3pt)$);
            }
            \draw[line width=0.5 mm] (0.5,0) -- (8,0);
            \draw[color=thechosenone,line width=0.5 mm] (0.5+\eps,-0.25) -- (1-\eps,-0.25);
            \draw[color=thechosenone,line width=0.5 mm] (1+\eps,-0.25) -- (1.5-\eps,-0.25);
            \draw[color=thechosenone,line width=0.5 mm] (7.5+\eps,-0.25) -- (8-\eps,-0.25);
            \draw[color=thechosenone,line width=0.5 mm] (0.5+\eps,-0.5) -- (1.5-\eps,-0.5);
        \end{tikzpicture}
        \caption{}
    \label{subfig:p2_k3_2}
    \end{subfigure} 
        \begin{subfigure}[b]{0.3\textwidth} 
        \centering
        \begin{tikzpicture}[xscale=0.5]
            \xdef\eps{0.05}

            \foreach \x in {1,7,8,9,16}
            {
                \coordinate (A\x) at ($(\x/2,-1.5)$) {};
                \draw ($(A\x)+(0,3pt)$) -- ($(A\x)-(0,3pt)$);
            }
            \draw[line width=0.5 mm] (0.5,-1.5) -- (8,-1.5);
            \draw[color=thechosenone,line width=0.5 mm] (0.5+\eps,-1.75) -- (8-\eps,-1.75);
            \draw[color=thechosenone,line width=0.5 mm] (3.5+\eps,-2) -- (4-\eps,-2);
            \draw[color=thechosenone,line width=0.5 mm] (4+\eps,-2) -- (4.5-\eps,-2);
            \draw[color=thechosenone,line width=0.5 mm] (3.5+\eps,-2.25) -- (4.5-\eps,-2.25);
    \end{tikzpicture}
    \caption{}
    \label{subfig:p2_k3_3}
    \end{subfigure}

    \caption{(a) shows the graph $K_2 \cup K_3$. (b), (c), and (d) show different valuation intervals with different optimal assignments for $K_2 \cup K_3$. In particular, note that every optimal solution for (b) requires the two smallest values to go to the $K_2$, but every optimal solution for (c) requires the three smallest values to go to the $K_3$, and an $\alpha$-value-agnostic algorithm cannot distinguish between the two, for any finite $\alpha$.}
    \label{fig:value_agnostic_ex}
\end{figure}

Although all our examples so far use the disconnectedness of the graphs to illustrate value-sensitivity, we will see in Section \ref{sec:completebintrees} that there are value-sensitive connected graphs as well.


For any graph $G = (V, E)$, and $S \subseteq V$, we denote by $\delta_G(S)$ the number of edges going across the cut $(S, V - S)$ in $G$. We will often estimate $\delta_G(S)$ for various subsets $S$. For $k \leq n-1$, we define $\delta_G(k) := \min_{|S| = k}\delta_G(S)$ as the size of the smallest cut in $G$ with $k$ vertices on one side. Of course, $\delta_G(k) = \delta_G(n - k)$ for all $k$. A $(k, n-k)$-cut in $G$ will be any cut $(S, V-S)$ with $|S| = k$.

We will use a few concepts from structural graph theory, most notably that of \emph{cutwidth}.

\begin{definition}\label{def:cutwidth}
    For a graph $G = (V, E)$ on $n$ vertices, let $\sigma = (v_1, \ldots, v_n)$ be any ordering of $V$. The \emph{width} of $\sigma$ is defined as
    \begin{equation*}
\mathsf{width}(\sigma, G):=        \max_{1 \leq \ell \leq n-1} \delta_G(\{v_1, \ldots, v_\ell\}).
    \end{equation*}
    The \emph{cutwidth} of $G$ is  the minimum width over all orderings of $G$, i.e.,
    \begin{equation*}
        \cw(G) := \min_{\sigma \in S_n}\mathsf{width}(\sigma, G)~.
    \end{equation*}
\end{definition}

The ordering $\sigma$ is often called a \emph{layout}. An \emph{optimal} layout is an ordering that achieves the cutwidth of $G$. The cutwidth is closely related to other standard notions of width used in structural graph theory. In particular, we have the following chain of inequalities (see \citet{korach1993cutwidth}):
\begin{align}
    &\tw(G)  \leq \pw(G) \leq \cw(G) \notag\\ &\qquad \leq O(\Delta\cdot\pw(G)) \leq O(\Delta\cdot\tw(G)\cdot\log n) 
    \label{eq:twpwcw}
\end{align}

Finding the exact cutwidth of $G$ in general is a difficult algorithmic problem. It can be computed exactly for trees (along with an optimal ordering) in time $O(n\log n)$ \citep{yannakakis1985treecutwidth}. However, even for planar graphs, the problem is NP-complete \citep{monien1988cutwidth}. 

If $G$ is sufficiently dense, there is a polynomial-time approximation scheme for the cutwidth \citep{frieze1996cutwidthdense}. In general, there is an efficient $O(\log^{1.5} n)$-approximation of the cutwidth known \citep{leighton1999cutwidthapprox}, which also returns a layout achieving this ratio. We will use this process as a subroutine several times in Section \ref{sec:upper}, for our upper bounds.
\section{Upper Bounds}\label{sec:upper}

The hardness of achieving optimal envy even on simple classes of graphs (e.g.,~disjoint unions of paths) \citep{canon} immediately gives rise to the question of whether we can approximate optimal solutions. As stated before, we need to assume connectivity in general.

We start by making a trivial observation (Proposition \ref{prop:trivialgeneral}): \textit{any} allocation of values to a connected graph is an $O(n^2)$-approximation to the optimal envy, and in fact an $O(n)$-approximation when the graph is a tree. This is due to the fact that every smallest subinterval of the valuation interval is covered by at most $|E|$ edges, but connectivity requires that it be covered by at least one edge. 

\begin{proposition}\label{prop:trivialgeneral}
    For \emph{any} instance of {\GHA} on a connected graph $G = (V, E)$, \emph{any} allocation is an $|E|$-approximation to the optimal value.
\end{proposition}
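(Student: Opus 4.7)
The plan is to exploit the valuation interval geometry from Definition~\ref{def:valn_interval} and decompose the envy of any allocation as a weighted sum over the $n-1$ smallest subintervals $[h_k, h_{k+1}]$. For any allocation $\pi$ and any edge $e = (i,j) \in E$, the contribution $|\pi(i) - \pi(j)|$ equals the total length of the smallest subintervals contained in $[\min(\pi(i),\pi(j)), \max(\pi(i),\pi(j))]$. Letting $c_k(\pi)$ denote the number of edges whose endpoint-interval covers $[h_k, h_{k+1}]$, this rearrangement yields
\begin{equation*}
    \Envy(\pi, G) \;=\; \sum_{k=1}^{n-1} (h_{k+1} - h_k)\cdot c_k(\pi).
\end{equation*}

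Next, I would prove a matching upper and lower bound on $c_k$. The trivial upper bound is $c_k(\pi) \le |E|$ for every $k$ and every $\pi$, giving $\Envy(\pi,G) \le |E| \cdot \sum_{k=1}^{n-1}(h_{k+1}-h_k)$. For the lower bound on $\Envy(\pi^\ast, G)$, I would use connectivity of $G$: for each $k$, the vertex set $S_k := \pi^{\ast-1}(\{h_1,\dots,h_k\})$ is nonempty and a proper subset of $V$, so by connectivity there exists at least one edge crossing the cut $(S_k, V \setminus S_k)$. Any such edge has endpoint values straddling $[h_k, h_{k+1}]$, so its interval covers that smallest subinterval. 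Hence $c_k(\pi^\ast) \ge 1$ for every $k$, which gives $\Envy(\pi^\ast, G) \ge \sum_{k=1}^{n-1}(h_{k+1}-h_k)$.

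Dividing the two bounds yields $\Envy(\pi, G) / \Envy(\pi^\ast, G) \le |E|$ for any allocation $\pi$, proving the proposition. The tree case noted in the surrounding text (where $|E| = n-1$) follows immediately as a special instance.

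There is no real obstacle here; the only conceptual step that is not purely bookkeeping is identifying that connectivity is precisely the right hypothesis to guarantee $c_k(\pi^\ast) \ge 1$ for every $k$, and this is what fails for disconnected graphs (explaining why the approximation ratio becomes unbounded in the disconnected setting studied by \citet{canon}).
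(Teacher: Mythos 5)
Your proof is correct and takes exactly the same route as the paper, which justifies the proposition by observing that every smallest subinterval of the valuation interval is covered by at most $|E|$ edges under any allocation, while connectivity forces at least one covering edge (hence a lower bound of $h_n - h_1$ on the optimum). Your write-up merely makes the subinterval decomposition $\Envy(\pi, G) = \sum_{k=1}^{n-1} (h_{k+1}-h_k)\cdot c_k(\pi)$ and the cut argument for $c_k(\pi^\ast) \geq 1$ explicit, which the paper leaves as a one-line remark.
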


In what follows, we first discuss how to improve this bound for bounded-degree trees and then generalize this result to graphs based on a structural parameter called the \emph{cutwidth}. Finally, we showcase how our bounds can be significantly improved for the special class of random (Erd\H{o}s-Renyi) graphs.

\subsection{Trees}\label{sec:boundeddegreetrees}

In this section, we will present a recursive polynomial-time $O(\Delta\log n)$-approximation algorithm for any instance of {\GHA} where the underlying graph is any tree with maximum degree $\Delta$. Thus, for any tree with maximum degree $\Delta = o(n/\log n)$, our algorithm provides a better approximation than \Cref{prop:trivialgeneral}.

We will use the following folklore fact\footnote{ For a proof of this fact, see, for instance, \citet{mlatrees}, who attributes this as a folklore result to \citet{seidvasser}, who claims the fact is well-known, but proves it anyway.} without a proof.

\begin{fact}[Folklore]\label{fact:folklore}
    Every $n$-vertex tree $T$ has a \emph{center of gravity}: i.e., a vertex $v$ such that all connected components of $T - v$ have at most $n/2$ vertices. This vertex $v$ can be found in $O(n)$ time.
\end{fact}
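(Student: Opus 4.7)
My plan is to prove this via a simple DFS-based linear-time algorithm that both locates the centroid and certifies its existence.

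First, I would root $T$ at an arbitrary vertex $r$ and, via a single postorder DFS, compute for every vertex $v$ the size $s(v)$ of the subtree rooted at $v$; this takes $O(n)$ time. For any vertex $v$, the connected components of $T - v$ are exactly the subtrees rooted at each child of $v$ (of sizes $s(c)$) together with the component containing $v$'s parent (of size $n - s(v)$, interpreted as $0$ if $v = r$). Thus $v$ is a centroid if and only if every child $c$ of $v$ satisfies $s(c) \le n/2$ and also $n - s(v) \le n/2$.

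Next, I would run a descent: start at $r$ and, while the current vertex $v$ has some child $c$ with $s(c) > n/2$, move to that child. The descent is well defined because at most one such heavy child can exist (the children's subtrees are vertex-disjoint and their sizes sum to at most $n - 1$, so two children each of size exceeding $n/2$ is impossible). The key invariant is that at every visited vertex $v$, the upward component satisfies $n - s(v) \le n/2$: it holds at the root since $n - s(r) = 0$, and whenever we move from $v$ to its heavy child $c$, the triggering inequality $s(c) > n/2$ gives $n - s(c) < n/2$. Since $s(\cdot)$ strictly decreases along the descent path, the process terminates in at most $n$ steps at some vertex $v^\ast$ no child of which is heavy; combined with the invariant, this shows $v^\ast$ is a centroid.

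Finally, the total runtime is $O(n)$: the initial DFS is linear, and the descent visits a root-to-node path of length at most $n$, doing work proportional to $\deg(v)$ at each vertex $v$ to scan its children, which sums to $O(n)$ in total. The only subtlety is keeping the invariant airtight across each descent step, but this follows immediately from the strict inequality $s(c) > n/2$ that triggers the move; I do not expect any nontrivial obstacle beyond that bookkeeping.
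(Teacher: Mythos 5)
Your proof is correct, and it is worth noting that the paper itself gives no proof at all: it states Fact 3.2 explicitly ``without a proof,'' deferring via a footnote to the literature (the result is attributed through \citet{mlatrees} back to folklore). So your self-contained argument is a genuine addition rather than a variation on the paper's route. It is the standard centroid-descent proof, and all the load-bearing steps check out: the components of $T - v$ are correctly identified as the child subtrees plus the upward component of size $n - s(v)$; at most one heavy child can exist, since the children's subtree sizes sum to $s(v) - 1 \le n - 1$ and two sizes each exceeding $n/2$ would sum past $n$; the invariant $n - s(v) \le n/2$ is established at the root (where it is $0$) and preserved precisely because the move to a child $c$ is triggered by the strict inequality $s(c) > n/2$, giving $n - s(c) < n/2$; and termination follows from the strict decrease of $s(\cdot)$ along the descent. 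The runtime accounting is also sound: the postorder DFS is $O(n)$, and the child scans along the descent path cost at most $\sum_v \deg(v) = 2(n-1) = O(n)$ since the descent never revisits a vertex. One could shave the descent bookkeeping by precomputing each vertex's maximum-size child during the DFS, but that is an optimization, not a correction; as written, your argument fully establishes both existence and the $O(n)$ bound claimed in the fact.
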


\begin{algorithm}[t]
    \caption{Recursive Algorithm $\mathsf{TrickleDown}(T, H)$}
    \hspace{\algorithmicindent} 
    \textbf{Input:} {A {\GHA} instance on a tree $T$ and a set of values $H = \{h_1, \ldots, h_n\}$.} \\
    \textbf{Output:} {An $O(\Delta\log n)$-approximate allocation.}
    \begin{algorithmic}[1]
        \If{$|T| = 1$}
        \State \textbf{Allocate} the only house to the only vertex.
        \Comment{Base case}
        \Else
        \State Find a center of gravity $v$ of $T$.
        \State Let $T - v = T_1 + \ldots + T_k$, with $|T_i| = n_i$. 
        \Comment{$k \leq \Delta$, $n_i \leq n/2$.}
        \State Partition $H$ into the following contiguous sets: 
        \begin{align*}
            H_1 &= \{h_1, \ldots, h_{n_1}\}, \\
            H_2 &= \{h_{n_1+1}, \ldots, h_{n_1 + n_2}\} \\
            &\vdots \\
            H_k &= \{h_{n_1 + \ldots + n_{k-1} + 1}, \ldots, h_{n_1 + \ldots + n_k}\}. 
        \end{align*}
        \State \textbf{Allocate} $h_n$ to vertex $v$.
        \For{$i \in \{1, \ldots, k\}$}
            \State Recursively call $\mathsf{TrickleDown}(T_i, H_i)$.
        \EndFor
        \EndIf
    \State \Return the resulting allocation.
    \end{algorithmic}
    \label{alg:treelogn}
\end{algorithm}


We will use Fact \ref{fact:folklore} in developing a recursive algorithm (Algorithm~\ref{alg:treelogn}) that obtains an $O(\Delta\log n)$-approximation on trees. In each call, the algorithm first finds a center of gravity of the tree and subsequently uses this vertex to identify disjoint subtrees and solve the subproblems recursively on disjoint subintervals of the valuation interval.


\begin{restatable}{theorem}{thmtreelogn}\label{thm:treelogn}
There is an $O(n\log n)$-time algorithm that, given any instance on a tree with maximum degree $\Delta$, returns an allocation whose envy is at most $\Delta\log n$ times the optimal envy.
\end{restatable}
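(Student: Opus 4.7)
The plan is to establish two things about $\mathsf{TrickleDown}$: that it runs in $O(n \log n)$ time, and that its output has envy at most $\Delta \log n$ times the optimum. The running time is immediate from Fact~\ref{fact:folklore}: finding a center of gravity and partitioning the values into the blocks $H_1, \ldots, H_k$ takes $O(n)$ work at the current call, and since each subtree $T_i$ has at most $n/2$ vertices, the recursion has depth $O(\log n)$; the per-level work telescopes to $O(n)$, giving $O(n \log n)$ overall.

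For the approximation ratio, I would prove by induction on $n$ that the algorithm's allocation $\pi$ on an instance $(T, H)$ satisfies $\Envy(\pi, T) \le \Delta L \log_2 n$, where $L := h_n - h_1$ is the length of the valuation interval. The base case $n = 1$ is trivial. For the inductive step, I would split $\Envy(\pi, T)$ into the edges incident to the chosen center $v$ and the edges of the recursive subproblems on $T_1, \ldots, T_k$. Each of the at most $\Delta$ edges $(v, u_i)$, where $u_i$ is the neighbor of $v$ in $T_i$, contributes envy $h_n - \pi(u_i) \le L$, so the top-level contribution is at most $\Delta L$. Writing $L_i := \max H_i - \min H_i$, the fact that the $H_i$ are disjoint contiguous subintervals of $[h_1, h_n]$ gives $\sum_i L_i \le L$; the inductive hypothesis applied to each $(T_i, H_i)$ (using $n_i \le n/2$) then bounds the recursive envy by $\Delta L_i(\log_2 n - 1)$, and summing yields the claimed bound $\Delta L \log_2 n$.

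To convert this into an approximation guarantee, I would invoke the valuation-interval view of Definition~\ref{def:valn_interval}: since $T$ is connected, for any allocation $\pi^*$ the intervals $[\pi^*(u), \pi^*(w)]$ over edges of $T$ must collectively cover $[h_1, h_n]$, so $\Envy(\pi^*, T) \ge L$. Combining yields $\Envy(\pi, T) \le \Delta L \log_2 n \le \Delta \log_2 n \cdot \Envy(\pi^*, T)$, which is the desired bound. I do not foresee a serious obstacle here; the only subtle point is to use base-$2$ logarithms and the identity $\log_2(n/2) = \log_2 n - 1$, together with the observation that the gaps between consecutive $H_i$ blocks (and between $H_k$ and the singleton block $\{h_n\}$) are nonnegative, so that the approximation factor comes out to $\Delta \log n$ exactly rather than only $O(\Delta \log n)$.
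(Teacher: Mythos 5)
Your proposal is correct and takes essentially the same route as the paper: your induction on $n$ --- decomposing the envy into the at most $\Delta$ edges at the center of gravity (each costing at most $L = h_n - h_1$) plus recursive subproblems on disjoint contiguous blocks with $\sum_i L_i \le L$ --- is a formal repackaging of the paper's level-by-level accounting, in which each of the at most $\log n$ levels incurs envy at most $\Delta(h_n - h_1)$. The running-time analysis via Fact~\ref{fact:folklore} and the lower bound $\Envy(\pi^*, T) \ge h_n - h_1$ from connectivity are identical to the paper's.
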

\begin{proof}

\begin{figure*}
    \centering
    \small
    \begin{tikzpicture}
    \xdef\eps{0.05}
    %
    %
        \foreach \x in {1,3,4,5,6,8,9,11,12,15,17,21,24}
            {
                \coordinate (A\x) at ($(\x/2,0)$) {};
                \draw ($(A\x)+(0,3pt)$) -- ($(A\x)-(0,3pt)$);
            }
            \node (10) at (-0.5,0) {Level 3};
            \draw[line width=0.5 mm] (0.5,0) -- (12,0);
            %
            %
            \draw[draw=black] (0.5,0.2) rectangle ++(1.2,0.5);
            \node[circle,draw=black,fill=black,inner sep=0pt,minimum size=3pt] (4) at (1.5,0.45) {};
            \draw[draw=black] (2,0.2) rectangle ++(0.7,0.5);
            \node[circle,draw=black,fill=black,inner sep=0pt,minimum size=3pt] (5) at (2.5,0.45) {};
            \draw[draw=black] (4,0.2) rectangle ++(0.7,0.5);
            \node[circle,draw=black,fill=black,inner sep=0pt,minimum size=3pt] (6) at (4.5,0.45) {};
            \draw[draw=black] (5.5,0.2) rectangle ++(0.7,0.5);
            \node[circle,draw=black,fill=black,inner sep=0pt,minimum size=3pt] (7) at (6,0.45) {};
            \draw[draw=black] (8.25,0.2) rectangle ++(0.5,0.5);
            \node[circle,draw=black,fill=black,inner sep=0pt,minimum size=3pt] (8) at (8.5,0.45) {};
            %
    %
    %
        \foreach \x in {1,3,4,5,6,8,9,11,12,15,17,21,24}
            {
                \coordinate (A\x) at ($(\x/2,2)$) {};
                \draw ($(A\x)+(0,3pt)$) -- ($(A\x)-(0,3pt)$);
            }
            \node (10) at (-0.5,2) {Level 2};
            \draw[line width=0.5 mm] (0.5,2) -- (12,2);
            %
            %
            \draw[draw=black] (0.5,2.2) rectangle ++(2.7,0.5);
            \node[circle,draw=black,fill=black,inner sep=0pt,minimum size=3pt] (1) at (3,2.45) {};
            \node (11) at (2.7,2.45) {$v_1$};
            \draw[draw=black] (4,2.2) rectangle ++(3.7,0.5);
            \node[circle,draw=black,fill=black,inner sep=0pt,minimum size=3pt] (2) at (7.5,2.45) {};
            \node (12) at (7.2,2.45) {$v_2$};
            \draw[draw=black] (8.5,2.2) rectangle ++(2.2,0.5);
            \node[circle,draw=black,fill=black,inner sep=0pt,minimum size=3pt] (3) at (10.5,2.45) {};
            \node (12) at (10.2,2.45) {$v_3$};
            %
    %
    %
        \foreach \x in {1,3,4,5,6,8,9,11,12,15,17,21,24}
            {
                \coordinate (A\x) at ($(\x/2,4)$) {};
                \draw ($(A\x)+(0,3pt)$) -- ($(A\x)-(0,3pt)$);
            }
            \node (10) at (-0.5,4) {Level 1};
            \draw[line width=0.5 mm] (0.5,4) -- (12,4);
            \draw[latex-latex] (0.5,3.6) -- ++ (2.5,0) node[midway,fill=white]{$H_1$};
            \draw[latex-latex] (4,3.6) -- ++ (3.5,0) node[midway,fill=white]{$H_2$};
            \draw[latex-latex] (8.5,3.6) -- ++ (2,0) node[midway,fill=white]{$H_3$};
            \node[circle,draw=black,fill=black,inner sep=0pt,minimum size=4pt] (0) at (12,4.65) {};
            \node (12) at (12.3,4.65) {$v$};
            \draw[draw=black] (0.5,4.4) rectangle ++(11.7,0.5);
    %
    %
        \foreach \x in {1,3,4,5,6,8,9,11,12,15,17,21,24}
            {
                \coordinate (A\x) at ($(\x/2,5.5)$) {};
                \draw ($(A\x)+(0,3pt)$) -- ($(A\x)-(0,3pt)$);
            }
            \node (10) at (-0.5,5.5) {Values $H$};
            \draw[line width=0.5 mm] (0.5,5.5) -- (12,5.5);
            %
    %
    \draw[dashed] (0) to (1);
    \draw[dashed] (0) to (2);
    \draw[dashed] (0) to (3);
    \draw[dashed] (1) to (4);
    \draw[dashed] (1) to (5);
    \draw[dashed] (2) to (6);
    \draw[dashed] (2) to (7);
    \draw[dashed] (3) to (8);
    %
    %
    \def\circledarrow#1#2#3{
    \draw[#1,-stealth] (#2) +(-190:#3) arc(-190:-90:#3);
    }
    \circledarrow{thick}{0}{0.5cm};
    \node [below right=0.1cm and 0cm of 0] {$\leq \Delta$ edges};
    \def\circledarrow#1#2#3{
    \draw[#1,-stealth] (#2) +(-140:#3) arc(-140:-90:#3);
    }
    \node [below right of = 1, node distance=1.2cm] {\small{$\leq \Delta$ edges}};
    \circledarrow{thick}{1}{0.9cm};
    \def\circledarrow#1#2#3{
    \draw[#1,-stealth] (#2) +(-160:#3) arc(-160:-100:#3);
    }
    \node [below right of = 2, node distance=1.2cm] {\small{$\leq \Delta$ edges}};
    \circledarrow{thick}{2}{0.9cm};
    \def\circledarrow#1#2#3{
    \draw[#1,-stealth] (#2) +(-150:#3) arc(-150:-100:#3);
    }
    \node [below right of = 3, node distance=1.2cm] {$\leq \Delta$ edges};
    \circledarrow{thick}{3}{0.9cm};
    \end{tikzpicture}
    \caption{Visualization of Algorithm \ref{alg:treelogn}.}
    \label{fig:treelogn}
\end{figure*}

We will show that Algorithm \ref{alg:treelogn} provides the desired guarantee. The algorithm starts by locating the center of gravity $v$ of the given tree $T$ (which is guaranteed to exist by Fact \ref{fact:folklore}). Then it assigns the largest (i.e., rightmost) value to node $v$, and recursively constructs the assignment for each of the disjoint subtrees in $T - v$.
      
      It is easiest to visualize the allocation resulting from Algorithm \ref{alg:treelogn} as in Figure \ref{fig:treelogn}. All recursive calls in line 9 occur in a single ``level'' of the figure, and all subsequent recursive calls from the subtrees $T_1, \ldots, T_k$ can also be packed into a single level, as the edges in $T_i$ and the edges in $T_j$ do not overlap, for any $i \neq j$. The crucial point is that the envy incurred strictly within disjoint subtrees $T_i$ and $T_j$ cannot involve the same smallest subintervals of the original instance.

      Let us analyze the total envy in the final allocation that is output by Algorithm \ref{alg:treelogn}. There are at most $\Delta$ edges adjacent to $v$, and each of them incur their envy in level $1$ of Figure \ref{fig:treelogn}. Each edge gets an envy of at most $(h_n - h_1)$, and therefore, the total envy on these edges is at most $\Delta\cdot(h_n - h_1)$. The total envy along the edges adjacent to $v_1, \ldots, v_k$ (except the ones accounted for in the levels above) are at most $\Delta\cdot(h_{n_1} - h_1), \ldots, \Delta\cdot(h_{n_1+\ldots+n_k} - h_{n_1+\ldots+n_{k-1}+1})$. Since the subintervals are all disjoint, this level accounts for an envy of at most $\Delta\cdot(h_n - h_1)$ as well. We can continue this argument through the lower levels.

      How many levels are there? Because each vertex picked at each recursive call is a center of gravity of the next subtree, the size of each subtree is at most half the size of the subtree at its parent level. The number of levels, therefore, is at most $\log n$. This gives us a total envy of $\Delta\cdot(h_n - h_1)\cdot\log n$.

      Note that the optimum envy has to be at least $h_n - h_1$ for any connected graph. This gives us an approximation ratio of $\Delta\log n$.

      The bound on the running time also arises from Fact \ref{fact:folklore}, which ensures that line 4 can be done in time $O(n)$. For each subtree $T_i$ in level $1$, we can find a center of gravity in time $O(n_i)$, so the total amount of work done to find the centers of gravity at this level is $O(n_1) + \ldots + O(n_k) = O(n)$. Since this is summed over $\log n$ recursive levels, the total running time is $O(n\log n)$.
\end{proof}

We remark that with a slightly more careful analysis,\footnote{  Technically this involves tweaking the algorithm such that the center of gravity is assigned slightly differently in line 7, and the partition of $H$ is consistent with this.} we can improve the approximation ratio to $(1/2)\cdot(1 + \Delta +\Delta\log n)$. In particular, for any instance on a binary tree, the optimal envy can be $(2\log n)$-approximated in $O(n\log n)$ time. 


\if 0
We remark here that Algorithm \ref{alg:treelogn} and Theorem \ref{thm:treelogn} apply to an even more general problem, the \emph{weighted} version of {\GHA}.
This version in general asks: given an edge-weighted graph $G = (V, E, w)$, where $w : E \to \R_{\geq 0}$ is a weight function on the edges, and a set of $n$ values $H$, how do we find an allocation $\pi : V \to H$ minimizing the total \emph{weighted} envy, i.e., $\sum_{(i, j) \in E}|\pi(i) - \pi(j)|\cdot w(i, j)$? Of course, this problem is at least as hard as {\GHA}. There is some known work for weighted variants of {\MLA} \citep{richarao2005mla}.

Clearly, Algorithm \ref{alg:treelogn} is well-defined on weighted instances $(T, H)$, where $T = (V, E, w)$ is an edge-weighted tree. WLOG assume each weight is at least $1$ (otherwise, scale all the weights appropriately, at the cost of introducing this scaling factor in the subsequent result). Define the maximum degree of the weighted tree $T$ as $\Delta_w(T) := \max_{v \in V}\sum_{u \in \text{Nbd}_T(v)}w(u, v)$, i.e., the maximum weight of edges leaving any vertex in $T$. Of course, if all weights were $1$, this is the same as the maximum degree.
Thus, we have the following corollary for weighted variant of the \GHA{} on trees.

\begin{corollary}\label{cor:weightedtrees}
    There is an $O(n\log n)$ time algorithm that returns an $O(\Delta_w(T)\cdot\log n)$-approximation to the optimal envy on $(T, H)$ for \emph{weighted} trees $T$.
\end{corollary}

\begin{proof}
    Because we ignore the weights, the proof of termination and running time remain unchanged.  To check the approximation, note that the total envy in level $1$ is incurred by the edges incident on the vertex $v$, which has a total maximum weight of $\Delta_w(T)$, and so the total envy accounted for in this level is at most $\Delta_w(T)\cdot(h_n - h_1)$. Because each set of weighted edges in the subsequent levels remain disjoint, and span disjoint subintervals of $H$, the rest of the argument carries through. The optimum envy is at least $h_n - h_1$ under the assumption that each weight is at least $1$, and this gives us the result.
\end{proof}

\fi

\subsection{Cutwidth}\label{sec:cutwidth}

In this section, we generalize the result from Section \ref{sec:boundeddegreetrees} using the structural graph theoretic property of cutwidth (Definition \ref{def:cutwidth}). This will enable us to have a black-box process to obtain envy approximations parameterized by the cutwidth. All of these algorithms will be value-agnostic.

\begin{restatable}{theorem}{thmcwgeneric}\label{thm:cwgeneric}
Let $(G, H)$ be a {\GHA} instance defined on a connected graph $G$. Given a layout $\sigma$ that $\beta$-approximates $\cw(G)$, we can efficiently construct an allocation $\pi$ that is a $(\beta \cdot \cw(G))$-approximation to the optimal envy.
\end{restatable}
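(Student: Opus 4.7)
The plan is to use the provided layout $\sigma=(v_1,\ldots,v_n)$ directly as a recipe for the allocation: define $\pi(v_i)=h_i$, i.e., assign the $i$-th smallest house value to the $i$-th vertex in the layout order. Since sorting the houses and running the given layout approximation takes polynomial time, the construction is clearly efficient.

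To analyze $\Envy(\pi,G)$, I would decompose the envy along each edge into contributions from the smallest subintervals of the valuation interval (Definition~\ref{def:valn_interval}). Specifically, for an edge $(v_j,v_k)\in E$ with $j<k$, the contribution $|\pi(v_j)-\pi(v_k)|=h_k-h_j$ equals the sum of lengths of the smallest subintervals $[h_i,h_{i+1}]$ with $j\le i<k$. Summing over all edges and swapping the order of summation, I would write
\begin{equation*}
\Envy(\pi,G)=\sum_{i=1}^{n-1}c_i\,(h_{i+1}-h_i),
\end{equation*}
where $c_i$ is the number of edges $(v_j,v_k)$ in $E$ with $j\le i<k$. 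The key observation is that $c_i$ is precisely the size of the cut induced by the prefix of $\sigma$ of length $i$, i.e., $c_i=\delta_G(\{v_1,\ldots,v_i\})$. By Definition~\ref{def:cutwidth}, each such $c_i$ is at most $\mathsf{width}(\sigma,G)\le \beta\cdot\cw(G)$. Therefore
\begin{equation*}
\Envy(\pi,G)\;\le\;\beta\cdot\cw(G)\cdot\sum_{i=1}^{n-1}(h_{i+1}-h_i)\;=\;\beta\cdot\cw(G)\cdot(h_n-h_1).
\end{equation*}

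For the lower bound on the optimum, I would use connectivity of $G$ together with a triangle-inequality argument. Let $\pi^\ast$ be any allocation, and let $u,w$ be the vertices with $\pi^\ast(u)=h_1$ and $\pi^\ast(w)=h_n$. Since $G$ is connected, there is a path $u=x_0,x_1,\ldots,x_\ell=w$ in $G$, and
\begin{equation*}
\sum_{t=0}^{\ell-1}|\pi^\ast(x_t)-\pi^\ast(x_{t+1})|\;\ge\;|\pi^\ast(u)-\pi^\ast(w)|\;=\;h_n-h_1,
\end{equation*}
so $\Envy(\pi^\ast,G)\ge h_n-h_1$. Dividing the two bounds gives the claimed approximation ratio of $\beta\cdot\cw(G)$.

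The argument is essentially a packaging of standard facts, so I do not anticipate any substantive obstacle; the only thing to be careful about is the identification of $c_i$ with the prefix-cut size, which is immediate once one writes the contribution of each edge as a sum over smallest subintervals. I would also note in passing that the resulting algorithm is value-agnostic in the sense of Definition~\ref{defn:valueagnostic}, since $\pi$ depends only on the ordinal ranking of the house values (and on $\sigma$, which depends only on $G$).
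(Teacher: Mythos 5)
Your proposal is correct and follows essentially the same route as the paper: assign the sorted house values along the layout order, bound the number of edges crossing each smallest subinterval by $\mathsf{width}(\sigma,G)\le\beta\cdot\cw(G)$, and use connectivity to lower-bound the optimum by $h_n-h_1$. Your write-up is simply a more detailed version of the paper's argument (the prefix-cut identification and the path/triangle-inequality justification of the lower bound are stated rather than spelled out in the paper), and your closing remark on value-agnosticism matches what the paper observes at the start of its cutwidth section.
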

\begin{proof}
We construct the allocation $\pi$ as follows: for each agent $i \in V$, if $\sigma(i) = j$, we set $\pi(i) = h_j$; that is, we give $i$ the $j$-th least-valued house. 
Since $G$ is connected, the total envy of any allocation is at least $(h_n - h_1)$. In the allocation $\pi$, the number of edges of $G$ spanning any smallest subinterval of the valuation interval is at most $\mathsf{width}(\sigma, G)$ by definition, and therefore the envy from $\pi$ is at most $\sum_{i = 1}^{n-1}\mathsf{width}(\sigma, G)\cdot(h_{i+1} - h_i) = \mathsf{width}(\sigma, G)\cdot(h_n - h_1)$. Hence, if $\sigma$ is an $\beta$-approximation for the cutwidth, then $\pi$ is an $\beta \cdot \cw(G)$-approximation to the optimal envy of $G$, as claimed.
\end{proof}

The next corollary follows from Theorem \ref{thm:cwgeneric} and Equation \ref{eq:twpwcw} when combined with existing bounds on the cutwidth, treewidth, or pathwidth \citep{korach1993cutwidth, leighton1999cutwidthapprox, djidjev2006cutwidth} of certain graph families along with the best known approximation results of these quantities \citep{yannakakis1985treecutwidth,leighton1999cutwidthapprox}.

\begin{corollary}\label{cor:cutwidth-upperbounds}
There exist polynomial-time value-agnostic approximation algorithms for the following classes:
\begin{enumerate}[(i)]
    \item An $O(\Delta \log n)$-approximation algorithm on trees,\label{cor:cwtrees}
    \item An $O(\sqrt{n \Delta} \log^{1.5} n)$-approximation algorithm on planar graphs,\label{cor:cwplanar}
    \item An $ O(\tw(G) \cdot \Delta \log^{2.5} n)$-approximation algorithm on general connected graphs.\label{cor:cwgeneral}
\end{enumerate}
%
%
\end{corollary}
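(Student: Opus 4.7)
The plan is to prove each of the three cases by directly instantiating Theorem \ref{thm:cwgeneric} with the best available $\beta$-approximation for the cutwidth of the relevant graph class, and then bounding $\cw(G)$ itself using the structural inequalities from (\ref{eq:twpwcw}) together with known results about the cutwidth of trees and planar graphs. In each case, the approximation ratio we obtain for the optimal envy is simply $\beta \cdot \cw(G)$.

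For case (\ref{cor:cwtrees}), I would invoke Yannakakis' $O(n \log n)$-time algorithm \citep{yannakakis1985treecutwidth}, which produces an \emph{optimal} layout for any tree $T$, so $\beta = 1$. Since $\tw(T) = 1$ for any tree, the chain of inequalities in (\ref{eq:twpwcw}) immediately gives $\cw(T) \leq O(\Delta \cdot \tw(T) \cdot \log n) = O(\Delta \log n)$. Plugging these into Theorem \ref{thm:cwgeneric} yields an $O(\Delta \log n)$-approximation for trees.

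For case (\ref{cor:cwplanar}), I would combine Leighton and Rao's $O(\log^{1.5} n)$-approximation algorithm \citep{leighton1999cutwidthapprox}, which runs in polynomial time and simultaneously returns a layout realizing this bound, with the result of \citet{djidjev2006cutwidth} that an $n$-vertex planar graph of maximum degree $\Delta$ satisfies $\cw(G) = O(\sqrt{\Delta n})$. Theorem \ref{thm:cwgeneric} then gives a polynomial-time allocation achieving envy at most $O(\log^{1.5} n) \cdot O(\sqrt{\Delta n}) = O(\sqrt{n \Delta}\, \log^{1.5} n)$ times the optimum. For case (\ref{cor:cwgeneral}), I would again use the $O(\log^{1.5} n)$-approximation of \citet{leighton1999cutwidthapprox}, but now bound $\cw(G)$ through the full chain (\ref{eq:twpwcw}), namely $\cw(G) \leq O(\Delta \cdot \tw(G) \cdot \log n)$. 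Combining these gives the claimed $O(\tw(G) \cdot \Delta \log^{2.5} n)$ approximation on arbitrary connected graphs.

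There is essentially no conceptual obstacle here, as Theorem \ref{thm:cwgeneric} does all the real work; the proof is a bookkeeping exercise of pulling together the right black-box results. The only subtlety worth being careful about is that value-agnosticism is preserved: the layout produced by the cutwidth approximation depends only on $G$, and the bijection constructed in the proof of Theorem \ref{thm:cwgeneric} assigns $h_{\sigma(i)}$ to vertex $i$, which depends only on the ordinal ranking of the values. Thus each of the three algorithms inherits value-agnosticism automatically, matching the claim of the corollary.
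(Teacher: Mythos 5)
Your proposal is correct and takes essentially the same route as the paper, which obtains the corollary by instantiating Theorem \ref{thm:cwgeneric} with the exact tree-cutwidth layout of \citet{yannakakis1985treecutwidth} ($\beta = 1$) and the $O(\log^{1.5} n)$-approximate layout of \citet{leighton1999cutwidthapprox}, combined with the cutwidth bounds from Equation \ref{eq:twpwcw} and the planar bound $\cw(G) = O(\sqrt{n\Delta})$ of \citet{djidjev2006cutwidth}. Your closing observation that the layout depends only on $G$ and the assignment only on the ordinal ranking of values matches the paper's remark that all of these algorithms are value-agnostic.
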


Note that for each class of graphs listed above $\Delta$ can be $O(n)$ in the worst case, and for general connected graphs, $\tw(G)$ can be $O(n)$ in the worst case as well. So, in the worst case, the first and third results are asymptotically worse than the trivial bound given by \Cref{prop:trivialgeneral}. However, for many natural subclasses of these graphs, such as bounded-degree graphs and bounded-degree trees, \Cref{cor:cutwidth-upperbounds} yields strictly better approximation guarantees.

\subsection{Random Graphs}\label{subsec:random-graphs}
We next consider random graphs, specifically Erd\H{o}s-Renyi graphs, where $G\sim {\mathcal G}_{n,1/2}$ denotes a random graph on $n$ nodes where every edge is present with probability $1/2$ and all edges are independent. We show that \GHA{} on such graphs can be approximated up to a factor $1+o(1)$ regardless of the valuation interval. The central observation is that for any subset of nodes $S$, $\delta_G(S)$ is tightly concentrated around $|S|(n-|S|)/2$. 
\begin{restatable}{lemma}{randomlemma}\label{lem:random}
    For $G\sim {\mathcal G}_{n,1/2}$, \[\Pr \left [ \forall S\subseteq V, (1-\epsilon) \leq \frac{\delta_G(S)}{|S|(n-|S|)/2} \leq (1+\epsilon) \right ] \geq 1- \exp(-\Omega(\epsilon^2 n)) \ , \]
for any $\epsilon \geq \sqrt{24 \ln (n)/n}$.

\end{restatable}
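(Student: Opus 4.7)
The plan is to prove the bound via a multiplicative Chernoff inequality for each fixed subset $S$, followed by a careful union bound over the $2^n$ subsets of $V$, grouped by size. Fix $S\subseteq V$ with $|S|=k$. Since each potential edge between $S$ and $V\setminus S$ is present independently with probability $1/2$, the cut size $\delta_G(S)$ is distributed as $\mathrm{Bin}(k(n-k),1/2)$ with mean $\mu_k = k(n-k)/2$. The multiplicative Chernoff bound then yields
\[
\Pr\bigl[\,|\delta_G(S)-\mu_k|\geq \epsilon\mu_k\,\bigr]\;\leq\;2\exp\!\left(-\frac{\epsilon^2 k(n-k)}{6}\right).
\]

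Next, I would union bound by size. Using the symmetry $\binom{n}{k}=\binom{n}{n-k}$, the inequality $\binom{n}{k}\leq (en/k)^k$, and $n-k\geq n/2$ for $k\leq n/2$, the overall failure probability is bounded by
\[
4\sum_{k=1}^{\lfloor n/2\rfloor} \exp\!\left(k\ln(en/k)\;-\;\frac{\epsilon^2 k n}{12}\right).
\]
The key analytic step is to show that under the hypothesis $\epsilon^2\geq 24\ln(n)/n$, the exponent decays at least as fast as $-k\epsilon^2 n/24$. This reduces to verifying $\epsilon^2 n/24\geq \ln(en/k)$ for every $k\geq 1$, i.e.\ $\epsilon^2 n/24\geq 1+\ln n$, which is implied by the hypothesis for sufficiently large $n$ (with the additive $1$ absorbed into the constant). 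Once established, the remaining sum is a geometric series
\[
\sum_{k=1}^{\infty}\exp\!\left(-\frac{k\epsilon^2 n}{24}\right)\;=\;O\!\left(\exp(-\epsilon^2 n/24)\right),
\]
producing the advertised $\exp(-\Omega(\epsilon^2 n))$ bound.

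The main obstacle, and the only place where the specific constant $24$ in the hypothesis actually matters, is calibrating the slack between the $\ln n$ contribution from the binomial coefficient and the Chernoff exponent. I would verify that the function $k\mapsto \epsilon^2 k n/12 - k\ln(en/k)$ is increasing in $k$ on $[1,n/2]$ (its derivative is $\epsilon^2 n/12 + 1 - \ln(en/k)$, which is nonnegative once $\epsilon^2\geq 12\ln(n)/n$), so that $k=1$ is the binding case of the union bound; the constant $24$ in the hypothesis is precisely what is needed there to dominate $\ln\binom{n}{1}=\ln n$ while still leaving a term linear in $\epsilon^2 n$ in the exponent. Beyond this bookkeeping, all the remaining estimates are routine.
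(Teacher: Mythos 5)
Your proposal is correct and follows essentially the same route as the paper's proof: a per-subset multiplicative Chernoff bound with mean $k(n-k)/2$, a union bound over subsets grouped by size $k \le n/2$ using $k(n-k) \ge kn/2$, and a geometric series over $k$ yielding $\exp(-\Omega(\epsilon^2 n))$. The only cosmetic difference is that the paper uses the cruder bound $\binom{n}{k} \le e^{k\ln n}$, which makes the threshold $\epsilon \ge \sqrt{24\ln(n)/n}$ work out with no slack, whereas your $\binom{n}{k} \le (en/k)^k$ forces you to absorb the additive $+1$ at $k=1$ into the $\Omega(\cdot)$ — a step you handle correctly via the monotonicity check.
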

\begin{proof}
The expected size of the cut $\delta_G(S)$ is $E[\delta_G(S)]=|S|(n-|S|)/2$. By applying the Chernoff bound, we obtain:
\[\Pr[|\delta_G(S)-E[\delta_G(S)]|\geq \epsilon E[\delta_G(S)]] \leq  2 \exp( -\epsilon^2 E[\delta_G(S)]/3) \ . \]
Hence, by the union bound the probability there exists a set $S$ of size $k \le n/2$ such that $|\delta_G(S)-E[\delta_G(S)]|\geq \epsilon E[\delta_G(S)]$ is at most
\begin{align*}    
2\exp( - \epsilon^2 k(n-k)/6) \binom{n}{k}
& \leq 2 \exp( - \epsilon^2 kn/12 + k \ln n) \\
& \leq 2 \exp( - \epsilon^2 kn/24),
\end{align*}
  assuming $\epsilon \geq \sqrt{24 \ln (n)/n}$. 
The lemma then follows by taking the union bound over $k$ and noting that \[\sum_{k=1}^{n/2} 2 \exp( - \epsilon^2 kn/24)=\exp(-\Omega(\epsilon^2 n)) \ .\qedhere \]
\end{proof}

Lemma \ref{lem:random}, with $\epsilon=\sqrt{24 \ln(n)/n},$ implies that with high probability, the cost of the optimum solution is at least \[\sum_{i=1}^{n-1} (h_{i+1}-h_{i}) \delta_G(i)\geq \sum_{i=1}^{n-1} (h_{i+1}-h_{i})(1-\epsilon) i(n-i)/2, \]
whereas the cost of an arbitrary allocation is at most  \[\sum_{i=1}^{n-1} (h_{i+1}-h_{i})(1+\epsilon) i(n-i)/2  . \]
Therefore an arbitrary allocation is a $(1+\epsilon)/(1-\epsilon)= 1+ O(\sqrt{\ln (n)/n})$-approximation.

\begin{theorem}\label{thm:random}
For  $G \sim \cal G_{n, 1/2}$, any allocation is a $1+ O(\sqrt{\ln (n)/n})$ approximation with probability at least $1 - 1/\poly(n)$.
\end{theorem}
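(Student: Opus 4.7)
The plan is to use Lemma \ref{lem:random} as essentially a black box, with $\epsilon = \sqrt{24 \ln n / n}$, and then observe that both the optimum envy and the envy of any allocation can be decomposed into a sum over the smallest subintervals of the valuation interval, with the coefficients given by cut sizes of $G$.

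First I would fix an arbitrary allocation $\pi$ and, for each $1 \le i \le n-1$, let $S_i(\pi) \subseteq V$ be the set of $i$ agents receiving the $i$ smallest house values $\{h_1, \ldots, h_i\}$. Using Definition \ref{def:valn_interval}, the envy along an edge equals the total length of the smallest subintervals $[h_i, h_{i+1}]$ that the edge's interval covers; an edge covers $[h_i, h_{i+1}]$ precisely when it crosses the cut $(S_i(\pi), V \setminus S_i(\pi))$. Swapping the order of summation gives the exact identity
\begin{equation*}
    \Envy(\pi, G) \;=\; \sum_{i=1}^{n-1} (h_{i+1} - h_i)\,\delta_G(S_i(\pi)).
\end{equation*}
Since $|S_i(\pi)| = i$, this is at most $\sum_i (h_{i+1}-h_i)\,\max_{|S|=i} \delta_G(S)$, and the same identity applied to the optimum $\pi^\ast$ gives a lower bound of $\sum_i (h_{i+1}-h_i)\,\delta_G(i)$, where $\delta_G(i) = \min_{|S|=i} \delta_G(S)$.

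Next I would invoke Lemma \ref{lem:random} with $\epsilon := \sqrt{24 \ln n / n}$, so with probability $1 - \exp(-\Omega(\ln n)) = 1 - 1/\poly(n)$, every cut $\delta_G(S)$ sandwiches between $(1-\epsilon)\,|S|(n-|S|)/2$ and $(1+\epsilon)\,|S|(n-|S|)/2$ simultaneously. Plugging this into the two bounds above yields
\begin{equation*}
    \Envy(\pi, G) \;\le\; (1+\epsilon)\sum_{i=1}^{n-1}(h_{i+1}-h_i)\,\frac{i(n-i)}{2},
    \quad
    \Envy(\pi^\ast, G) \;\ge\; (1-\epsilon)\sum_{i=1}^{n-1}(h_{i+1}-h_i)\,\frac{i(n-i)}{2}.
\end{equation*}
Taking the ratio gives $\Envy(\pi,G) / \Envy(\pi^\ast,G) \le (1+\epsilon)/(1-\epsilon) = 1 + O(\epsilon) = 1 + O(\sqrt{\ln n / n})$, which is exactly the claimed approximation ratio, and the bound holds for \emph{every} allocation $\pi$ simultaneously on the same high-probability event.

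There isn't really a hard step here: the combinatorial identity writing envy as a weighted sum of cut sizes is the one place where the argument could go astray if one confused ``edges covering a subinterval'' with something else, but it is immediate from Definition \ref{def:valn_interval}. The quantitative heavy lifting is entirely done by Lemma \ref{lem:random}; the only subtle point worth flagging is that the same high-probability event in Lemma \ref{lem:random} is used to upper bound $\delta_G(S_i(\pi))$ for the particular (possibly adversarial) sets $S_i(\pi)$ and simultaneously to lower bound $\delta_G(i)$ for all $i$, which is why a \emph{uniform} concentration statement over all subsets $S$ is needed rather than a per-set one.
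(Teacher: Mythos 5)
Your proposal is correct and matches the paper's proof essentially verbatim: the paper likewise decomposes the envy of any allocation (and of the optimum) as $\sum_{i=1}^{n-1}(h_{i+1}-h_i)$ times the relevant cut size, invokes Lemma \ref{lem:random} with $\epsilon = \sqrt{24\ln(n)/n}$ to sandwich every cut between $(1\pm\epsilon)\,i(n-i)/2$, and concludes via the ratio $(1+\epsilon)/(1-\epsilon) = 1 + O(\sqrt{\ln(n)/n})$. Your only additions --- spelling out the cut-decomposition identity explicitly and flagging that the uniform-over-all-subsets concentration is what handles the adversarial sets $S_i(\pi)$ --- are points the paper leaves implicit, and they are accurate.
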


\section{Lower Bounds}\label{sec:lower}

Every algorithm presented in Section \ref{sec:upper} is value-agnostic. It might seem reasonable to assume, therefore, that there are more powerful approximation schemes that exploit the numerical values in $H$ in some way. Indeed, our results on random graphs suggest that, for most graphs, we can do significantly better. Remarkably, we show in this section that this is \emph{not} the case, and our value-agnostic algorithms are strong enough to give us nearly optimal approximation guarantees. Specifically, we show inapproximability results matching our upper bounds (up to $\polylog$ factors) for every class of graphs considered. Our lower bounds will use reductions from the {\UTP} problem.

\begin{definition}[\TP]
Given a multiset of $3m$ naturals $A = \{a_1, \dots, a_{3m}\} \subseteq \mathbb{N}_{> 0}$ and a natural $T \in \mathbb{N}_{> 0}$ such that $\sum_{j \in [3m]} a_j = mT$, {\TP} asks whether $A$ can be partitioned into $m$ triplets $(S_1, S_2, \dots, S_m)$ such that the sum of each triplet is equal to $T$.
\end{definition}

The \TP problem is NP-complete even when all the inputs are given in unary and each item in $A$ is strictly between $\ffrac{T}{4}$ and $\ffrac{T}{2}$ \citep{garey1979computers}. We refer to this variant as {\UTP}. Note that {\UTP} is just a reformulation of \textsc{Bin Packing}: there are $3m$ integers that sum to $mT$, and we wish to fit these integers into $m$ bins each of capacity $T$. The condition of three integers in each bin is redundant, as it is implied by the constraint that each integer is strictly between $T/4$ and $T/2$.

Some of our results and proofs in this section (specifically Theorems \ref{thm:trees-approx-lower-bound} and \ref{thm:bounded-degree-planar-approx-lower-bound}) are very similar to results about the inapproximability of the {\em balanced graph partition} problem \citep{feldmann2012gridpartition, feldmann2015treepartition}. The rest of our proofs use novel gadgets and techniques.

\subsection{Trees and Planar Graphs}\label{sec:treelowerbound}
Recall that we presented two approximation guarantees for trees, $O(n)$ (Proposition \ref{prop:trivialgeneral}) and $O(\Delta \log n)$ (Corollary \ref{cor:cutwidth-upperbounds}). Both of these results are $\tilde{O}(n)$ in the worst case. 

\begin{restatable}{theorem}{thmtreesapproxlowerbound}\label{thm:trees-approx-lower-bound}
For any constant $\varepsilon > 0$, there is no efficient $O(n^{1-\varepsilon})$ approximation algorithm for {\GHA} on depth-$2$ trees unless P = NP.
\end{restatable}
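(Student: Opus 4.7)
The plan is to reduce from \UTP{}, following a gadget-based approach reminiscent of the inapproximability reductions for balanced graph partition on trees that the authors cite. Given a UTP instance with $3m$ items $a_1,\ldots,a_{3m}\in(T/4,T/2)$ summing to $mT$, I will construct, in time polynomial in $m+T$, a depth-$2$ tree $G$ on $n$ vertices together with a set $H$ of house values, such that the optimal envy differs by a factor of at least $n^{1-\varepsilon}$ between YES and NO instances of UTP.

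The tree $G$ will be a ``star of stars'': a root $r$ connected to $m$ bin-centers $c_1,\ldots,c_m$, where each $c_i$ has a carefully chosen number of leaf children (roughly $T$ in the basic construction, with additional ``dummy'' leaves added as padding to inflate $n$). The set $H$ will consist of two kinds of values: a collection of nearly identical ``anchor'' values, intended for the root and the centers and contributing negligible envy along root-to-center edges; and a larger collection of ``item'' values, where for each item $j$ there is a designated group of houses whose values lie in a narrow window of the valuation interval, with different items' windows separated by a large parameter $B$. The effect is that co-locating the houses of a single item in one subtree is cheap, while splitting them across two or more subtrees incurs $\Omega(B)$ envy on the edges crossing between the mis-placed leaves and their centers.

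The central claim to prove is that the total envy of any allocation is, up to lower-order terms, determined by how the item-value houses are distributed across the $m$ subtrees hanging from $r$. In the YES case, assigning anchors to $r$ and the $c_i$'s and placing the houses of each item in the subtree corresponding to its triple yields total envy at most $\alpha = \poly(m,T)$. In the NO case, no allocation can respect all item groupings within the size-$T$ subtrees, so some item must be split across at least two subtrees, contributing $\Omega(B)$ envy. Choosing $B$ to be a sufficiently large polynomial in $n$ and padding the tree with anchor-valued dummy leaves to enlarge $n$ relative to $m$ and $T$ then pushes the ratio $\beta/\alpha$ above $n^{1-\varepsilon}$ for any desired constant $\varepsilon>0$.

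The main technical obstacle will be ruling out ``cheating'' allocations in the NO case, i.e., ones that deviate from the intended assignment by placing anchor houses on leaves or item houses on the root or centers. The constraint $a_j\in(T/4,T/2)$ is essential here: since no single item's houses can fill an entire subtree and no two items' houses fit either, any valid redistribution of item houses across subtrees must leave at least one item split, and the associated $\Omega(B)$ envy cannot be amortized away by rearranging the comparatively small number of anchor houses. Formalizing this case analysis---accounting for every way the algorithm could try to use the flexibility of moving anchors and items off their ``natural'' positions---will be the delicate part of the argument.
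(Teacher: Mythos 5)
Your construction has a genuine flaw at its core: the claimed $\Omega(B)$ penalty for splitting an item across subtrees does not exist in a depth-$2$ star-of-stars. The total envy is $\sum_{\ell}|\pi(c(\ell)) - \pi(\ell)| + \sum_i |\pi(r) - \pi(c_i)|$, where $c(\ell)$ is the center above leaf $\ell$; each leaf's contribution depends only on its own value and its center's value, so leaves carrying the same item's values in \emph{different} subtrees never interact. In your intended allocations the centers all carry nearly identical anchor values $\approx A$, so the total envy is $\approx \sum_{\ell}|A - \pi(\ell)|$, which is determined by the multiset of item values alone and is \emph{invariant} under permuting item houses among the leaves. Consequently the NO-instance optimum is upper-bounded by the same quantity as the YES-instance optimum, and no gap opens up. Nor can you rescue this by placing item values on the centers: since each bin-subtree must contain three items whose windows are pairwise $\geq B$ apart, every allocation (YES or NO) already pays $\Omega(BT)$ inside each subtree, so the putative $\Omega(B)$ split penalty is swamped and the YES/NO ratio stays $O(\mathrm{poly}(m))$, far from $n^{1-\varepsilon}$. (A minor additional slip: with $a_j \in (T/4, T/2)$, two items' houses \emph{do} fit inside a size-$T$ subtree --- they merely fail to fill it --- so that step of your case analysis is also misstated.)

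The paper's proof uses the dual encoding, and the asymmetry is exactly what makes it work: there is one subtree per \emph{item} (a star $x_i$ with $Ca_i - 1$ leaves), while the house values encode \emph{bins} ($CT$ copies of the value $j$ for each $j \in [m]$, plus one value $0$ for the root). In a YES instance each subtree can be made monochromatic, so envy arises only on the $3m$ root edges and is at most $3m^2$; in a NO instance, a counting argument shows some value class overshoots $CT$ by at least $C$, forcing $\geq C$ leaf--center edges with envy $\geq 1$. Setting $C = \lceil(12\gamma m^3 T + 1)^{1/\varepsilon}\rceil$ (polynomial, since \UTP is given in unary) yields the $n^{1-\varepsilon}$ gap. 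If you want to salvage your write-up, the repair is to swap the roles of items and bins so that the intended allocation gives every subtree a single value --- zero intra-subtree envy --- which is precisely the paper's construction; your orientation, with heterogeneous bins-as-subtrees, cannot achieve this.
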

\begin{figure}[ht]
    \centering
    \begin{tikzpicture}[scale=0.6,
    mycirc/.style={circle,fill=black, draw = black,minimum size=0.10cm,inner sep = 1.5pt},
    mycirc2/.style={circle,fill=white,minimum size=0.75cm,inner sep = 1.5pt},
    mycirc3/.style={circle,fill=black,draw=black,minimum size=0.10cm,inner sep = 1.5pt},
    level 1/.style={sibling distance=30mm},
    level 2/.style={sibling distance=10mm},
    BC/.style = {decorate,  
                     decoration={calligraphic brace, amplitude = 4mm, mirror},
                     very thick, pen colour={black}
                    }
    ]
    \node[mycirc, label=above left:{r}] {} 
        child {node[mycirc, label=left:{$x_1$}] {}
            child[solid] {node[mycirc3](x1left){}} 
            child[solid] {node[mycirc2]{$\dots$}} 
            child[solid] {node[mycirc3](x1right){}}}
        child {node[mycirc, label=left:{$x_2$}] {}
            child[solid] {node[mycirc3](x2left){}} 
            child[solid] {node[mycirc2]{$\dots$}} 
            child[solid] {node[mycirc3](x2right){}}}
        child {node[mycirc2] {$\dots$}}
        child {node[mycirc, label=right:{$x_{3m}$}] {}
            child[solid] {node[mycirc3](xmleft){}} 
            child[solid] {node[mycirc2]{$\dots$}} 
            child[solid] {node[mycirc3](xmright){}}};
    
    \draw[BC]   ([yshift = -3mm] x1left.south west) coordinate (aux) -- 
                    node[midway,below=5mm]{$Ca_1 - 1$}
            (x1right.south east|- aux);
    \draw[BC]   ([yshift = -3mm] x2left.south west) coordinate (aux) -- 
                    node[midway,below=5mm]{$Ca_2 - 1$}
            (x2right.south east|- aux);
    \draw[BC]   ([yshift = -3mm] xmleft.south west) coordinate (aux) -- 
                    node[midway,below=5mm]{$Ca_{3m} - 1$}
            (xmright.south east|- aux);
    \end{tikzpicture}
    \caption{Mapping a \UTP instance to a tree.}
    \label{fig:trees-reduction-tree}
\end{figure}



\begin{proof}
We give a reduction from \UTP. For some constant $\varepsilon > 0$, assume there is an efficient $O(n^{1- \varepsilon})$ approximation algorithm $\mathsf{ALG}_{\cal G}$ where $\cal G$ corresponds to the class of depth-$2$ trees. In other words, there is a constant $\gamma$ such that for all instances $(G, H)$ with $G \in \mathcal{G}$, $\mathsf{ALG}_{\cal G}$ outputs an allocation with total envy within a multiplicative factor of $\gamma n^{1- \varepsilon}$ to the optimal envy.

Given an instance of \UTP, we construct an instance $(G, H)$ of {\GHA} as follows: The graph $G$ is a rooted depth-$2$ tree where the root $r$ of the tree has $3m$ children $\{x_1, \dots, x_{3m}\}$. Each of these nodes $x_i$ has $C a_i - 1$ children (see Figure \ref{fig:trees-reduction-tree}). Here, $C$ is a positive integer whose exact value we shall determine later.

The total number of nodes in $G$ is $1 + \sum_{i \in [3m]} Ca_i = 1 + CmT$, and so we must specify $1 + CmT$ house values in $H$. We define $H$ with $CT$ values of $j$ for $j \in [m]$, together with a single value of $0$. Note that this construction can be done in polynomial time as long as $C$ is polynomially large, since the input to the $3$-partition instance is given in unary. 

We show that, for an appropriate choice of $C$, the minimum total envy output by $\mathsf{ALG}_{\cal G}$ for the instance $(G, H)$ is at least $\left \lceil (12\gamma m^3 T + 1)^{\ffrac{1}{\varepsilon}} \right \rceil$ if and only if there exists {\em no} valid partition for the original \textsc{3-Partition} instance, i.e., the original instance was a NO instance.

$(\Rightarrow)$ Assume there is a valid $3$-partition $(S_1, \ldots, S_m)$ for the original instance. We denote this $3$-partition using a mapping $\mu : A \to \{S_1, \ldots, S_m\}$ that maps each number in the multiset $A$ to one of the triplets $S_j$. We construct an allocation for the graph $G$ as follows: for any item $i$, if $\mu(i) = S_j$, we allocate houses with value $j$ to $x_i$ and all of its children. We finally allocate the house with value $0$ to the root. Note that this is a valid allocation: for any $j \in [m]$, we allocate exactly $\sum_{i \in [3m]: \mu(i) = S_j} Ca_i = CT$ houses of value $j$. 

We can easily upper bound the envy of this allocation; this upper bound also serves as an upper bound for the minimum total envy for the instance $(G, H)$. There is no envy between any $x_i$ and any of its children. So the only edges with potential envy are the ones incident on the root. There are $3m$ such edges, each incurring envy at most $m$. This gives us an upper bound of $3m^2$ on the total envy. This implies, when there is a valid $3$-partition, the approximation algorithm $\mathsf{ALG}_{\cal G}$ will output an allocation with envy at most $3m^2\gamma(1 + CmT)^{1-\varepsilon}$.

$(\Leftarrow)$ Assume there is no valid $3$-partition in the original instance. We will show that any allocation has a total envy of at least $C$. We do this by examining the houses allocated to the depth-1 nodes $\{x_1, \dots, x_{3m}\}$. If any $x_i$ is allocated a value of $0$ in some allocation $\pi$, then the allocation $\pi$ has a total envy of at least $C$ since any $x_i$ has at least $C -1$ children and $1$ parent receiving a value of at least $1$ each. 

So now assume no $x_i$ is allocated a value of $0$ in $\pi$. For notational convenience, assume WLOG that $x_1, \dots, x_{\ell_1}$ are allocated a value $1$, $x_{\ell_1 +1}, \dots, x_{\ell_1 + \ell_2}$ are allocated a value $2$ and so on. If for all $j \in [m]$, $\sum_{h \in [\ell_j]} a_{\ell_0 + \ell_1 + \dots + \ell_{j-1} + h} = T$ (with $\ell_0 = 0$), then we violate our assumption that there is no valid $3$-partition in the original instance. Therefore there exists one $j \in [m]$ such that $\sum_{h \in [\ell_j]}a_{\ell_0 + \ell_1 + \dots + \ell_{j-1} +h} > T$. Assume again for notational convenience that $j = 1$. Since all values are integers, we can restate the inequality above as $\sum_{h \in [\ell_1]}a_{h} \ge T + 1$. This implies $\sum_{h \in [\ell_1]} Ca_{h} \ge CT + C$. 

Coming back to the allocation $\pi$, we have that $\sum_{h \in [\ell_1]} Ca_{h} \ge CT + C$ implies that there are at least $C$ nodes out of all the children of $\{x_{1}, x_2, \dots, x_{\ell_1}\}$ which are not allocated a value of $1$. The envy that each of these $C$ nodes will have towards their parents is at least $1$. This implies that the total envy of allocation $\pi$ is at least $C$. 

We set $C = \left \lceil (12\gamma m^3 T + 1)^{\ffrac{1}{\varepsilon}} \right \rceil$ to complete the reduction. When there is no valid $3$-partition, the total minimum envy (and therefore, the envy output by $\mathsf{ALG}_{\cal G}$) is at least $C =\left \lceil (12\gamma m^3 T + 1)^{\ffrac{1}{\varepsilon}} \right \rceil$. However, when there is a valid $3$-partition, the envy output by $\mathsf{ALG}_{\cal G}$ is strictly upper bounded by:
\begin{align*}
    3m^2\gamma(CmT+1)^{1-\varepsilon} \le 6m^3 \gamma TC^{1-\varepsilon} \le 6m^3 \gamma T \left ( \left \lceil (12\gamma m^3 T + 1)^{\ffrac{1}{\varepsilon}} \right \rceil \right )^{1 - \varepsilon} < \left \lceil (12\gamma m^3 T + 1)^{\ffrac{1}{\varepsilon}} \right \rceil.
\end{align*}
This completes the proof.
\end{proof}

\subsection{General and Bounded-Degree Graphs}
In this section, we generalize the arguments from Section \ref{sec:treelowerbound} to other classes of graphs. 
The main technique is similar to that of Theorem \ref{thm:trees-approx-lower-bound}, so we just present ideas for the graph construction in each of these proofs, with the details in Appendix \ref{apdx:lower}.

We first match the $O(n^2)$ upper bound for connected graphs (\Cref{prop:trivialgeneral} and \Cref{cor:cutwidth-upperbounds}).

\begin{restatable}{theorem}{generalgraphs}\label{thm:general-approx-lower-bound}
For any constant $\varepsilon > 0$, there is no efficient $O(n^{2-\varepsilon})$ approximation algorithm for {\GHA} on connected graphs unless P = NP.
\end{restatable}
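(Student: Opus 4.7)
I will reduce from Unary $3$-Partition (UTP), following the template of Theorem~\ref{thm:trees-approx-lower-bound} but replacing each ``star gadget'' around $x_i$ by a \emph{clique}; this amplifies the envy gap quadratically and matches the $O(n^2)$ upper bound from Proposition~\ref{prop:trivialgeneral}. Given a UTP instance with items $a_1,\ldots,a_{3m} \in (T/4,T/2)$ summing to $mT$, I build a connected graph $G$ as follows: a root $r$ is adjacent to $3m$ hub vertices $x_1,\ldots,x_{3m}$, and each hub $x_i$ together with $Ca_i-1$ additional private vertices forms a clique $C_i$ of size $Ca_i$. The house values are $H = \{0\} \cup \{j \in [m]:\ CT \text{ copies of each}\}$, so $|V(G)| = |H| = 1 + CmT$. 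The integer $C$ will be chosen polynomially large at the end.

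For the YES direction, given a valid $3$-partition $(S_1,\ldots,S_m)$ of $A$, I assign value $j$ to every vertex of every clique $C_i$ with $a_i \in S_j$, and value $0$ to $r$. Since $\sum_{a_i \in S_j} Ca_i = CT$, this is a legal allocation; each clique-internal edge has envy zero and the $3m$ root--hub spokes contribute total envy $\sum_j 3j = O(m^2)$, so the optimum is at most $3m(m+1)/2$.

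The technical heart of the argument is lower-bounding envy in the NO direction; the main obstacle is that a clever allocation might keep the number of ``misplaced'' vertices very small, so I must show that a failure of the $3$-partition still forces quadratic envy inside cliques. For any allocation $\pi$, let $b_i$ be the mode color of $\pi$ on $C_i$, and let $k_i := |C_i| - \#\{v \in C_i : \pi(v) = b_i\}$ count the minority vertices in $C_i$. Counting value-$j$ vertices two ways gives $\bigl|CT - C \sum_{i:\,b_i=j} a_i\bigr| \leq \sum_i k_i + 1$ for every $j \in [m]$, since each clique contributes $|C_i|-k_i$ $j$-colored vertices when $b_i=j$ and at most $k_i$ otherwise, with a possible $+1$ from $r$. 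If $\sum_i k_i + 1 < C$, integrality forces $\sum_{i:\,b_i=j} a_i = T$ for every $j$, which together with $\sum_i a_i = mT$ produces a valid $3$-partition, contradicting the NO assumption. Hence $\sum_i k_i \geq C - 1$. Since $|C_i| > CT/4$ and there are only $m+1$ possible colors, pigeonhole forces the mode class of $C_i$ to have size $\geq |C_i|/(m+1) \geq CT/(4(m+1))$, so the envy inside $C_i$ is at least $k_i \cdot CT/(4(m+1))$; summing over $i$ yields total envy $\Omega(C^2 T / m)$.

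The NO/YES envy ratio is therefore $\Omega(C^2 T / m^3)$. Setting $C := \lceil (\gamma\, m^{5-\varepsilon} T^{1-\varepsilon})^{1/\varepsilon} \rceil$---polynomial in $m,T$, and hence polynomial in the UTP input (since UTP is unary-encoded)---pushes this ratio above $\gamma\, n^{2-\varepsilon}$ for $n = 1+CmT$. Any polynomial-time $O(n^{2-\varepsilon})$-approximation for {\GHA} on connected graphs would therefore decide UTP, forcing P~$=$~NP.
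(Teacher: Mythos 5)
Your proof is correct and follows essentially the same reduction as the paper's: clique gadgets of size $Ca_i$ attached to a root, the identical house-value multiset, and the same $O(m^2)$ bound in the YES direction. The only difference is in the NO direction, where your unified mode-counting inequality $\bigl|CT - C\sum_{i:\,b_i=j} a_i\bigr| \leq \sum_i k_i + 1$ replaces the paper's two-case majority-value analysis and in fact yields a slightly stronger lower bound of $\Omega(C^2 T/m)$ rather than $C^2/4$; the only minor adjustment needed is to pad your choice of $C$ by a suitable constant factor so that the final ratio inequality holds strictly once all hidden constants (the $4(m+1)$ pigeonhole factor, the $3m(m+1)/2$ YES bound, and the $(2CmT)^{2-\varepsilon}$ estimate) are made explicit.
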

\begin{proof}[Proof Sketch]
We replace the $Ca_i$-sized stars in Figure \ref{fig:trees-reduction-tree} with $Ca_i$-sized cliques. The rest of the proof is similar to Theorem \ref{thm:trees-approx-lower-bound}.
\end{proof}



So far in our two lower bounds (Theorems \ref{thm:trees-approx-lower-bound} and \ref{thm:general-approx-lower-bound}), we were able to use simple counting techniques, because counting edges with non-zero envy in stars and cliques is straightforward. Our next results will require much more careful analysis.

We will start with bounded-degree planar graphs. Our reduction uses grid graphs instead of stars and cliques, and so we will need a technical lemma to help us with estimating the number of edges with nonzero envy.

\begin{restatable}{lemma}{gridlemma}\label{lem:grid-graph-property}
    Let $G = Grid(r, c)$ be a grid graph with $r$ rows and $c$ columns such that $r \le c$. Let $A \subseteq V$ be any set of nodes in this graph such that $|A| \le \ffrac{rc}{2}$. Then, $\delta_G(A) \geq \min\{\sqrt{|A|}, \ffrac{r}{2}\}$.
\end{restatable}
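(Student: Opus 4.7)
The plan is to reduce $A$ to a canonical ``staircase'' configuration via two compression operations, and then finish with a short case split on the resulting shape.

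\textbf{Step 1 (Compression).} Let $p_i := |A \cap \mathrm{row}_i|$ so that $\sum_i p_i = |A|$, and let $S_i := A \cap \mathrm{row}_i$ viewed as a subset of $[c]$. I will argue one may assume, without increasing $\delta_G(A)$, both of the following: (a) each row is left-justified, i.e., $S_i = \{1,\ldots,p_i\}$; and (b) the rows are sorted so that $p_1 \geq p_2 \geq \cdots \geq p_r \geq 0$. For (a), left-justifying a single row collapses its horizontal contribution from the number of transitions down to at most one, and shrinks every adjacent symmetric difference $|S_i \triangle S_{i+1}|$ to $|p_i - p_{i+1}|$. For (b), the horizontal contributions are only permuted, while the vertical sum $\sum_i |p_i - p_{i+1}|$ is minimized (by telescoping / triangle inequality) when $(p_i)$ is monotone, in which case it equals $p_1 - p_r$.

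\textbf{Step 2 (Boundary in canonical form).} In the canonical configuration, the horizontal edge contribution is $H := |\{i : 0 < p_i < c\}|$ and the vertical contribution is $V := p_1 - p_r$, so $\delta_G(A) = H + V$.

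\textbf{Step 3 (Case split).} Let $k_2 := |\{i : p_i = c\}|$ denote the number of completely filled rows. Since each such row contributes $c$ to $|A|$, the hypothesis $|A| \leq rc/2$ forces $k_2 \leq r/2$. I split on whether $k_2 \geq 1$ and whether $p_r > 0$:
\begin{itemize}
    \item \emph{$k_2 = 0$ and $p_r = 0$.} Write $k$ for the number of non-empty rows; then $H = k$, $V = p_1$, and $|A| \leq k\cdot p_1$. By AM-GM, $\delta_G(A) = k + p_1 \geq 2\sqrt{|A|} \geq \sqrt{|A|}$.
    \item \emph{$k_2 = 0$ and $p_r > 0$.} Every row is non-empty and non-full, so $H = r$.
    \item \emph{$k_2 \geq 1$ and $p_r = 0$.} Here $p_1 = c$, so $V = c \geq r$.
    \item \emph{$k_2 \geq 1$ and $p_r > 0$.} Every row is non-empty, so $H = r - k_2 \geq r/2$.
\end{itemize}
Using $c \geq r$ and $k_2 \leq r/2$, each case yields $\delta_G(A) \geq \min\{\sqrt{|A|}, r/2\}$, completing the proof.

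The main obstacle is the verification in Step~1 that both compression operations weakly decrease $\delta_G$; this is purely combinatorial but requires careful bookkeeping, particularly about edges touching the outer boundary of the grid. Everything else is a straightforward case analysis driven by the constraint $|A| \leq rc/2$.
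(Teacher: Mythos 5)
Your proof is correct, but it takes a genuinely different route from the paper's. The paper argues by a direct counting of mixed lines: if $A$ meets every row, then $|A| \le rc/2$ forces at least $r/2$ rows to also meet $V \setminus A$, and every such mixed row contains a horizontal cut edge (symmetrically for columns); otherwise some row and some column lie entirely in $V \setminus A$, and since the rows meeting $A$ times the columns meeting $A$ must cover $|A|$, one direction has at least $\sqrt{|A|}$ lines meeting $A$, each of which crosses the all-empty perpendicular line and hence contains a cut edge. That is a five-line argument with no normalization step. Your compression route is the standard edge-isoperimetry technique: it costs you the Step~1 bookkeeping, but the two inequalities you state --- per-row transition count at least the justified count, and $|S_i \triangle S_{i+1}| \ge \bigl| |S_i| - |S_{i+1}| \bigr| = |p_i - p_{i+1}|$ --- sum directly to $\delta_G(A) \ge \delta_G(A_{\mathrm{staircase}})$, so the reduction is sound. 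One caution on wording: read the comparison as one-shot (original configuration versus fully justified-and-sorted one), not as iterated single-row moves; justifying one row at a time is \emph{not} monotone (e.g.\ $S_i = S_{i+1} = \{2\}$: justifying row $i$ alone raises the symmetric difference from $0$ to $2$). Also, your worry about edges touching the outer boundary of the grid is vacuous, since $\delta_G$ counts only internal grid edges and there are no perimeter terms. What your approach buys: a strictly stronger conclusion --- your Case~1 gives $k + p_1 \ge 2\sqrt{k p_1} \ge 2\sqrt{|A|}$, so you have in fact proved $\delta_G(A) \ge \min\{2\sqrt{|A|}, r/2\}$ --- together with an explicit description of the near-extremal sets (staircases), neither of which the paper's counting argument provides. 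What the paper's approach buys: brevity, and no rearrangement monotonicity to verify.
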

\begin{proof}
If $A$ consists of at least one node from each row, then since $|A| \le \ffrac{rc}{2}$, there will be at least $r/2$ rows with a node in $V \setminus A$. Therefore, there will be at least $\ffrac{r}{2}$ edges in the cut. Similarly, if $A$ consists of at least one node from each column, there will be $\ffrac{c}{2} \ge \ffrac{r}{2}$ edges in the cut. 

Otherwise, there is some row and some column containing only nodes in $V \setminus A$.
Note that there must either be at least $\sqrt{|A|}$ rows with a node in $A$ or at least $\sqrt{|A|}$ columns with a node in $A$. Assume WLOG there are at least $\sqrt{|A|}$ rows with a node in $A$. Each of these rows intersects the column that only has nodes from $V\setminus A$, and so each of the $\sqrt{|A|}$ rows must contain an edge between $A$ and $V\setminus A$.
\end{proof}

Armed with Lemma \ref{lem:grid-graph-property}, we can now present our lower bound on bounded-degree planar graphs.

\begin{restatable}{theorem}{bdp}\label{thm:bounded-degree-planar-approx-lower-bound}
For any constant $\varepsilon > 0$, no efficient $O(n^{0.5-\varepsilon})$ approximation algorithm exists for {\GHA} on bounded-degree planar graphs unless P = NP.
\end{restatable}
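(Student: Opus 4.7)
The plan is to reduce from \UTP, mirroring the template of Theorem \ref{thm:trees-approx-lower-bound} but replacing star gadgets with grid gadgets so that the construction is simultaneously planar and bounded-degree. Given a \UTP instance with items $a_1, \ldots, a_{3m}$ and target $T$, I would build $G$ from $3m$ nearly-square grid gadgets $G_1, \ldots, G_{3m}$, where $G_i$ has exactly $Ca_i$ vertices and dimensions $r_i \times c_i$ with $r_i, c_i = \Theta(\sqrt{Ca_i})$; consecutive gadgets are joined by a single edge between corner vertices, so $G$ is planar with maximum degree at most $5$. The house-value multiset $H$ consists of $CT$ copies of each value $j \in [m]$, where $C$ is a positive integer polynomially bounded in $m$ and $T$ to be chosen at the end.

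In the YES case, given a valid $3$-partition $(S_1, \ldots, S_m)$, assign value $j$ uniformly to every vertex of every gadget $G_i$ with $a_i \in S_j$; this is a valid bijection onto $H$ because $\sum_{i : a_i \in S_j} Ca_i = CT$. No envy is incurred inside any gadget, and each of the $3m - 1$ connecting edges contributes envy at most $m - 1$, so the total envy is $O(m^2)$.

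In the NO case, fix any allocation $\pi$, and let $q_i$ be a most-frequent value in $G_i$, so at least $Ca_i/m$ vertices of $G_i$ receive value $q_i$. The $q_i$'s induce an assignment of items into $m$ value classes; if every class summed to $T$, this would be a valid $3$-partition (the constraint $a_i \in (T/4, T/2)$ forces each class to contain exactly three items), contradicting our assumption. Hence some class sum differs from $T$, and by pigeonhole some value $j^*$ satisfies $\sum_{i : q_i = j^*} a_i \geq T + 1$, so $\sum_{i : q_i = j^*} Ca_i \geq CT + C$. Since only $CT$ vertices globally can receive value $j^*$, the gadgets with $q_i = j^*$ collectively contain at least $C$ vertices whose value is not $j^*$; letting $k_i$ count such ``wrong'' vertices in $G_i$, we have $\sum_{i : q_i = j^*} k_i \geq C$. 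Within each such $G_i$ the set of value-$j^*$ vertices has size $Ca_i - k_i \geq Ca_i/m \geq k_i/(m-1)$, so the smaller side of the cut between value-$j^*$ and non-$j^*$ vertices has size at least $k_i/(m-1)$; by Lemma \ref{lem:grid-graph-property} this cut has at least $\min(\sqrt{k_i/(m-1)}, r_i/2)$ edges, each contributing envy at least $1$. For $m \geq 4$ the $r_i/2$ cap never binds (since $r_i/2 = \Theta(\sqrt{Ca_i}) \geq \sqrt{Ca_i/m} \geq \sqrt{k_i/(m-1)}$), and the subadditivity $\sqrt{a} + \sqrt{b} \geq \sqrt{a+b}$ yields total envy $\geq \sum_i \sqrt{k_i/(m-1)} \geq \sqrt{C/(m-1)} = \Omega(\sqrt{C/m})$.

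With $n = \Theta(CmT)$, the resulting gap between the NO and YES cases is $\Omega(\sqrt{C}/m^{2.5})$, which exceeds $\gamma \cdot n^{0.5 - \varepsilon}$ once $C$ is chosen as a sufficiently large polynomial in $m$ and $T$ of degree $\Theta(1/\varepsilon)$; since \UTP is given in unary, this reduction runs in polynomial time, contradicting the hypothesized approximation algorithm. The main technical obstacle is the NO-case analysis: Lemma \ref{lem:grid-graph-property} yields useful cuts only when the minority side is bounded below, so the crucial step is the majority-vote choice of $q_i$, which guarantees $|V_{j^*}^{(i)}| \geq k_i/(m-1)$ and lets the per-gadget cut bound be summed via concavity into a global $\Omega(\sqrt{C/m})$ envy lower bound.
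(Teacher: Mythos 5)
Your proposal follows essentially the same route as the paper's proof: a reduction from \UTP{} in which the star gadgets of Theorem~\ref{thm:trees-approx-lower-bound} are replaced by grid gadgets, with Lemma~\ref{lem:grid-graph-property} converting ``many wrongly-valued vertices'' into ``many unit-envy cut edges,'' and the subadditivity $\sqrt{a}+\sqrt{b}\ge\sqrt{a+b}$ aggregating the per-gadget bounds. Within that template you make three genuine simplifications worth noting. First, you use a \emph{plurality} value $q_i$ (guaranteeing only $Ca_i/m$ agreeing vertices) instead of the paper's \emph{majority} value; this elegantly collapses the paper's two-case analysis (the ``no majority value'' case disappears entirely, along with its factor-of-$2$ double-counting correction), at the price of an extra $\poly(m)$ loss in the NO-case bound ($\Omega(\sqrt{C/m})$ versus the paper's $C/4$ on its larger gadgets), which is immaterial since $C$ is a free polynomial of degree $\Theta(1/\varepsilon)$. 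Second, you link gadgets directly in a path rather than hanging them off a binary tree $B_r$, which removes the need for the $|B_r|$ extra houses of value $0$. Both variations are sound.

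The one concrete flaw is the gadget construction itself: a grid with \emph{exactly} $Ca_i$ vertices and both dimensions $\Theta(\sqrt{Ca_i})$ requires a balanced factorization $r_ic_i = Ca_i$, which need not exist (if $Ca_i$ is prime the only grid is $1\times Ca_i$, i.e., a path, for which Lemma~\ref{lem:grid-graph-property} gives nothing useful). Relatedly, your claim $r_i/2 = \Theta(\sqrt{Ca_i}) \ge \sqrt{Ca_i/m}$ is in tension with the lemma's convention $r_i \le c_i$, which forces $r_i \le \sqrt{Ca_i}$, so the inequality holds only for $m$ above a constant depending on your hidden $\Theta$-constants. Both issues are repairable without changing the argument: do what the paper does and take grids with $\sqrt{C}\times\sqrt{C}\,a_i$ (equivalently $C\times Ca_i$ with $C^2a_i$ vertices and $C^2T$ houses per value), abandoning near-squareness. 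The cap $r_i/2$ may then bind, but this is harmless --- if $\min\bigl(\sqrt{k_i/(m-1)},\, r_i/2\bigr) = r_i/2$ for even one gadget, that single gadget already contributes $r_i/2 = \Omega(\sqrt{C}) \ge \sqrt{C/(m-1)}$ envy, matching the target bound; otherwise your subadditivity step applies verbatim. With that one-line patch, your proof is correct and, in the plurality step, slightly cleaner than the paper's.
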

\begin{proof}[Proof Sketch]
We replace the stars of size $Ca_i$ in Figure \ref{fig:trees-reduction-tree} with grid graphs containing $C$ rows and $Ca_i$ columns. The rest of the proof flows similarly to Theorem \ref{thm:trees-approx-lower-bound}. Lemma \ref{lem:grid-graph-property} helps in estimating the envy blow-up if there is no $3$-partition.
\end{proof}




Note that Theorem \ref{thm:bounded-degree-planar-approx-lower-bound} matches the $O(\sqrt{n})$ upper bound from \Cref{cor:cutwidth-upperbounds}.

Our next lower bound applies to arbitrary bounded-degree graphs and matches the $O(n)$ upper bound from Proposition \ref{prop:trivialgeneral} and Corollary \ref{cor:cutwidth-upperbounds}. In this reduction, we use the recent polynomial-time algorithm \citep{cohen2016ramanujan} to compute bipartite Ramanujan multigraphs for any even number $m$ of vertices, and any degree $d \ge 3$. At a high level, we replace the star gadgets from the proof of Theorem \ref{thm:trees-approx-lower-bound} with these Ramanujan graphs and use the expansion properties of Ramanujan graphs to prove a lemma similar to (and stronger than) \Cref{lem:grid-graph-property}. 




\begin{restatable}{theorem}{boundeddeg}\label{thm:bounded-degree-approx-lower-bound}
For any constant $\varepsilon > 0$, there is no efficient $O(n^{1-\varepsilon})$ approximation algorithm for {\GHA} on bounded-degree graphs unless P = NP.
\end{restatable}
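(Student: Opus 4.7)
My plan is to reduce {\UTP} to {\GHA} on bounded-degree graphs by replacing the star gadgets of Theorem~\ref{thm:trees-approx-lower-bound} with $d$-regular Ramanujan graphs (for a fixed $d \geq 3$, built via \citep{cohen2016ramanujan}). Given a {\UTP} instance $(A = \{a_1,\ldots,a_{3m}\}, T)$ and a parameter $C$ to be chosen, I would construct a block $R_i$ on $Ca_i$ vertices for each $i \in [3m]$ (padding parity by $1$ if necessary), then connect the blocks into one component via $3m-1$ bridge edges in a path-like backbone (a single edge from $R_i$ to $R_{i+1}$). The resulting graph has maximum degree $d+1$, a total of $n = \Theta(CmT)$ vertices (polynomial in the input since {\UTP} is unary), and the house-value multiset is $CT$ copies of each integer $j \in [m]$.

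The YES direction mirrors Theorem~\ref{thm:trees-approx-lower-bound}: a valid 3-partition $(S_1,\ldots,S_m)$ induces a block-monochromatic allocation, so all envy is concentrated on bridges and $\mathrm{OPT} \leq 3m \cdot m = 3m^2$. For the NO direction, the core new ingredient is a Ramanujan analogue of \Cref{lem:grid-graph-property}: for any $d$-regular Ramanujan graph $R$ on $N$ vertices and any $A \subseteq V(R)$ with $|A| \leq N/2$, $\delta_R(A) \geq c\,|A|$ for a constant $c = c(d) > 0$, which follows from Cheeger's inequality combined with the spectral bound $\lambda_2 \leq 2\sqrt{d-1}$. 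For any allocation $\pi$, let $m_i$ denote the plurality value in $R_i$, let $M_i$ be the number of vertices in $R_i$ receiving $m_i$, and set $D_i := Ca_i - M_i$. Cutting the plurality color class from its complement inside $R_i$ yields $\Omega(D_i)$ boundary edges each of envy at least $1$, so the total envy is $\Omega\bigl(\sum_i D_i\bigr)$.

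The heart of the argument is then to show that $\sum_i D_i \geq C$ whenever no valid 3-partition exists. The constraint $a_i \in (T/4, T/2)$ forces every subset of $A$ summing to $T$ to have exactly three elements, so the plurality map $i \mapsto m_i$ defines a valid 3-partition iff $\sum_{i: m_i = j} a_i = T$ for every $j$. For each $j$, counting placements of the $CT$ houses of value $j$ gives $CT = \sum_{i: m_i = j} M_i + X_j$ with $X_j \leq \sum_{i: m_i \neq j} D_i$; substituting $M_i = Ca_i - D_i$ and taking absolute values yields
\[
C \cdot \Bigl|\, T - \sum_{i: m_i = j} a_i \,\Bigr| \;\leq\; \sum_i D_i.
\]
If no valid 3-partition exists, then either some $j$ is absent from the range of $m_i$ (trivially forcing $\sum_i D_i \geq CT$), or some $j$ has $|T - \sum_{i: m_i = j} a_i| \geq 1$ by integrality; both cases give $\sum_i D_i \geq C$. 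Choosing $C = \lceil (\alpha m^3 T)^{1/\varepsilon} \rceil$ for a suitable constant $\alpha$ depending on $\gamma$ and $c$ then separates the YES envy ($O(m^2)$) from the NO envy ($\Omega(C)$) by more than a factor of $\gamma n^{1-\varepsilon}$, contradicting the hypothetical approximation. The main technical obstacle I foresee is packaging the expansion lemma with a concrete, absolute constant $c$ (routine from the standard Ramanujan/Cheeger literature but needing care to make $c$ genuinely independent of $N$) and smoothing over the bipartite/even-vertex-count requirement of \citep{cohen2016ramanujan} via a negligible single-vertex padding of each $R_i$.
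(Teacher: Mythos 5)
Your proposal follows essentially the same route as the paper's proof: a reduction from \UTP{} in which the star gadgets of Theorem \ref{thm:trees-approx-lower-bound} are replaced by constant-degree Ramanujan expanders of size $Ca_i$ built via \citet{cohen2016ramanujan}, an expansion lemma derived from Cheeger's inequality, and a counting argument showing that any allocation incurs $\Omega(C)$ envy when no valid $3$-partition exists. Your two departures are cosmetic: you connect the blocks by a path of bridge edges rather than hanging them off a small binary tree with $0$-valued houses (both give bounded degree and the same YES-case bound of $O(m^2)$), and your explicit inequality $C\,\bigl|T - \sum_{i: m_i = j} a_i\bigr| \le \sum_i D_i$ is a slightly cleaner packaging of the paper's majority-value bookkeeping.

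Two soft spots, both patchable. First, \citet{cohen2016ramanujan} produces bipartite Ramanujan \emph{multigraphs}; you flag the bipartiteness and even-size caveats but not the parallel edges, which must be removed to obtain a valid (simple) \GHA{} instance. The paper handles this (Lemma \ref{lem:ramanujan-graph-property}) by deleting repeated edges and observing that, since any vertex pair carries at most $3$ parallel edges, Cheeger's bound on the multigraph transfers to the simple graph with only a constant-factor loss; your expansion lemma as stated applies to the Ramanujan (multi)graph itself, so this extra step is needed. Second, your claim that cutting the plurality class yields $\Omega(D_i)$ envious edges is only immediate when the plurality is an actual majority: if $M_i \le Ca_i/2$, expansion applied to the plurality class gives only $c\,M_i$, which can be as small as $c\,D_i/m$. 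The standard fix, which the paper uses, is a two-case split: when no value class in $R_i$ exceeds half the block, summing $\delta(Q_j) \ge c\,|Q_j|$ over all classes and dividing by two gives $\Omega(Ca_i) \ge \Omega(C)$ envy outright, and the $D_i$-accounting is needed only in the all-majority case, where your bound is correct. With those two repairs your argument is sound and matches the paper's.
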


\begin{figure*}[ht]
    \centering
    \begin{tikzpicture}[,
    mycirc/.style={circle,fill=lightgray, draw = black,minimum size=0.75cm,inner sep = 3pt},
    mycirc2/.style={circle,fill=white,minimum size=0.75cm,inner sep = 3pt},
    mycirc3/.style={rectangle,fill=white,draw=black,minimum size=2cm,inner sep = 3pt},
    mycirc4/.style={isosceles triangle, shape border rotate = 90, fill=white,draw=black,minimum size=2cm,inner sep = 3pt},
    level 1/.style={sibling distance=25mm},
    level 2/.style={sibling distance=10mm},
    level distance = 3cm,
    BC/.style = {decorate,  
                     decoration={calligraphic brace, amplitude = 4mm, mirror},
                     very thick, pen colour={black}
                    }
    ]
    \node[mycirc4] {$B_r$} 
        child {node[mycirc] {$R'(3, Ca_1)$}}
        child {node[mycirc] {$R'(3, Ca_2)$}}
        child {node[mycirc2] {$\dots$}}
        child {node[mycirc] {$R'(3, Ca_{3m})$}};
    \end{tikzpicture}
    \caption{Mapping a \UTP instance to a bounded-degree graph.}
    \label{fig:bounded-degree-reduction-graph}
\end{figure*}
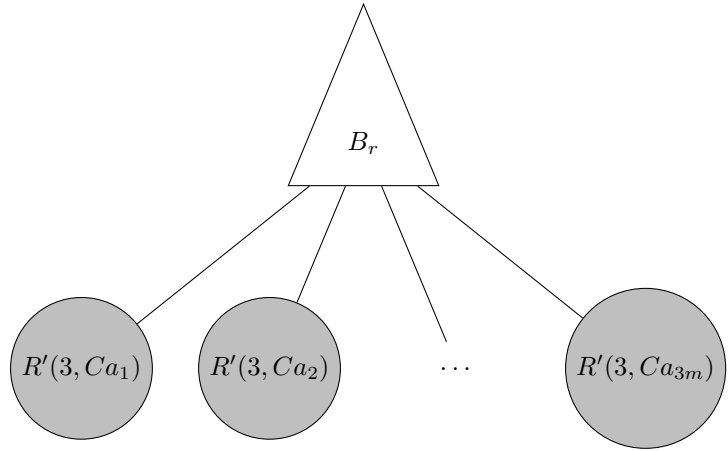

\begin{proof}[Proof Sketch]
    Figure \ref{fig:bounded-degree-reduction-graph} shows the graph we use for this reduction. The graph $G$ is constructed as follows. For each $a_i$ in the given {\UTP} instance, construct in polynomial time a $3$-regular Ramanujan bipartite multi-graph of size $Ca_i$ (using a result of \cite{cohen2016ramanujan}). Remove any repeated edges to convert them into simple graphs. The resulting graphs can be shown to have sufficient expansion properties, using Cheeger's inequality (Lemma \ref{lem:cheegers-inequality-apdx}). We can now attach these graphs to the leaves of a sufficiently small binary tree. This is a bounded-degree graph.

    If there is a $3$-partition, we can exhibit a small-envy allocation in the same way as in the other proofs in this section.
    Conversely, if there is no valid $3$-partition, we can show that some of these gadgets are going to be allocated multiple different values. We then use the expansion properties of these gadgets to show that the number of high envy edges within each of these gadgets is $\Omega(C)$.
\end{proof}




\subsection{Bounded-Degree Trees}
Our final lower bound shows that {\GHA} is NP-hard even when the underlying graph is a bounded degree tree. We still use \UTP in our reduction but this proof is significantly different from the previous ones. Our reduction will use a gadget we call the {\em flower}.\footnote{To the best of our knowledge, our specific flower graph is novel but it is possible (likely even) that the term ``flower'' has appeared before in the graph theory literature.}

\begin{definition}
The flower $F(n, k)$ is a rooted tree with $n$ nodes and maximum degree $k+1$, defined recursively as follows: for any $k \geq 1$, $F(1, k)$ is simply an isolated vertex which is the root node. For $n > 1$, $F(n, k)$ consists of a root node connected to the root nodes of $d$ other flowers $F(n_1, k), \dots, F(n_d, k)$ such that 
\begin{enumerate}[(a)]
    \item $\sum_{i = 1}^d n_i = n-1$, 
    \item if $n-1 \ge k$, then $d = k$ if $n$ and $k$ have different parities, and $d = k-1$ otherwise,
    \item each $n_i$ is odd, 
    \item for any $i, j \in [d]$, $|n_i - n_j| \le 2$.
\end{enumerate}
To ensure consistency with floral terminology, we refer to the root node of the flower $F(n, k)$ as its {\em pistil} and the (recursively smaller) flowers $F(n_1, k), \dots, F(n_d, k)$ as its {\em petals}.
\end{definition}

Before we use flowers, we show that they are well-defined and efficiently constructible. 

\begin{restatable}{lemma}{lemflowercomputation}\label{lem:flower-computation}
For any $n \ge 1$ and $k \ge 3$, the flower $F(n, k)$ exists and can be constructed in $\text{poly}(n, k)$ time. 
\end{restatable}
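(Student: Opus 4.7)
The plan is to proceed by strong induction on $n$, with the recursive construction inherited directly from the definition of $F(n, k)$. The base case $n = 1$ gives the flower as a single pistil with no petals, which trivially exists. For the inductive step, assuming $F(m, k)$ exists and is constructible in $\poly(m, k)$ time for all $m < n$, it suffices to exhibit a valid tuple of petal sizes $(n_1, \ldots, n_d)$ satisfying conditions (a)--(d), and then attach the recursively constructed flowers $F(n_i, k)$ to a new pistil node.

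The first main step is to show that valid petal sizes always exist. When $n - 1 < k$, one can simply take $d = n - 1$ and $n_i = 1$ for each $i$, which satisfies all four constraints automatically. When $n - 1 \ge k$, the parity rule in condition (b) is chosen precisely so that $d$ and $n - 1$ have the same parity, which is exactly the necessary and sufficient condition for $n - 1$ to be expressible as a sum of $d$ odd positive integers. Setting $q = \lfloor (n-1)/d \rfloor$, the sizes $n_i$ can be chosen to take at most two distinct odd values adjacent to $q$, with multiplicities determined by the remainder $n - 1 - qd$. A short case split on the parity of $q$ (using nearest odd integers $q$ and $q+2$, or $q-1$ and $q+1$) then confirms that such values always exist, sum to $n - 1$, and have pairwise differences at most $2$.

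The second step is a running-time analysis. Whenever $n - 1 \ge k \ge 3$, condition (b) forces $d \ge k - 1 \ge 2$, so each $n_i \le n/2 + O(1)$; hence the recursion depth is $O(\log n)$. When $n - 1 < k$, the recursion terminates immediately at singleton leaves. Since the recursive subcalls partition a node set of total size $n$, and each call performs only $\poly(k)$ local work to compute $d$, $q$, and the distribution of the remainder, the total construction time is $\poly(n, k)$.

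The main obstacle I anticipate is the careful verification in the first step: although the existence of valid petal sizes is morally just a parity check, several small cases (small $n$, $q$ even versus odd, $n$ and $k$ of matching versus differing parity, and the boundary $n - 1 = k$) must be handled to confirm that the odd summands can always be made to differ by at most $2$ while summing to $n - 1$. Once this combinatorial feasibility is established, the inductive construction and the polynomial-time bound follow routinely from the inductive hypothesis and the linearity of the total work over the subcalls.
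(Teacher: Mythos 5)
Your proof is correct, and the existence half is essentially the paper's argument: the same trivial case $n-1<k$ with all $n_i=1$, and the same greedy construction otherwise (the paper takes each $n_i$ to be the largest odd number at most $\lfloor (n-1)/d\rfloor$ and then increments some petals by $2$ to absorb the even remainder, which is exactly your two-adjacent-odd-values description; your explicit observation that condition (b) forces $d \equiv n-1 \pmod 2$, together with $n-1 \ge k \ge d$, is the clean way to state why the remainder is even, where the paper only spells out the $n$ even case). Where you genuinely diverge is the running-time analysis, and your version is both simpler and stronger. The paper sets up the recurrence $T(n,k) \le k\cdot T(2n/k, k) + O(k)$ and solves it, obtaining the rather loose bound $T(n,k) = O\bigl(n^{\frac{\ln 3}{\ln 3 - \ln 2}}\bigr) \approx O(n^{2.71})$; this over-counts because it treats the recursion as if every child had the maximal size $2n/k$. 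Your charging argument---each recursive call creates exactly one pistil, the subproblem sizes $n_1,\dots,n_d$ partition $n-1$, so there are exactly $n$ calls, each doing $O(k)$ local work (plus arithmetic on $O(\log n)$-bit numbers)---gives $\tilde{O}(nk)$ total time directly. Your additional remark about $O(\log n)$ recursion depth is true but not needed once you have the partition argument. Both analyses establish the stated $\poly(n,k)$ bound, so there is no gap; yours just buys a near-linear bound where the paper settles for a crude polynomial.
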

\begin{proof}
To prove existence, we only need to show that the numbers $n_1, \dots, n_d$ are guaranteed to exist. If $n -1 < k$, this is trivial: each $n_i = 1$.

Otherwise, if $n$ is even, then $d$ is required to be an odd value via condition (b). What we need to do is write $n-1$ as the sum of $d$ odd numbers. We set each $n_i$ as the greatest odd number which is at most $\lfloor \frac{n-1}{d} \rfloor$. Once we do this, $n-1 - \sum_{i = 1}^d n_i$ is guaranteed to be a non-negative even number which is strictly less than $2d$. So, we increment some of the $n_i$'s by $2$ till property (a) is satisfied. This construction satisfies the other two properties as well. 

The above argument not only shows that the numbers $n_1, \dots, n_d$ are guaranteed to exist but also presents a way to compute them in polynomial time. Since each $n_i$ is strictly less than $n$, we can use this subroutine to compute $F(n, k)$ recursively in polynomial time. More formally, let $T(n, k)$ be the complexity of constructing the flower $F(n, k)$. We have

\begin{align*}
    T(n, k) &= \left (\sum_{i = 1}^d T(n_i, k) \right ) + O(d) \\
    &\le k\cdot T((n-1)/(k-1) + 2, k) + O(k) \\
    &\le k\cdot T((n/k + 2, k) + O(k).
\end{align*}
If $n/k < 2$, then $T(n/k + 2, k)$ can be trivially constructed in $O(1)$ time since $n/k + 2 -1 \le 3 \le k$ and the graph is a constant-sized star. If $n/k \ge 2$, we can simplify the above expression as follows:
\begin{align*}
    T(n, k) \le k\cdot T(2n/k, k) + O(k),
\end{align*}
which simplifies to $T(n, k) = O(n^\frac{\ln 3}{\ln{3} - \ln2})$.
\end{proof}

The reason we build flowers is because they satisfy the two following useful properties.
\begin{restatable}{lemma}{lemflowerproperties}\label{lem:flower-properties}
Let $F(n, k)$ be a flower on the set of vertices $N$, and suppose $n \ge 10k$, and $n$ and $k$ have different parities. Then, $F(n, k)$ satisfies the following properties:
\begin{enumerate}[(i)]
    \item For any $A \subseteq N$ such that $|A|$ is even and $A$ does not contain the pistil, $\delta(A) \ge 2$.
    \item Each petal of $F(n, k)$ has size in the interval $\left [  \frac{4n}{5k}, \frac{6n}{5k} \right]$.
\end{enumerate}
\end{restatable}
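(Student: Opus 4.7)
The plan is to prove the two properties in sequence, relying on the tree structure of the flower for (i) and on direct arithmetic on the recursive definition for (ii).

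For property (i), I would first use the standard tree identity $\delta(A) = c_1(A) + c_2(N \setminus A) - 1$, where $c_1$ and $c_2$ denote the numbers of connected components of the induced subgraphs on $A$ and $N \setminus A$ respectively. This follows from the edge count $n - 1 = (|A| - c_1) + (|N \setminus A| - c_2) + \delta(A)$ in a tree. Since the pistil $r$ lies in $N \setminus A$, we have $c_2 \geq 1$. If $A$ is disconnected in its induced subgraph, then $c_1 \geq 2$, and we get $\delta(A) \geq 2$ immediately. Otherwise, $A$ is connected, and I would argue by contradiction. Suppose $\delta(A) = 1$: then $c_2 = 1$ as well, so deleting the single cut edge of $F(n,k)$ splits it into exactly the two components $A$ and $N \setminus A$. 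Because $r \in N \setminus A$, this forces $A$ to be the full subtree hanging off some non-root vertex $v$.

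The key structural observation is then that, by the recursive definition of the flower, the subtree rooted at \emph{any} non-root vertex of $F(n,k)$ is itself a sub-flower $F(n', k)$ for some $n' \geq 1$. This follows by induction on recursion depth: the children of the pistil are precisely the pistils of $F(n_1, k), \ldots, F(n_d, k)$, and inside each $F(n_i, k)$ the same claim applies recursively. Property (c) of the flower definition then forces $|T_v|$ to be odd, contradicting the assumption that $|A|$ is even. Hence $\delta(A) \geq 2$.

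For property (ii), I would work directly from the construction. Since $n \geq 10k$ ensures $n - 1 \geq k$, and $n, k$ have different parities, condition (b) yields exactly $d = k$ petals. Let $n_1, \ldots, n_k$ be the petal sizes: they are positive odd integers with $\sum_i n_i = n - 1$ and $|n_i - n_j| \leq 2$, so each $n_i \in \{m, m+2\}$ for some odd $m$. Writing $a$ for the number of petals of size $m+2$, we get $km + 2a = n - 1$ with $0 \leq a \leq k - 1$ (WLOG), giving $m \geq (n - 2k + 1)/k$ and $m + 2 \leq (n - 1)/k + 2$. Plugging in $n \geq 10k$ into these bounds and simplifying the inequalities $(n - 2k + 1)/k \geq 4n/(5k)$ and $(n-1)/k + 2 \leq 6n/(5k)$ finishes the argument.

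The main obstacle I anticipate is in property (i): rigorously justifying that every subtree hanging off a non-root vertex of the flower is itself a sub-flower, so that property (c) applies at every depth. This is intuitively clear but needs to be spelled out via induction on the recursion depth of the flower to ensure the argument is airtight. The arithmetic in property (ii) is mechanical, but care is required to ensure the ceiling/floor effects introduced by the oddness constraint on $m$ do not invalidate the stated constants when $n$ is barely above $10k$; if this turns out to be tight, a very small strengthening of the $n \geq 10k$ hypothesis in the statement would suffice.
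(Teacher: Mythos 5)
Your proof is correct and takes essentially the same route as the paper's (which simply cites properties (c), (b), and (d) of the flower definition): for (i) the oddness condition (c), propagated recursively, rules out any even pendant subtree, and for (ii) the $d = k$ nearly-equal odd petal sizes give the bounds. Your worry about tightness near $n = 10k$ is unfounded: both inequalities $(n-2k+1)/k \geq 4n/(5k)$ and $(n-1)/k + 2 \leq 6n/(5k)$ reduce to $n \geq 10k - 5$, so the stated hypothesis suffices with room to spare.
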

\begin{proof}
(i) follows from property (c) in the definition of a flower. 
(ii) follows from the fact that there are $k$ petals (property (b)) and each petal has size in the interval $\left [ \frac{n}{k}-2, \frac{n}{k} + 2\right ]$ (property (d)).
\end{proof}

These simple properties are all we need to show the hardness of {\GHA} on bounded-degree trees.
\begin{restatable}{theorem}{thmboundeddegreetreesnp}\label{thm:bounded-degree-trees-np-complete}
{\GHA} is NP-hard on bounded-degree trees.   
\end{restatable}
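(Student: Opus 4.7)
The plan is to reduce from \UTP, following the template used by the other hardness results in this section (in particular Theorems~\ref{thm:trees-approx-lower-bound} and~\ref{thm:bounded-degree-approx-lower-bound}), but replacing the star / Ramanujan gadgets by flower gadgets, which are precisely the simultaneously tree-like and bounded-degree objects that Lemmas~\ref{lem:flower-computation} and~\ref{lem:flower-properties} were designed to give us. Given a \UTP instance $(a_1,\ldots,a_{3m},T)$, I would fix a constant $k\ge 3$ and a large integer parameter $C$, and build a bounded-degree tree $G$ as follows: take a bounded-degree backbone $B$ with $3m$ distinguished attachment leaves $\ell_1,\ldots,\ell_{3m}$ (for instance the balanced binary tree on $3m$ leaves, which has maximum degree $3$ and only $O(m)$ vertices), and for each $i$ attach, via a single bridge edge to $\ell_i$, the pistil of a flower $F_i = F(n_i, k)$, where $n_i$ is $Ca_i$ with an $O(1)$ additive correction so that the hypotheses of Lemma~\ref{lem:flower-properties} hold. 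By Lemma~\ref{lem:flower-computation} this construction is polynomial-time, and $G$ has maximum degree at most $k+2$. For the house multiset $H$, take $CT$ copies of each of the integer values $1,2,\ldots,m$ together with enough copies of a padding value $0$ to reach $|V(G)|$ houses.

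For the YES direction, given a valid $3$-partition $(S_1,\ldots,S_m)$, assign value $j$ to every vertex of every flower $F_i$ with $a_i\in S_j$ and value $0$ to every backbone vertex (and to any parity padding). Every intra-flower and intra-backbone edge then has envy $0$, and the only charged edges are the $3m$ bridges, each of envy at most $m$, giving optimal envy at most $V_Y := 3m^2$. For the NO direction, suppose no valid $3$-partition exists; I will argue that every allocation has envy strictly greater than $V_Y$. Call a flower $F_i$ \emph{pure} under an allocation $\pi$ if all of its vertices receive a single value, and \emph{mixed} otherwise. A direct counting argument on the multiplicities in $H$ shows that any allocation in which every flower is pure with a value in $\{1,\ldots,m\}$ induces a valid $3$-partition of $(a_1,\ldots,a_{3m})$; hence in the NO case, either some flower is assigned the padding value $0$ (which can be made arbitrarily expensive by taking $C$ large) or at least one flower is mixed. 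For a mixed $F_i$, I would appeal to Lemma~\ref{lem:flower-properties}(i): a short parity case analysis (using that $|V(F_i)|$ has parity opposite to that of $k$) shows that one can always find a union $A$ of color classes of $\pi$ inside $F_i$ with $|A|$ even and the pistil of $F_i$ not in $A$, whence $\delta_{F_i}(A)\ge 2$; each of these at-least-two crossing edges connects vertices of distinct integer values in $H$, contributing envy at least $1$. Hence every mixed flower contributes at least $2$ units of intra-flower envy on top of whatever bridge envy is incurred.

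The main obstacle is the quantitative form of the NO direction. Lemma~\ref{lem:flower-properties}(i) only buys us $\delta_{F_i}(A)\ge 2$ per mixed flower, which is a slim margin, so I will need a global budget argument showing that these extra $2$ units of intra-flower envy cannot be compensated by saving more than $2$ units of bridge envy elsewhere; choosing $C$ polynomially large is what ultimately forces an integer gap between the YES and NO cases and simultaneously makes assigning the padding value to an entire flower prohibitively expensive. A secondary subtlety is that Lemma~\ref{lem:flower-properties}(i) requires the cut set to be both even-sized and pistil-free, conditions that individual color classes may fail; this is why the parity argument has to be carried out on a union of color classes rather than on a single class.
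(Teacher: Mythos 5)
Your overall instinct---reduce from {\UTP} with flowers as the bounded-degree tree gadget---matches the paper's ingredients, but your value scheme is the one from the inapproximability proofs (flat integer values $1,\ldots,m$ with multiplicity $CT$ plus padding zeros), and on bounded-degree trees that scheme breaks down at exactly the point you flagged as the ``main obstacle.'' The reason the star/clique/grid/Ramanujan reductions work is \emph{amplification}: a mixed gadget there incurs $\Omega(C)$ or $\Omega(C^2)$ crossing edges, which swamps the $O(m^2)$ YES-case envy. A flower has no expansion whatsoever---Lemma \ref{lem:flower-properties}(i) gives $\delta(A)\ge 2$, a constant, and only for \emph{even}, pistil-free $A$. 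Taking $C$ large therefore amplifies nothing (it helps only with the counting step forcing pure flowers to encode an exact partition), and your budget argument has to win by a margin of one or two envy units against savings of comparable size. It in fact loses: if a mixed flower contains values $j' < j$ with the mismatched class of \emph{odd} size (odd mismatches arise naturally, e.g.\ when a triple's sum deviates from $T$ by an odd multiple of the scaling, or simply by placing the pistil in one class so that the complementary class is odd), the parity lemma is silent, a single crossing edge of envy $j-j'$ can suffice (e.g.\ the mismatch is one whole petal, which is possible since petals of $F(\approx Ca_i,k)$ come in sizes around $Ca_i/k$), and pricing the pistil at $j'$ instead of $j$ saves exactly $j-j'$ on the bridge. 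The net change is zero, so there are NO instances whose optimal envy \emph{equals} your YES threshold, and the reduction fails to separate the two cases. A secondary unresolved point: NP-hardness needs an exactly computable threshold, and you never establish what the true YES optimum of your construction is (pure-partition allocations need not be optimal once backbone vertices and padding zeros may take nonzero values).

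The paper's proof closes precisely this hole with two devices absent from your proposal. First, the house values are not flat: they come in $4m+1$ clusters separated by \emph{exponentially decreasing} gaps $s(j)=(|E|+1)^{2j}$, so that minimizing envy is equivalent to \emph{lexicographically} minimizing the number of edges crossing each gap in order. This replaces your global budget argument (where bridge savings can cancel intra-flower costs) by a sequence of independent cut-minimization steps, each needing only a gap of one edge. Second, the gadget is more heavily engineered: a path of $3m$ medium flowers $F(T,99)$ joined at their pistils, each carrying a small flower $F(a_i,99)$, each of which is anchored by \emph{two} large flowers $F(10T,999)$. The large anchors make any cluster touching a small pistil pay $+2$ cut edges, the petal-size bounds from Lemma \ref{lem:flower-properties}(ii) (every petal has size at most $T/80$, large petals at least $T/125$) make non-pistil or large-pistil allocations cost $40$--$800$ extra edges, and a case analysis then shows the first $3m$ clusters must fill the medium flowers in order and each subsequent cluster must fill exactly three small flowers---possible iff the {\UTP} instance is a YES instance---with the threshold $\envy_{\text{YES}}$ computable in polynomial time. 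Your single-bridge-per-flower topology with flat values cannot reproduce these strict per-step gaps; if you want to salvage your construction, adopting the exponentially separated clusters is the essential missing idea, and even then you will likely need anchor flowers to control where clusters may sit.
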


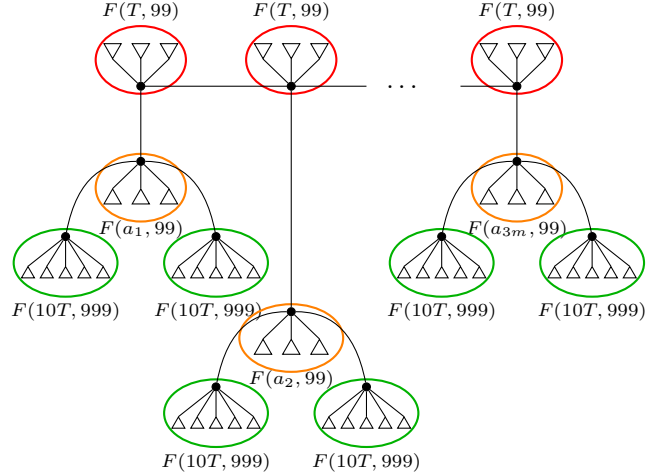
\begin{figure}[ht]
\centering
    \begin{tikzpicture}[xscale=0.5]
        \xdef\eps{0.05}
        %
        %
        \node[ellipse, draw=red, thick, minimum width=2cm, minimum height=1.5cm, scale=0.6] (e1) at (0,0.35) {};
        \node at (0,1) (l1) {\scriptsize{$F(T,99)$}};
        \node[ellipse, draw=red, thick, minimum width=2cm, minimum height=1.5cm, scale=0.6] (e2) at (4,0.35) {};
        \node at (4,1) (l2) {\scriptsize{$F(T,99)$}};
        \node[ellipse, draw=red, thick, minimum width=2cm, minimum height=1.5cm, scale=0.6] (e3) at (10,0.35) {};
        \node at (10,1) (l3) {\scriptsize{$F(T,99)$}};
        \node[ellipse, draw=orange, thick, minimum width=2cm, minimum height=1.5cm, scale=0.6] (e4) at (0,-1.35) {};
        \node at (0,-1.9) (l4) {\scriptsize{$F(a_1,99)$}};
        \node[ellipse, draw=orange, thick, minimum width=2cm, minimum height=1.5cm, scale=0.6] (e5) at (10,-1.35) {};
        \node at (10,-1.9) (l5) {\scriptsize{$F(a_{3m},99)$}};
        \node[ellipse, draw=black!30!green, thick, minimum width=2.3cm, minimum height=1.5cm, scale=0.6] (e6) at (-2,-2.35) {};
        \node at (-2,-3) (l6) {\scriptsize{$F(10T,999)$}};
        \node[ellipse, draw=black!30!green, thick, minimum width=2.3cm, minimum height=1.5cm, scale=0.6] (e7) at (2,-2.35) {};
        \node at (1.8,-3) (l7) {\scriptsize{$F(10T,999)$}};
        \node[ellipse, draw=black!30!green, thick, minimum width=2.3cm, minimum height=1.5cm, scale=0.6] (e8) at (8,-2.35) {};
        \node at (8,-3) (l8) {\scriptsize{$F(10T,999)$}};
        \node[ellipse, draw=black!30!green, thick, minimum width=2.3cm, minimum height=1.5cm, scale=0.6] (e9) at (12,-2.35) {};
        \node at (12,-3) (l9) {\scriptsize{$F(10T,999)$}};
        \node[ellipse, draw=orange, thick, minimum width=2.3cm, minimum height=1.5cm, scale=0.6] (e10) at (4,-3.35) {};
        \node at (4,-3.9) (l10) {\scriptsize{$F(a_2,99)$}};
        \node[ellipse, draw=black!30!green, thick, minimum width=2.3cm, minimum height=1.5cm, scale=0.6] (e11) at (2,-4.35) {};
        \node at (2,-5) (l11) {\scriptsize{$F(10T,999)$}};
        \node[ellipse, draw=black!30!green, thick, minimum width=2.3cm, minimum height=1.5cm, scale=0.6] (e12) at (6,-4.35) {};
        \node at (6,-5) (l12) {\scriptsize{$F(10T,999)$}};
        %
        %
        \node[circle,draw=black,fill=black,inner sep=0pt,minimum size=3pt] (1) at (0,0) {};
        \node[circle,draw=black,fill=black,inner sep=0pt,minimum size=3pt] (2) at (4,0) {};
        \node[circle,draw=black,fill=black,inner sep=0pt,minimum size=3pt] (3) at (10,0) {};
        %
        %
        \node[circle,draw=black,fill=black,inner sep=0pt,minimum size=3pt] (4) at (0,-1) {};
        \node[circle,draw=black,fill=black,inner sep=0pt,minimum size=3pt] (5) at (10,-1) {};
        %
        %
        \node[circle,draw=black,fill=black,inner sep=0pt,minimum size=3pt] (6) at (-2,-2) {};
        \node[circle,draw=black,fill=black,inner sep=0pt,minimum size=3pt] (7) at (2,-2) {};
        \node[circle,draw=black,fill=black,inner sep=0pt,minimum size=3pt] (8) at (8,-2) {};
        \node[circle,draw=black,fill=black,inner sep=0pt,minimum size=3pt] (9) at (12,-2) {};
        %
        %
        \node[circle,draw=black,fill=black,inner sep=0pt,minimum size=3pt] (10) at (4,-3) {};
        \node[circle,draw=black,fill=black,inner sep=0pt,minimum size=3pt] (11) at (2,-4) {};
        \node[circle,draw=black,fill=black,inner sep=0pt,minimum size=3pt] (12) at (6,-4) {};
        %
        %
        \draw (1) -- (4);
        \draw (2) -- (10);
        \draw (3) -- (5);
        \draw (4) to[out=180,in=70] (6);
        \draw (4) to[out=0,in=110] (7);
        \draw (10) to[out=180,in=70] (11);
        \draw (10) to[out=0,in=110] (12);
        \draw (5) to[out=180,in=70] (8);
        \draw (5) to[out=0,in=110] (9);
        %
        \node[regular polygon, draw, regular polygon sides=3,scale=0.4] (42) at (0,-1.5) {};
        \node[regular polygon, draw, regular polygon sides=3,scale=0.4] (41) at (-0.75,-1.5) {};
        \node[regular polygon, draw, regular polygon sides=3,scale=0.4] (43) at (0.75,-1.5) {};
        \draw (4) -- (41.north);
        \draw (4) -- (42.north);
        \draw (4) -- (43.north);
        %
        \node[regular polygon, draw, regular polygon sides=3,scale=0.4] (52) at (10,-1.5) {};
        \node[regular polygon, draw, regular polygon sides=3,scale=0.4] (51) at (9.25,-1.5) {};
        \node[regular polygon, draw, regular polygon sides=3,scale=0.4] (53) at (10.75,-1.5) {};
        \draw (5) -- (51.north);
        \draw (5) -- (52.north);
        \draw (5) -- (53.north);
        %
        \node[regular polygon, draw, regular polygon sides=3,scale=0.3] (63) at (-2,-2.5) {};
        \node[regular polygon, draw, regular polygon sides=3,scale=0.3] (61) at (-2.5,-2.5) {};
        \node[regular polygon, draw, regular polygon sides=3,scale=0.3] (62) at (-3,-2.5) {};
        \node[regular polygon, draw, regular polygon sides=3,scale=0.3] (64) at (-1.5,-2.5) {};
        \node[regular polygon, draw, regular polygon sides=3,scale=0.3] (65) at (-1,-2.5) {};
        \draw (6) -- (61.north);
        \draw (6) -- (62.north);
        \draw (6) -- (63.north);
        \draw (6) -- (64.north);
        \draw (6) -- (65.north);
        %
        \node[regular polygon, draw, regular polygon sides=3,scale=0.3] (73) at (2,-2.5) {};
        \node[regular polygon, draw, regular polygon sides=3,scale=0.3] (71) at (2.5,-2.5) {};
        \node[regular polygon, draw, regular polygon sides=3,scale=0.3] (72) at (3,-2.5) {};
        \node[regular polygon, draw, regular polygon sides=3,scale=0.3] (74) at (1.5,-2.5) {};
        \node[regular polygon, draw, regular polygon sides=3,scale=0.3] (75) at (1,-2.5) {};
        \draw (7) -- (71.north);
        \draw (7) -- (72.north);
        \draw (7) -- (73.north);
        \draw (7) -- (74.north);
        \draw (7) -- (75.north);
        %
        \node[regular polygon, draw, regular polygon sides=3,scale=0.3] (83) at (8,-2.5) {};
        \node[regular polygon, draw, regular polygon sides=3,scale=0.3] (81) at (8.5,-2.5) {};
        \node[regular polygon, draw, regular polygon sides=3,scale=0.3] (82) at (9,-2.5) {};
        \node[regular polygon, draw, regular polygon sides=3,scale=0.3] (84) at (7.5,-2.5) {};
        \node[regular polygon, draw, regular polygon sides=3,scale=0.3] (85) at (7,-2.5) {};
        \draw (8) -- (81.north);
        \draw (8) -- (82.north);
        \draw (8) -- (83.north);
        \draw (8) -- (84.north);
        \draw (8) -- (85.north);
        %
        \node[regular polygon, draw, regular polygon sides=3,scale=0.3] (93) at (12,-2.5) {};
        \node[regular polygon, draw, regular polygon sides=3,scale=0.3] (91) at (12.5,-2.5) {};
        \node[regular polygon, draw, regular polygon sides=3,scale=0.3] (92) at (13,-2.5) {};
        \node[regular polygon, draw, regular polygon sides=3,scale=0.3] (94) at (11.5,-2.5) {};
        \node[regular polygon, draw, regular polygon sides=3,scale=0.3] (95) at (11,-2.5) {};
        \draw (9) -- (91.north);
        \draw (9) -- (92.north);
        \draw (9) -- (93.north);
        \draw (9) -- (94.north);
        \draw (9) -- (95.north);
        %
        \node[regular polygon, draw, regular polygon sides=3,scale=0.4] (102) at (4,-3.5) {};
        \node[regular polygon, draw, regular polygon sides=3,scale=0.4] (101) at (3.25,-3.5) {};
        \node[regular polygon, draw, regular polygon sides=3,scale=0.4] (103) at (4.75,-3.5) {};
        \draw (10) -- (101.north);
        \draw (10) -- (102.north);
        \draw (10) -- (103.north);
        %
        \node[regular polygon, draw, regular polygon sides=3,scale=0.3] (113) at (2,-4.5) {};
        \node[regular polygon, draw, regular polygon sides=3,scale=0.3] (111) at (2.5,-4.5) {};
        \node[regular polygon, draw, regular polygon sides=3,scale=0.3] (112) at (3,-4.5) {};
        \node[regular polygon, draw, regular polygon sides=3,scale=0.3] (114) at (1.5,-4.5) {};
        \node[regular polygon, draw, regular polygon sides=3,scale=0.3] (115) at (1,-4.5) {};
        \draw (11) -- (111.north);
        \draw (11) -- (112.north);
        \draw (11) -- (113.north);
        \draw (11) -- (114.north);
        \draw (11) -- (115.north);
        %
        \node[regular polygon, draw, regular polygon sides=3,scale=0.3] (123) at (6,-4.5) {};
        \node[regular polygon, draw, regular polygon sides=3,scale=0.3] (121) at (6.5,-4.5) {};
        \node[regular polygon, draw, regular polygon sides=3,scale=0.3] (122) at (7,-4.5) {};
        \node[regular polygon, draw, regular polygon sides=3,scale=0.3] (124) at (5.5,-4.5) {};
        \node[regular polygon, draw, regular polygon sides=3,scale=0.3] (125) at (5,-4.5) {};
        \draw (12) -- (121.north);
        \draw (12) -- (122.north);
        \draw (12) -- (123.north);
        \draw (12) -- (124.north);
        \draw (12) -- (125.north);
        %
        \node[regular polygon, draw, regular polygon sides=3,scale=0.4, rotate=180] (012) at (0,0.5) {};
        \node[regular polygon, draw, regular polygon sides=3,scale=0.4, rotate=180] (011) at (0.75,0.5) {};
        \node[regular polygon, draw, regular polygon sides=3,scale=0.4, rotate=180] (013) at (-0.75,0.5) {};
        \draw (1) -- (011.north);
        \draw (1) -- (012.north);
        \draw (1) -- (013.north);
        %
        \node[regular polygon, draw, regular polygon sides=3,scale=0.4, rotate=180] (22) at (4,0.5) {};
        \node[regular polygon, draw, regular polygon sides=3,scale=0.4, rotate=180] (21) at (4.75,0.5) {};
        \node[regular polygon, draw, regular polygon sides=3,scale=0.4, rotate=180] (23) at (3.25,0.5) {};
        \draw (2) -- (21.north);
        \draw (2) -- (22.north);
        \draw (2) -- (23.north);
        %
        \node[regular polygon, draw, regular polygon sides=3,scale=0.4, rotate=180] (32) at (10,0.5) {};
        \node[regular polygon, draw, regular polygon sides=3,scale=0.4, rotate=180] (31) at (10.75,0.5) {};
        \node[regular polygon, draw, regular polygon sides=3,scale=0.4, rotate=180] (33) at (9.25,0.5) {};
        \draw (3) -- (31.north);
        \draw (3) -- (32.north);
        \draw (3) -- (33.north);
        \draw (0,0) -- (6,0);
        \draw (8.5,0) -- (10,0);
        \node at (7,0) (1111) {$\dots$};
\end{tikzpicture}
\caption{Mapping a \UTP instance to a bounded degree tree. Here, the orange, red, and green circles correspond to small, medium, and large flowers respectively.}
\label{fig:boundedtrees-reduction-graph}
\end{figure}



\begin{proof}
We present a reduction to {\UTP}. We assume the input $3$-partition instance is scaled up such that each $a_i$ is even and at least $1000$. This comes with no loss of generality --- we can multiply each $a_i$ and $T$ by a $1000$ without changing the output of the instance. We also assume $T$ is even. If $T$ is odd, the instance is trivially a NO instance.

We construct the graph of the {\GHA} instance as follows (see Figure \ref{fig:boundedtrees-reduction-graph}). We start with a path of $3m$ flowers $F(T, 99)$, connected by their pistils. We number these flowers $1$ to $3m$ from left to right. To each flower $i$, we connect a flower $F(a_i, 99)$ and to this flower, we connect two flowers $F(10T, 999)$. Again, connections between flowers are made via an edge between the pistils of the flowers. We refer to the flowers of the form $F(a_i, 99)$ as {\em small} flowers, flowers of the form $F(T, 99)$ as {\em medium} flowers and flowers of the form $F(10T, 999)$ as {\em large} flowers. We define small, medium and large pistils similarly. Note that large flowers have $999$ petals, while small and medium flowers have $99$ petals (since each $a_i$ and $T$ can be assumed to be even). This graph can be constructed in polynomial time using Lemma \ref{lem:flower-computation} and the fact that the inputs are given in unary.

Now we describe the house values. We need to describe $64mT$ house values since this is the number of nodes in the graph. For this, it is easier to first define the following function/series, $s(j) = (|E|+1)^{2j}$ for any $j \in \mathbb{N}$ where $|E|$ is the number of edges in our constructed graph. Note that equivalently, we can write $s(j) = (64mT)^{2j}$. We define the multiset of house values as the following:
\begin{align*}
    T &\text{ houses with value } 0 \\
    T &\text{ houses with value } s(4m) \\
    T &\text{ houses with value } s(4m) + s(4m-1) \\
    \dots \\
    T &\text{ houses with value } \sum_{j = 2}^{4m} s(j) \\
    60mT &\text{ houses with value } \sum_{j = 1}^{4m} s(j)
\end{align*}
The crucial property these values satisfy is that the intervals between two values are exponentially decreasing in size, so much so that minimum envy allocation must lexicographically minimize the number of edges passing through these gaps. That is, to minimize envy, we need to first minimize the number of lines passing through the $(0, s(4m))$ interval, subject to that minimize the number of lines passing through the $(s(4m), s(4m) + s(4m-1))$ interval, and so on. For ease of readability, we no longer refer to the exact value of the houses but simply refer to them as {\em clusters} of values. More specifically, the $T$ instances of $0$ are the \emph{first} cluster of values, the $T$ instances of $s(4m)$ are the \emph{second} cluster, and so on. There are $4m+1$ clusters, each of which have size $T$ except for the highest value (the $(4m+1)$-th cluster) which has size $60mT$. The largest house value is at most $(4m + 1)\cdot s(4m)$, which requires $O(\log(4m + 1) + (4m)\log(|E| + 1))$ bits to write, and there are still only polynomially many such values, and so this multiset of house values can be written in polynomial time and space.

Given this constructed instance, let us study the optimal allocation $\pi^{*}$. Note that the optimal allocation must first minimize the number of envy lines between the first two clusters. Therefore, the placement of the first cluster must be done in such a way so as to minimize the cut size between the first cluster and the rest of the graph. We will show that, to minimize envy, the first $3m$ clusters must be allocated to the $3m$ medium flowers, and the next $m$ clusters (from the $(3m+1)$-th to the $(4m)$-th cluster) must each be allocated to exactly three small flowers with sizes $a_i, a_j$ and $a_k$ such that $a_i + a_j + a_k = T$. This is possible if and only if the original instance is a YES instance, and the total envy of the corresponding house allocation is given by
\begin{equation*}
    \envy_{\text{YES}} = \sum_{j = 1}^{3m - 1}(j + 1)\cdot s(4m+1-j) + (3m)\cdot s(m+1)  + \sum_{j = 1}^m(3m + 3j)\cdot s(m+1-j) .
\end{equation*}
Any other allocation that does not follow this structure has a strictly greater envy. Specifically, this means that when the original instance is a NO instance, the optimal allocation has a greater total envy than $\envy_{\text{YES}}$ defined above.

To do this, we will need some bounds on petal sizes. The size of each petal in the medium flowers is at most $\frac{6T}{99 \times 5} \le \frac{T}{80}$ (Lemma \ref{lem:flower-properties}). Similarly the size of each petal in the large flowers is $\le \frac{T}{80}$ and the size of each petal in a small flower is $\frac{6a_i}{99 \times 5} \le \frac{6T}{2 \times 99 \times 5} \le \frac{T}{160}$. In particular, all petals throughout the graph have size at most $T/80$, a fact that we shall use several times.

Let us now consider where the first cluster is allocated in any optimal allocation. Note that in an allocation where the first cluster completely fills up one of the medium flowers at either end of the path (either leftmost or rightmost), the cut size across the first interval is $2$. Any optimal allocation therefore needs to ensure that this cut size is at most $2$. This observation will enable us to rule out many other possibilities.

If the elements of the first cluster are not allocated to any pistil, then the cut size is at least $80$ since each petal in the graph has size at most $\frac{T}{80}$; so the first cluster must be allocated to at least $80$ petals, each of which must add at least one edge to the cut. So in an optimal allocation, at least one pistil must get an element from the first cluster.

Suppose a large pistil is given a value in the first cluster. Then, since each petal of the large flower has size at least $T/125$, there must be at least $999-125 \ge 800$ petals of the large flower that remain unfilled by the cluster of values. Each of these unfilled petals adds at least one edge between the first cluster to another cluster, and these edges are all disjoint, and so the cut size must be at least $800$. Any such allocation, therefore, is suboptimal.

Suppose a small pistil is given a value in the first cluster. This pistil has two neighboring large pistils. We know from the arguments before that these neighbors cannot be in the first cluster, and so these two outgoing edges are across the cut. But then, the entire graph without these two large flowers is still connected, so at least one more edge needs to go across the first cut, and therefore, this allocation is also suboptimal.

It follows that any optimal allocation must allocate only medium pistils to the first cluster. If multiple such pistils are allocated, say from medium flowers $F_1$ and $F_2$, note that the cluster cannot contain either flower in its entirety, so there is at least one edge from each of them across the cut; but there also must be at least one other edge, from a path that goes from any of these two pistils to any node in the graph placed in a different cluster (such a node must exist, just by counting). This is therefore suboptimal. Similarly, if the first cluster contains exactly one pistil from a medium flower $F_1$, and no other pistil, but it does not contain all of $F_1$, it is easy to see that the cut will have size at least $3$, and so will be suboptimal.


So, assume from here on out that the first cluster is allocated entirely to the leftmost (or the first) medium flower under $\pi^*$. Our goal is to show that the first $3m$ clusters must be allocated to the $3m$ medium flowers from left to right. We do this by induction.

Assume the first $k$ clusters are allocated to the first $k$ medium flowers (from the left). Note that the cut size or the total number of lines of envy between the first $k$ clusters and the rest of the graph is $k+1$. Out of all the ways of allocating the rest of the items, we wish to find the one that minimizes the number of lines of envy going through the next interval, between the $(k + 1)$-th cluster and the $(k + 2)$-th one. We will show that in order to achieve this, the $(k+1)$-th cluster must be allocated to the $(k+1)$-th medium flower. If we allocate the $(k+1)$-th cluster to the $(k+1)$-th medium flower, the cut size between the first $k+1$ clusters and the rest of the graph increases by $1$ with respect to the first $k$ clusters i.e. it increases from $k+1$ to $k+2$. The increase in cut size is just an easier way to account for the number of edges between the first $k+1$ clusters and the rest of the graph. We show that all other allocations of the $(k+1)$-th cluster are strictly suboptimal i.e. all other allocations increase the cut size by at least $2$.

If the $(k+1)$-th cluster is not allocated to any pistil, then the cut size increases by at least $80$. If the $(k+1)$-th cluster is allocated to a large pistil, then the cut size increases by at least $800$. Both of these statements can be proved using similar arguments to the first cluster. 

For every small pistil that gets a value from the $(k+1)$-th cluster (regardless of whether it is attached to a medium flower that has already been assigned a cluster or not), the cut size increases by at least $1$ --- this is because each of these pistils has $2$ edges to large pistils. So even if the edge connecting the small pistil to the medium pistil is removed from the cut, at least two new edges are added. This argument rules out allocating the $(k+1)$-th cluster to multiple small flowers. If, on the other hand, no medium pistil is allocated and only one small pistil is allocated, at least $T/2$ values from the cluster must be allocated to petals of flowers whose pistil remains unallocated; this comes from the fact that each small flower has size upper bounded by $T/2$, and so the remaining $T/2$ values must be allocated to petals. Since every petal of any kind of flower has size at most $T/80$, we need at least $40$ different petals to be represented among the remaining values, and each of them has to add a distinct edge across the cut, and so the cut size must increase by at least $40$.

The only cases we have left are ones which involve at least one medium pistil. If multiple of these medium pistils are allocated, then since each of their petals has size at least $T/125$, there can be at most $125$ such petals represented in the cluster, and so we must have at least $198 - 125 \ge 50$ unfinished petals, each of which adds a distinct edge across the cut, increasing the cut size by at least $1$ each.

So exactly one medium pistil must be allocated to this cluster. We could still have a combination of one medium pistil and at least one small pistil, but this increases the cut size by at least $2$ since each of the small pistils increase the cut size by at least $1$ and the medium flower must be unfinished which adds at least another edge to the cut. 

Finally, we have the case where exactly one medium pistil and no other pistil is allocated. It is easy to see, using arguments similar to before, that the strictly optimal place to put this medium pistil is the pistil corresponding to the $(k+1)$-th medium flower and the strictly optimal allocation is to fill up the cluster entirely with the $(k+1)$-th medium flower.

We can conclude that the first $3m$ clusters must be allocated entirely to the medium flowers in order from left to right (or right to left). 
Note that we have not said anything yet to diffentiate instances with a valid $3$-partition. We will do that next.

The $(3m +1)$-th cluster still must be allocated to some pistil, but cannot be allocated to a large pistil, by the arguments from before. Note that all the medium flowers have already been allocated values from the first $3m$ clusters. Therefore, the $(3m + 1)$-th cluster must contain at least one pistil, and all pistils in it must be small. If the $(3m+1)$-th cluster fills up exactly $3$ different small flowers, then the cut size increases by exactly $3$ relative to the first $3m$ clusters. This is because $6$ edges from the small pistils to the large pistils are added, but the $3$ edges from the medium pistils to the small pistils are taken away. We will show that this is the best option, and any other allocation is strictly worse. To show this, we must consider the following four possible cases.

\noindent\textbf{Case 1: The $(3m+1)$-th cluster is allocated to exactly $1$ pistil.} This must be the pistil of a small flower, which we know has at most $T/2$ vertices by assumption. Therefore, at least $T/2$ values in the cluster must be allocated to petals of other flowers. Each of these petals adds at least one edge to the cut since their corresponding pistils have not been allocated values in the first $3m+1$ clusters, and these edges are all distinct. We know each petal in the graph has a size of at most $T/80$, so at least $40$ petals must be represented in this cluster, so the cut size increases by at least $40$, which is strictly suboptimal.

\noindent\textbf{Case 2: The $(3m+1)$-th cluster is allocated to exactly $2$ pistils.} Both these pistils must correspond to small flowers (say $F_1$ and $F_2$). Both these flowers have a combined size of strictly less than $T$, so at least one other flower outside these two is allocated a value from the $(3m+1)$-th cluster. If either $F_1$ or $F_2$ are not completely filled, then the cut size increases by at least $4$ --- two extra edges come from the pistils of $F_1$ and $F_2$ to the large pistils, the third is from at least one of $F_1$ or $F_2$ being unfinished, and the fourth is from the third flower which is allocated a value (note that it cannot be completely contained in this cluster, because its pistil is in a different cluster).

If both $F_1$ and $F_2$ are completely filled, then the argument is a little more subtle. Recall that each small pistil adds a $1$ to the cut size from the edges to the large pistils. We need to show that at least $2$ other edges are added to the cut from the values allocated outside $F_1$ and $F_2$. The number of values from the $(3m+1)$-th cluster allocated outside of $F_1$ and $F_2$ is {\em even}, since both $F_1$ and $F_2$ have even size by assumption. (Also, this number of values is nonzero, as $F_1$ and $F_2$ have a combined size of less than $T$ by assumption.) Therefore, if these values are allocated to two different flowers, at least $2$ more edges are added to the cut (since those two flowers have their pistils in other clusters) and we are done. Otherwise, if they are allocated to the same flower, the cut size still increases by at least $2$ because of Lemma \ref{lem:flower-properties} and we are done.

\noindent\textbf{Case 3: The $(3m+1)$-th cluster is allocated to exactly $3$ pistils but at least one of the flowers is incomplete.}
Here the cut size increase is trivially at least $4$ --- an increase of $3$ from the pistils having edges to the large pistils, and a fourth from the fact that at least one of the flowers is incomplete.

\noindent\textbf{Case 4: The $(3m+1)$-th cluster is allocated to $4$ or more pistils.}
It is a straightforward argument to show that the number of edges added across the cut is $4$ or more in this case.

Similarly, for the $(3m+2)$-th to the $(4m)$-th cluster, the best possible option is to allocate each cluster in a way that fills up $3$ small flowers. If any of these clusters fail to do so, the cut size is strictly higher. Given allocations of the first $4m$ clusters, there is only one possible allocation for the $(4m+1)$-th cluster; that is, the allocation where values from the $(4m+1)$-th cluster to all nodes which have not been allocated a value from the first $4m$ clusters.

It is easy to see that it is possible for all the $(3m+1)$-th to $(4m)$-th clusters to fill up three small flowers each if and only if there is a valid $3$-partition in the original instance. More specifically, if each of the $(3m+1)$-th to $(4m)$-th clusters are allocated in a way that attains a lower bound on the number of edges between the $(3m+k)$-th and the $(3m+k+1)$-th interval, this allocation defines a valid $3$-partition. This is because each of the $(3m+1)$-th to $(4m)$-th cluster must be allocated to exactly three small flowers and must fill them up entirely. It is also easy to see that given a valid $3$-partition, we can construct an allocation that attains the ideal total envy.

To compute the threshold, we proceed as follows. For $1 \leq i \leq 4m$, let $\ell_i$ be the length of the valuation interval between the $i$-th cluster and the $(i + 1)$-th cluster, which is precisely $s(4m + 1 - i)$. If the original {\UTP} instance is a YES instance, then by our arguments above, we can incur an envy of at most
\begin{equation*}
    \envy_{\text{YES}} := \sum_{j = 1}^{3m - 1}(j + 1)\cdot \ell_j + (3m)\cdot \ell_{3m} + \sum_{j = 1}^m(3m + 3j)\cdot\ell_{3m + j}.
\end{equation*}
Here, the interval $\ell_{3m}$ has increased the cut size by $0$, because it represents the rightmost medium flower being placed in the $3m$-th cluster. Note that the expression above can be computed in polynomial time, since each $\ell_j = s(4m + 1 - i)$ is representable in $O((4m + 1 - i)\log(|E|))$ bits, and all arithmetic operations can be done in polynomial space and time in this number of bits. There are only polynomially many operations to do, so we can compute the threshold $\envy_{\text{YES}}$ in polynomial time.

From all our analysis above, if the original {\UTP} instance is a YES instance, we can attain an envy of at most $\envy_{\text{YES}}$ on our constructed instance. On the other hand, if the original {\UTP} instance is a NO instance, we know any allocation is lexicographically worse than this expression. By our choice of the function $s(\cdot)$, this envy must be strictly more than $\envy_{\text{YES}}$. It follows that our {\GHA} instance has an allocation with envy at most $\envy_{\text{YES}}$ if and only if the {\UTP} instance is a YES instance. Since the construction is done entirely in polynomial time, this finishes the proof.
\end{proof}

\section{The Curious Case of Complete Binary Trees}\label{sec:completebintrees}


In this section, we investigate {\GHA} on instances where the underlying graph is a complete binary tree $B_k$. Recall that such a tree has depth $k$, and $2^{k+1} - 1$ vertices in total, of which $2^k$ are leaves. All leaves, furthermore, are at the same depth.

In \citet[Theorem 4.11]{canon}, it was shown that for any binary tree (complete or otherwise), at least one optimal allocation satisfies the \emph{local median property}: the value at every internal node is the median among the values given to that node and its two children. The same authors surmised that, for any binary tree, at least one optimal allocation satisfies the stronger \emph{global median property}: for every internal node $v$, either its left subtree gets strictly lower-valued houses and its right subtree gets strictly higher-valued houses, or the other way round. Note that if true, this would lead to a straightforward recursive polynomial-time algorithm that would compute an optimal allocation on (nearly) balanced binary trees. 

We now give a refutation of this conjecture. We illustrate an instance on a complete binary tree of depth $3$, in which no optimal allocation satisfies the global median property. This is a quite surprising result that shows that the general problem on complete binary trees may be much harder than expected.

\begin{example}\label{ex:globalrefutation}
    Consider the instance $(B_3, H)$, where
    \begin{equation*}
        H = \{0,0,0,0,0,0,0,1,1,1, 2,3,3,3,3\}.
    \end{equation*}
    See Figure \ref{fig:global-min-arg}. The top shows the only allocation satisfying the global median property (up to re-ordering). The total non-negligible envy incurred by this assignment comes out of the thick red edges of the $B_3$, which incur a total envy of $6$. However, the bottom shows an allocation with an envy of $5$ (incurred by the thick red edges), showing that the global median is strictly sub-optimal.
\end{example}

    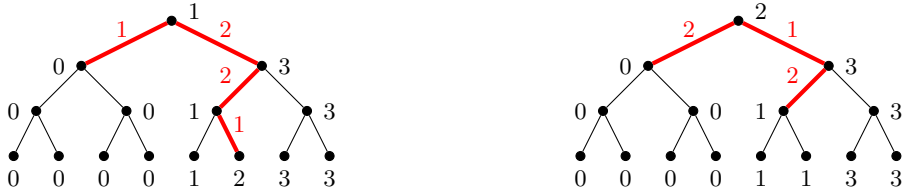
\begin{figure}[ht]
        \centering
        \small
        \begin{subfigure}[b]{0.45\textwidth}
            \centering
            \small
            \begin{tikzpicture}[,mycirc/.style={circle,fill=white, draw = black,minimum size=0.25cm,inner sep = 3pt},scale=0.6]
                \tikzset{mynode/.style = {shape=circle,fill=black,draw,inner sep=1.2pt}}
                \tikzset{edge/.style = {solid}}
                \tikzset{thickedge/.style = {solid,ultra thick,red}}
                \node[draw, mynode] at (0,0) (1) {};
                \node (1a) at (0,-0.5) {$0$};
                \node[draw, mynode] at (1,0) (2) {};
                \node (2a) at (1,-0.5) {$0$};
                \node[draw, mynode] at (2,0) (3) {};
                \node (3a) at (2,-0.5) {$0$};
                \node[draw, mynode] at (3,0) (4) {};
                \node (4a) at (3,-0.5) {$0$};
                \node[draw, mynode] at (4,0) (5) {};
                \node (5a) at (4,-0.5) {$1$};
                \node[draw, mynode] at (5,0) (6) {};
                \node (6a) at (5,-0.5) {$2$};
                \node[draw, mynode] at (6,0) (7) {};
                \node (7a) at (6,-0.5) {$3$};
                \node[draw, mynode] at (7,0) (8) {};
                \node (8a) at (7,-0.5) {$3$};
                \node[draw, mynode] at (0.5,1) (9) {};
                \node (9a) at (0,1) {$0$};
                \node[draw, mynode] at (2.5,1) (10) {};
                \node (10a) at (3,1) {$0$};
                \node[draw, mynode] at (4.5,1) (11) {};
                \node (11a) at (4,1) {$1$};
                \node[draw, mynode] at (6.5,1) (12) {};
                \node (12a) at (7,1) {$3$};
                \node[draw, mynode] at (1.5,2) (13) {};
                \node (13a) at (1,2) {$0$};
                \node[draw, mynode] at (5.5,2) (14) {};
                \node (14a) at (6,2) {$3$};
                \node[draw, mynode] at (3.5,3) (15) {};
                \node (15a) at (4,3.2) {$1$};
                \draw[edge] (9) -- (1);
                \draw[edge] (9) -- (2);
                \draw[edge] (10) -- (3);
                \draw[edge] (10) -- (4);
                \draw[edge] (11) -- (5);
                \draw[thickedge] (11) -- (6);
                \node (116) at (5,0.7) {\color{red} $1$};
                \draw[edge] (12) -- (7);
                \draw[edge] (12) -- (8);
                \draw[edge] (13) -- (9);
                \draw[edge] (13) -- (10);
                \draw[thickedge] (14) -- (11);
                \node (1114) at (4.7,1.8) {\color{red} $2$};
                \draw[edge] (14) -- (12);
                \draw[thickedge] (15) -- (13);
                \node (1513) at (2.4,2.8) {\color{red} $1$};
                \draw[thickedge] (15) -- (14);
                \node (1514) at (4.7,2.8) {\color{red} $2$};
            \end{tikzpicture}
        \end{subfigure}
        \begin{subfigure}[b]{0.45\textwidth}
            \centering
            \small
            \begin{tikzpicture}[,mycirc/.style={circle,fill=white, draw = black,minimum size=0.25cm,inner sep = 3pt},scale=0.6]
                \tikzset{mynode/.style = {shape=circle,fill=black,draw,inner sep=1.2pt}}
                \tikzset{edge/.style = {solid}}
                \tikzset{thickedge/.style = {solid,ultra thick, red}}
                \node[draw, mynode] at (0,0) (1) {};
                \node (1a) at (0,-0.5) {$0$};
                \node[draw, mynode] at (1,0) (2) {};
                \node (2a) at (1,-0.5) {$0$};
                \node[draw, mynode] at (2,0) (3) {};
                \node (3a) at (2,-0.5) {$0$};
                \node[draw, mynode] at (3,0) (4) {};
                \node (4a) at (3,-0.5) {$0$};
                \node[draw, mynode] at (4,0) (5) {};
                \node (5a) at (4,-0.5) {$1$};
                \node[draw, mynode] at (5,0) (6) {};
                \node (6a) at (5,-0.5) {$1$};
                \node[draw, mynode] at (6,0) (7) {};
                \node (7a) at (6,-0.5) {$3$};
                \node[draw, mynode] at (7,0) (8) {};
                \node (8a) at (7,-0.5) {$3$};
                \node[draw, mynode] at (0.5,1) (9) {};
                \node (9a) at (0,1) {$0$};
                \node[draw, mynode] at (2.5,1) (10) {};
                \node (10a) at (3,1) {$0$};
                \node[draw, mynode] at (4.5,1) (11) {};
                \node (11a) at (4,1) {$1$};
                \node[draw, mynode] at (6.5,1) (12) {};
                \node (12a) at (7,1) {$3$};
                \node[draw, mynode] at (1.5,2) (13) {};
                \node (13a) at (1,2) {$0$};
                \node[draw, mynode] at (5.5,2) (14) {};
                \node (14a) at (6,2) {$3$};
                \node[draw, mynode] at (3.5,3) (15) {};
                \node (15a) at (4,3.2) {$2$};
                \draw[edge] (9) -- (1);
                \draw[edge] (9) -- (2);
                \draw[edge] (10) -- (3);
                \draw[edge] (10) -- (4);
                \draw[edge] (11) -- (5);
                \draw[edge] (11) -- (6);
                \draw[edge] (12) -- (7);
                \draw[edge] (12) -- (8);
                \draw[edge] (13) -- (9);
                \draw[edge] (13) -- (10);
                \draw[thickedge] (14) -- (11);
                \node (1114) at (4.7,1.8) {\color{red} $2$};
                \draw[edge] (14) -- (12);
                \draw[thickedge] (15) -- (13);
                \node (1513) at (2.4,2.8) {\color{red} $2$};
                \draw[thickedge] (15) -- (14);
                \node (1514) at (4.7,2.8) {\color{red} $1$};
            \end{tikzpicture}
        \end{subfigure}
        \caption{Refutation of the global median property on complete binary trees.}
        \label{fig:global-min-arg}
    \end{figure}


Fix an arbitrary instance of {\GHA} on the complete binary tree $B_k$ on $n = 2^{k+1} - 1$ vertices, and consider the valuation interval. There are $n$ values on the interval. Of particular interest to us is the size of the \emph{smallest} $(i, n - i)$-cut, i.e., $\delta_{B_k}(i)$. Since $\delta_{B_k}(i) = \delta_{B_k}(n - i)$, we can WLOG take $i \leq \lceil n/2\rceil$. We now need a definition.

\begin{definition}[Repunit Representation and Elegance]\label{def:repunitrepresentation}
    For any $m \geq 1$, let a \emph{repunit representation of $m$} be any finite sequence $(a_1, \ldots, a_r) \in \mathbb{Z}^r$ satisfying
    \begin{equation*}
        m = \sum_{i =1}^r {\mathsf{sgn}(a_i)} \cdot (2^{|a_i|} - 1)
    \end{equation*}
    where $\mathsf{sgn}(a_i)$ is $1$ (resp.~$-1$) if $a_i \geq 0$ (resp.~$a_i < 0$). Note that every $m \geq 1$ has a repunit representation (e.g.,~the length-$m$ sequence of all ones). We define $\mathsf{elegance}(m)$ as the smallest $r$ for which $m$ has a repunit representation $(a_1, \ldots, a_r)$ of length $r$.
\end{definition}

The intuition behind Definition \ref{def:repunitrepresentation} is to capture the most ``efficient'' way to write $m$ in binary as the sum or difference of binary repunits, i.e., numbers of the form $11\ldots 1$. For instance, $\mathsf{elegance}(10) = 2$, because $10 = (2^3 - 1) + (2^2 - 1)$, and there is no shorter repunit representation. Similarly, $\mathsf{elegance}(12) = 2$, as $12 = (2^4 - 1) - (2^2 - 1)$. Note that $12$ cannot be written as the \emph{sum} of two repunits. Table \ref{tab:repunit} summarizes the elegance of all numbers up to $20$.

\begin{table*}[h!]
    \centering
    \begin{tabular}{ |c|c|c|c|c|c|c|c|c|c|c|c|c|c|c|c|c|c|c|c|c| } 
 \hline
 $m$ & 1 & 2 & 3 & 4 & 5 & 6 & 7 & 8 & 9 & 10 & 11 & 12 & 13 & 14 & 15 & 16 & 17 & 18 & 19 & 20 \\
 \hline
 $\mathsf{elegance}(m)$ & 1 & 2 & 1 & 2 & 3 & 2 & 1 & 2 & 3 & 2 & 3 & 2 & 3 & 2 & 1 & 2 & 3 & 2 & 3 & 4 \\
 \hline
\end{tabular}
    \caption{List of $\mathsf{elegance}(m)$ for $1 \leq m \leq 20$.}
    \label{tab:repunit}
\end{table*}


The following proposition relates elegance to the size of the smallest $(i,n-i)$-cut in a complete binary tree, namely $\delta_{B_k}(i)$.

\begin{restatable}{proposition}{elegancecuts}\label{prop:elegance}
    Let $B_k$ be the complete binary tree on $n = 2^{k + 1} - 1$ vertices. Then for $i \leq 2^k - 1$, $\mathsf{elegance}(i) - 1 \leq \delta_{B_k}(i) \leq \mathsf{elegance}(i)$.
\end{restatable}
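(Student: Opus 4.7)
The plan is to prove the two inequalities separately, with the lower bound being the cleaner counting argument.

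For the lower bound $\delta_{B_k}(i)\geq \mathsf{elegance}(i)-1$, I will fix any $S\subseteq V(B_k)$ with $|S|=i$ and $\delta_{B_k}(S)=r$ and extract a repunit representation of $i$ with at most $r+1$ terms. Removing the $r$ cut-edges splits $B_k$ into $r+1$ connected components, and $S$ is the union of some of these. Each such component $C$ has a unique highest vertex $v_C$, and may be written as the subtree $T_{v_C}$ with certain descendant subtrees deleted, namely those rooted at the child endpoints of cut-edges whose parent endpoint lies in $C$. Since every subtree of $B_k$ has size $2^d-1$ for some $d\geq 1$, summing this identity over the $S$-components expresses $i$ as a sum and difference of repunits. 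Letting $a$ and $b$ denote the number of cut-edges with parent (respectively child) endpoint in $S$, so that $a+b=r$, the number of positive repunit terms equals the number of $S$-components, which is $\mathbf{1}[\text{root}\in S]+b$, while the number of negative terms is exactly $a$. Hence $i$ admits a repunit representation with $\mathbf{1}[\text{root}\in S]+r\leq r+1$ terms, giving $\mathsf{elegance}(i)\leq r+1$ as required.

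For the upper bound $\delta_{B_k}(i)\leq \mathsf{elegance}(i)$, I will construct $S$ directly from an optimal repunit representation $i=\sum_{j=1}^{r}\mathsf{sgn}(a_j)(2^{|a_j|}-1)$, essentially reversing the dissection above. Process the terms greedily in decreasing order of $|a_j|$: for each positive term claim a fresh disjoint complete binary subtree of $B_k$ of the matching size and add its vertices to $S$, and for each negative term reserve a complete binary subtree of the matching size sitting inside some previously-claimed positive region and remove its vertices from $S$. Each such move cuts exactly one edge of $B_k$ (the edge from the affected subtree to its parent), so $\delta(S)\leq r$, and the arithmetic of the representation forces $|S|=i$ by construction. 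If an optimal representation happens to use a summand with $|a_j|=k+1$, that corresponds to initializing $S=V(B_k)$ (the ``positive'' top-level claim contributes no cut edge since the root has no parent) and only processing the remaining $r-1$ terms as cuts, yielding $\delta(S)\leq r-1$, which is even better than needed.

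The main obstacle is the feasibility of the greedy scheduling: at each step the next positive or negative subtree must actually embed disjointly inside the available portion of $B_k$. I would handle this via a packing invariant that exploits the recursive self-similarity of complete binary trees, namely that the unallocated portion of any claimed region is itself a disjoint union of smaller complete binary subtrees, together with the standing hypothesis $i\leq 2^k-1$, which guarantees the total volume demanded never exceeds half of $B_k$ and leaves ample room to nest negative subtrees inside positive ones. Carrying this packing through inductively and combining with the lower bound yields the sandwich $\mathsf{elegance}(i)-1\leq \delta_{B_k}(i)\leq \mathsf{elegance}(i)$.
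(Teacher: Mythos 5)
Your proposal is correct and takes essentially the same route as the paper in both directions: your lower bound extracts a signed repunit representation from a minimum cut exactly as the paper does (the paper records one $\pm(2^{k_1+1}-1)$ term per cut edge, oriented by parent--child direction, plus possibly one extra term $k+1$ for the whole tree, which is precisely your $\mathbf{1}[\mathrm{root}\in S]$ accounting organized by components instead of by edges), and your upper bound builds the cut from an optimal representation by nesting complete binary subtrees so that each term contributes exactly one cut edge, just as the paper does. The only real difference is how the packing feasibility is discharged: the paper first normalizes the optimal representation (WLOG all terms distinct in absolute value and the largest at most $2^k-1$, via explicit rewriting rules such as $(2^{k_1}-1)+(2^{k_1}-1)=(2^{k_1+1}-1)-1$), after which nesting in decreasing size order is immediate, whereas you keep the representation as-is and appeal to a greedy packing invariant (and handle a possible $(k+1)$-sized term by initializing $S=V(B_k)$, which is a neat variant the paper avoids by rewriting) --- note only that you should also rule out or rewrite terms with $|a_j|>k+1$, which the paper's normalization handles and your sketch leaves implicit.
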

\begin{proof}
    Consider \emph{any} $(i, n - i)$-cut in $B_k$, say $(S, V \setminus S)$, with $|S| = i$, and suppose there are $m'$ edges going across the cut. We will construct a repunit representation of $i$ with at most $m' + 1$ terms. Suppose the tree $B_k$ is rooted at the node $r$, and direct each edge from parent node to child node. Initialize a sequence $\vec{v}$ to be empty, and take any edge $e$ going across the cut $(S, V \setminus S)$. Observe that $e$ is either directed from $S$ to $V \setminus S$ or the other way round. This edge must have a complete binary subtree $B_{k_1}$ on one side in $B_k$: specifically, the subtree rooted at the child of $e$. If the edge $e$ is directed from $V \setminus S$ to $S$, then append the term $k_1 + 1$ to $\vec{v}$ (this corresponds to adding a repunit), and if $e$ is directed from $S$ to $V \setminus S$, append the term $-(k_1 + 1)$ (corresponding to subtracting a repunit). Once this is done for all edges $e$ going across the cut, we end up with a finite sequence $\vec{v}$. It is now easy to check that either this sequence $\vec{v}$, or the sequence $\vec{v}$ appended with the term $k + 1$, is a valid repunit representation of $i$. It follows that if the cut had been the minimum one, we would have a repunit representation of $i$ with at most $\delta_{B_k}(i) + 1$ terms. Therefore, $\mathsf{elegance}(i) \leq \delta_{B_k}(i) + 1$.

    Conversely, consider any optimal valid repunit representation of $i$ (which, therefore, has $\mathsf{elegance}(i)$ terms). Note that we can assume WLOG that all the terms are distinct in absolute value. This is because adding and subtracting the same term gives us a suboptimal representation, whereas $(2^{k_1} - 1) + (2^{k_1} - 1) = 2^{k_1 + 1} - 1 - 1$, so we can replace two additive repunits of the same length by two other unequal-length repunits without changing the result. We claim that we can assume WLOG that the largest repunit in this representation is at most $2^k - 1$. Otherwise, if it is $2^{k + 1} - 1$ or larger, then note that the next most significant repunit needs to be $-2^k + 1$, as otherwise the distinctness assumption gives us $i \geq (2^{k+1} - 1) - \sum_{j = 1}^{k-1}(2^j - 1) \geq 2^k$, contradiction. We can replace these two terms using $(2^{k+1} - 1) - (2^k - 1) = (2^k - 1) + 1$, which would replace these two terms by two other repunits without changing the value. This would be a valid representation, with all distinct terms unless the original representation also had a $+1$ term in it. We could then replace the $+1 + 1$ by $+3 - 1$, which would again be valid, unless the original had a $+3$ term in it. We could then replace the $+1 + 1 + 3$ by $+7 - 3 + 1$, which would again be valid, unless the original had a $+7$ term in it. We can keep going this way. What is the largest additive term in the original that we can run into in this way? Note that if we get to an additive term of $2^{k-1} - 1$, then we would have $i \geq (2^{k+1} - 1) - (2^k - 1) + (2^{k-1} - 1) - \sum_{j = 1}^{k-2}(2^j - 1) \geq 2^k$, contradiction. So the largest (additive) term can only be $2^{k-2} - 1$, and we would then terminate.
    
    So now we have an optimal repunit representation of $i$ with all distinct terms in absolute value, with the largest term being at most $2^k - 1$. We now claim that this repunit representation gives rise to an $(i, n - i)$ cut in $B_k$. We can just take complete binary subtrees of the sizes determined by the terms of the repunit representation, and include or exclude them on one side of the cut (according to the sign of the relevant term). The edges going across the cut will exactly be the edges to the roots of these subtrees, and the number of these edges will be exactly the number of terms in the original representation. This is \emph{some} $(i, n - i)$-cut of $B_k$, and so the smallest one has at most as many edges going across it as this one. It follows that $\delta_{B_k}(i) \leq \mathsf{elegance}(i)$.
\end{proof}

We note that if $i \ll n$, then in fact $\delta_{B_k}(i) = \mathsf{elegance}(i)$. Therefore, $\mathsf{elegance}(i)$ actually characterizes the size of the minimum $(i, n - i)$ cut in any sufficiently large binary tree.

Consider a value-agnostic algorithm for complete binary trees. Such an algorithm would need to assign the house values in any instance in some fixed order $(v_1, \ldots, v_n)$ to the vertices of $B_k$. 
The following proposition shows that doing this cannot simultaneously achieve the optimal cut on all smallest subintervals, and this leads to a lower bound on the approximability.

\begin{restatable}{proposition}{binarylowerbound}\label{prop:traversalimpossible}
    There is no value-agnostic algorithm for complete binary trees that attains an approximation better than $(5/3) \approx 1.67$.
\end{restatable}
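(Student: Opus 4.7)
My plan has two stages. First, I would establish that any value-agnostic algorithm on complete binary trees is equivalent to a choice, for each $B_k$, of a layout $\sigma: V(B_k) \to [n]$ assigning each vertex the rank of the value it will receive. Writing $c_i(\sigma)$ for the cut size of the length-$i$ prefix of $\sigma$, i.e., $c_i(\sigma) = \delta_{B_k}(\{v : \sigma(v) \le i\})$, the envy of the induced allocation on an instance $(B_k, H)$ telescopes as $\sum_i c_i(\sigma)(h_{i+1}-h_i)$, and the algorithm's approximation ratio equals $\max_i c_i(\sigma)/\delta_{B_k}(i)$ exactly. The upper direction follows from $\sum_i c_i(\sigma)(h_{i+1}-h_i) \le \max_i (c_i(\sigma)/\delta_{B_k}(i)) \cdot \sum_i \delta_{B_k}(i)(h_{i+1}-h_i) \le \max_i (c_i(\sigma)/\delta_{B_k}(i)) \cdot \OPT$; the matching lower direction follows from an adversarial instance where all the weight $h_{i+1}-h_i$ concentrates at the single index $i^*$ achieving the maximum, so $\OPT = \delta_{B_k}(i^*)$ exactly.

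Thus the proposition reduces to exhibiting some $B_k$ for which every layout $\sigma$ has $\max_i c_i(\sigma)/\delta_{B_k}(i) \ge 5/3$. For this, I would pick a specific $B_k$ and identify a small set of ``critical'' prefix sizes at which the cut-minimizing subsets have restricted, mutually incompatible structural forms. Using Proposition~\ref{prop:elegance} to classify cut-optimal subsets via repunit decompositions---and exploiting the observation (hinted at in the text) that $\delta_{B_k}(i)$ can be strictly less than $\mathsf{elegance}(i)$---I would look for two nearby positions $i_1 < i_2$ such that every size-$i_1$ subset achieving cut $\delta_{B_k}(i_1)$ contains a vertex (typically the root or a critical high-degree internal node) that cannot appear in any size-$i_2$ subset achieving cut $\delta_{B_k}(i_2)$. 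This incompatibility forces any layout to overshoot at one of the two positions, and a careful quantitative accounting of the overshoot should yield the $5/3$ bound.

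\textbf{Main obstacle.} Pinning down the positions $i_1, i_2$ and the graph $B_k$ so that the overshoot is tightly $5/3$ rather than some weaker constant. A direct argument comparing just two consecutive positions appears to yield only ratio $3/2$; reaching $5/3$ likely requires either aggregating over several conflicting positions simultaneously, or finding a single position with $\delta_{B_k}(i) = 3$ at which every layout is forced to $c_i(\sigma) \ge 5$ via a case analysis that branches on the placement of the root and of high-degree internal nodes in the layout.
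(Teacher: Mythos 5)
Your stage-1 reduction is sound and matches the paper's implicit framework: a value-agnostic algorithm on $B_k$ fixes a layout $\sigma$, its envy telescopes as $\sum_i c_i(\sigma)(h_{i+1}-h_i)$, and two-valued (or slightly perturbed) adversarial instances concentrating all weight at one index show its approximation ratio is exactly $\max_i c_i(\sigma)/\delta_{B_k}(i)$. Your stage-2 strategy --- two prefix sizes whose cut-optimal sets are structurally incompatible through the root of a complete subtree --- is also exactly the mechanism the paper uses, instantiated at $i_1 = 89$ and $i_2 = 94$ in any $B_k$ with $n \gg 100$: since $\mathsf{elegance}(89)=3$ with \emph{unique} minimal representation $89 = 127 - 31 - 7$, every size-$89$ set with cut $3$ must contain the root of a $127$-vertex complete subtree, while $\mathsf{elegance}(94)=2$ with unique representation $94 = 63 + 31$ forces every size-$94$ set with cut $2$ to be a disjoint union of a $63$- and a $31$-subtree, which contains no root of any $127$-subtree; as the size-$89$ prefix sits inside the size-$94$ prefix, no layout is optimal at both. (One aside: for $i \ll n$ one has $\delta_{B_k}(i) = \mathsf{elegance}(i)$ exactly, so the strict inequality $\delta_{B_k}(i) < \mathsf{elegance}(i)$ you hoped to exploit plays no role in the paper's argument.)

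The genuine gap is precisely the step you flagged yourself. With only the incompatibility, a layout wrong at $i=89$ is forced merely to $c \geq 4$ (ratio $4/3$) and one wrong at $i=94$ to $c \geq 3$ (ratio $3/2$), so the guaranteed bound is only $\min(4/3, 3/2) = 4/3$ --- even weaker than the $3/2$ you mention. The missing idea is a \emph{parity invariant}: every repunit $2^j - 1$ is odd, and for $i \ll n$ any $(i, n-i)$-cut of size $m'$ decomposes the prefix as a signed combination of exactly $m'$ complete subtrees (including the whole-tree term would force a huge cut, so it is excluded in the small-cut regime), whence $m' \equiv i \pmod 2$. Thus at $i = 89$ (odd) cut sizes are odd, so suboptimality means $c \geq 5$ and ratio $5/3$; at $i = 94$ (even) cut sizes are even, so suboptimality means $c \geq 4$ and ratio $2$; every layout fails at one of the two, giving the bound $\min(5/3, 2) = 5/3$. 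Note also that your fallback of ``a single position with $\delta_{B_k}(i)=3$ at which every layout is forced to $c_i(\sigma) \geq 5$'' cannot work in isolation, since at any single position some layout realizes the minimum cut; it is the two-position conflict \emph{combined with} the parity-forced jump of $2$ that yields $5/3$.
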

\begin{proof}
    Take a large enough binary tree $B_k$ with $n \gg 100$ vertices, and consider the numbers 89 and 94. Note that $\mathsf{elegance}(89) = 3$, as $89 = 127 - 31 - 7$, and $\mathsf{elegance}(94) = 2$, as $94 = 63 + 31$. Furthermore, these are unique minimum-length repunit representations. We claim that no layout $\sigma = (v_1, \ldots, v_n)$ would attain $\delta_{B_k}(\{v_1, \ldots, v_{89}\}) = 3$ and $\delta_{B_k}(\{v_1, \ldots, v_{94}\}) = 2$ simultaneously. Indeed, if a layout $\sigma$ satisfies the first condition, then the root of a subtree with $127$ must receive one of the lowest $89$ values. However, if $\sigma$ also satisfies the second condition, then the $94$ lowest values fill up exactly two complete binary subtrees of size $63$ and $31$, and so the root of any subtree of size $127$ could not have have any of these values. This is a contradiction, and therefore, any value-agnostic algorithm for this complete binary tree needs to choose at most one of these two options. However, now consider two instances, $(B_k, H_1)$ and $(B_k, H_2)$, where $H_1$ consists of $89$ values of $0$ and $n - 89$ values of $1$, whereas $H_2$ consists of $94$ values of $0$ and $n - 94$ values of $1$. The optimal envy on $(B_k, H_1)$ is $\mathsf{elegance}(89) = 3$, whereas the optimal envy on $(B_k, H_2)$ is $\mathsf{elegance}(94) = 2$. A value-agnostic algorithm will yield a sub-optimal result on at least one of these two instances. If it is wrong on $(B_k, H_1)$, it has to have at least $5$ edges spanning the only nontrivial smallest subinterval (since it needs an odd number crossing the cut $\delta_{B_k}(\{v_1, \ldots, v_{89}\})$, as any repunit representation of $89$ needs an odd number of terms, by parity) and it will be off by a factor of at least $5/3 \approx 1.67$ on this instance. If it is wrong on $(B_k, H_2)$, it has to have at least $4$ edges crossing the only nontrivial smallest subinterval (since it needs an even number crossing the cut $\delta_{B_k}(\{v_1, \ldots, v_{94}\})$, again by parity) and it will be off by a factor of at least $4/2 = 2$ on this instance. Therefore, the approximation ratio has to be at least $1.67$.
\end{proof}

%
%
The counterexample in Proposition \ref{prop:traversalimpossible} and the failure of the global median property (Example \ref{ex:globalrefutation}) may seem to suggest that, even for complete binary trees, \emph{any} constant approximation ratio is unattainable. 
Remarkably, the following result shows that this is not the case: there is a \emph{value-agnostic} algorithm attaining a constant approximation on any complete binary tree. Indeed, ordering the vertices of $B_k$ in the standard in-order traversal and allocating the (sorted) values in that order yields a $3.5$-approximation.

\begin{restatable}{theorem}{inorder}\label{thm:inorder}
    Let $B_k$ be the complete binary tree on $n = 2^{k+1} - 1$ vertices. Then, on any house allocation instance on $B_k$, assigning the houses in increasing order to the vertices of $B_k$ in the standard in-order traversal gives us a total envy at most $3.5$ times the optimal value.
\end{restatable}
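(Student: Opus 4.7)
The plan is to bound $\Envy(\pi_{\text{in-order}}, B_k)$ against the optimum by comparing, for each ``smallest subinterval'' $[h_i, h_{i+1}]$ of the valuation interval, the number of edges crossing it under the in-order allocation to the minimum number of edges any allocation could force across it. Writing $\Delta_i = h_{i+1}-h_i$, every allocation $\pi$ satisfies $\Envy(\pi, B_k) = \sum_{i=1}^{n-1} \Delta_i \cdot c_i(\pi)$, where $c_i(\pi)$ is the size of the cut in $B_k$ between the $i$ vertices receiving the $i$ smallest values and the remaining $n-i$ vertices. Since $c_i(\pi) \ge \delta_{B_k}(i)$ for every $\pi$, we have $\Envy(\pi^\ast, B_k) \ge \sum_i \Delta_i \cdot \delta_{B_k}(i)$; letting $c_i := c_i(\pi_{\text{in-order}})$, it suffices to prove the pointwise bound $c_i \le 3.5\,\delta_{B_k}(i)$ for every $i$.

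The first step is to derive a closed form for $c_i$. Using the recursive structure of the in-order traversal---left subtree at positions $1,\dots,2^k-1$, root at $2^k$, right subtree at $2^k+1,\dots,2^{k+1}-1$, iterated within each subtree---one checks that the parent of the node at position $p \neq 2^k$ is located at position $p \pm 2^{v(p)}$, where $v(p)$ is the largest integer with $2^{v(p)} \mid p$, with the sign depending on whether $p$ is a left or right child. A depth-wise count of which tree edges straddle index $i$ then yields
\[
c_i \;=\; \#\{\,v \in \{0,\dots,k-1\} : b_v \neq b_{v+1}\,\},
\]
where $b_v$ is the $v$-th bit of $i$. Equivalently, $c_i$ is the number of bit transitions in the binary expansion of $i$, which equals $2 r(i) - [\,i \text{ is odd}\,]$, where $r(i)$ denotes the number of maximal runs of $1$'s in this binary expansion.

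The heart of the argument is the combinatorial lemma $r(i) \le \mathsf{elegance}(i) + 1$, which yields $c_i \le 2\,\mathsf{elegance}(i) + 1$. I would prove this by taking any optimum signed repunit representation $i = \sum_{j=1}^{\ell} \epsilon_j (2^{a_j}-1)$ with $\ell = \mathsf{elegance}(i)$ and tracking how the binary expansion evolves as the terms are added one at a time: a single $\pm(2^{a_j}-1)$ flips a contiguous block of low-order bits plus, via carry propagation, a prefix of higher-order bits, and a case analysis on the sign and the bit pattern of the current partial sum shows that the total number of $1$-runs grows by at most one per step on amortized average. The ``$+1$'' is tight, as witnessed by $i = 85 = 63 + 15 + 7$, where $r(85) = 4$ while $\mathsf{elegance}(85) = 3$. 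Granted the lemma, Proposition \ref{prop:elegance} gives $\mathsf{elegance}(i) \le \delta_{B_k}(i) + 1$ and connectivity gives $\delta_{B_k}(i) \ge 1$, so for any $i$ with $\delta_{B_k}(i) \ge 2$ we obtain
\[
\frac{c_i}{\delta_{B_k}(i)} \;\le\; \frac{2\,\mathsf{elegance}(i) + 1}{\delta_{B_k}(i)} \;\le\; \frac{2(\delta_{B_k}(i)+1)+1}{\delta_{B_k}(i)} \;=\; 2 + \frac{3}{\delta_{B_k}(i)} \;\le\; \tfrac{7}{2},
\]
while for $\delta_{B_k}(i) = 1$ the index $i$ must be a repunit and $c_i = 1$. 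The main obstacle is the combinatorial lemma: carries from repunit operations can create or destroy $1$-runs in ways that depend subtly on the existing bit pattern, so establishing the amortized ``at most $+1$ per step'' bound requires either a well-chosen potential function or a direct structural argument on the final representation.
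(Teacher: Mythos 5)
Your overall framework is exactly the paper's: reduce to the pointwise bound $c_i \le 3.5\,\delta_{B_k}(i)$ on each smallest subinterval, identify $c_i$ under the in-order traversal with the number of bit transitions (the paper's $\mathsf{runs}(i) = 2r(i) - [\,i\text{ odd}\,]$), and invoke Proposition \ref{prop:elegance}. The genuine gap is your central lemma $r(i) \le \mathsf{elegance}(i) + 1$: it is \emph{false}, and you correctly flagged it as the main obstacle but the obstacle cannot be overcome. Counterexample: pick exponents $a_1 > \cdots > a_{10} \ge 16$, pairwise at least $10$ apart, and set
\[
i \;=\; \sum_{j=1}^{10} 2^{a_j} + 53 \;=\; \sum_{j=1}^{10}\bigl(2^{a_j} - 1\bigr) + \bigl(2^6 - 1\bigr),
\]
so $\mathsf{elegance}(i) \le 11$. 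For the lower bound, expand any $\ell$-term signed repunit representation as $i = \sum_j \epsilon_j 2^{b_j} - s$ with $s = \sum_j \epsilon_j$, $|s| \le \ell$; this is a signed power-of-two representation of weight at most $\ell + w(s)$, where $w$ is the minimal signed (NAF) weight. Since the blocks of $i$ are well separated, $w(i) = 10 + w(53) = 14$ (an exhaustive check shows $53$ is not expressible as $\pm 2^a \pm 2^b \pm 2^c$), while every $|s| \le 10$ has $w(s) \le 2$; so $\ell \le 10$ would force $w(i) \le 12 < 14$. Hence $\mathsf{elegance}(i) = 11$, yet $53 = 110101_2$ gives $r(i) = 10 + 3 = 13 = \mathsf{elegance}(i) + 2$, and $c_i = 2 \cdot 13 - 1 = 25 > 23 = 2\,\mathsf{elegance}(i) + 1$, so even your target inequality on $c_i$ fails at this $i$. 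Taking more plus-terms (so that $s$ has many runs, e.g.\ $\ell = 85$) pushes $r(i) - \mathsf{elegance}(i)$ arbitrarily high: the true relationship is $\mathsf{runs}(i) \le 2\,\mathsf{elegance}(i) + \Theta(\log \mathsf{elegance}(i))$, exactly the shape of the paper's Remark \ref{rem:optimizedinorderproof}, and the logarithmic term is necessary. Your amortized ``$+1$ per step'' accounting was already doomed by your own tightness example: in every ordering of the terms of $85 = 63 + 15 + 7$, some single addition creates two new $1$-runs.

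Two further points. First, even granting the lemma, your chain has an off-by-one for even $i$: $c_i = 2r(i) - [\,i\text{ odd}\,] \le 2\,\mathsf{elegance}(i) + 2$, not $2\,\mathsf{elegance}(i) + 1$, and at $\delta_{B_k}(i) = 2$ this gives ratio $4 > 3.5$. (This particular hole is repairable: every nonzero repunit is odd, so $\mathsf{elegance}(i) \equiv i \pmod 2$, which rules out even $i$ with $\mathsf{elegance}(i) = \delta_{B_k}(i) + 1 = 3$ --- but you would need to say so.) Second, for contrast: the paper sidesteps all of this with a deliberately weaker per-interval bound, $\mathsf{runs}(i) \le 3\,\mathsf{elegance}(i) - 2$, proved by the simple observation that adding or subtracting one repunit amounts to one power-of-two operation plus one $\pm 1$ operation and hence creates at most three new runs; combined with $\mathsf{elegance}(i) \le \delta_{B_k}(i) + 1$ and the fact that $\delta_{B_k}(i) = 1$ iff $i$ is a repunit (where in-order is exactly optimal), this already yields $3.5\,\delta_{B_k}(i)$ in the case $\delta_{B_k}(i) \ge 2$. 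If you want to salvage your sharper route, you would need a lemma of the form $r(i) \le \mathsf{elegance}(i) + O(\log \mathsf{elegance}(i))$, which is true but gives a worse constant at the small values of $\delta_{B_k}(i)$ that dominate the approximation ratio.
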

\begin{proof}
    Suppose we allocate the houses in sorted order to the vertices of $B_k$ in the standard in-order traversal. For any $i \leq 2^k - 1$, consider the number of edges of $B_k$ spanning the subinterval $(h_i, h_{i+1})$ of the valuation interval. It can be shown that under the in-order traversal, the number of edges spanning this interval is exactly $\mathsf{runs}(i)$, the number of runs of contiguous $0$s or $1$s in the binary representation of $i$, by a simple argument\footnote{ For instance, the two quantities follow the same recurrence relation: $f(2^k + i) = f(2^k - i + 1) + 1$ for $k \geq 0$ and $0 < i \leq 2^k$, with the same base cases.}.
    
    Our main claim will be to show that for all $i$, $\mathsf{runs}(i) \leq 3\cdot\mathsf{elegance}(i) - 2$. Let $\mathsf{elegance}(i) = r$, and consider an optimal repunit representation $(a_1, \ldots, a_r)$ of $i$. WLOG suppose $|a_1| \geq \ldots \geq |a_r|$. Then $\mathsf{sgn}(a_1) = 1$. We will start with the binary representation $11\ldots 1$ of $2^{a_1} - 1$, which contains a single run of exactly $a_1$ $1$s. We will then add or subtract all the other terms $a_2, \ldots, a_r$, performing all our operations in binary. We will carefully keep track of how each of these operations can affect the number of runs.

    Consider an arbitrary binary integer, with $t$ runs, and consider adding a repunit to it. Adding such a repunit can be thought of as adding a single power of $2$ (which is a binary integer of the form $10\ldots 0$), and then subtracting a single $1$. When we add the power of $2$, starting from the right, the $0$s do not change the number of runs, until we get to the leading $1$. Observe that adding or subtracting a single $1$ can increase the number of runs by at most $1$. Therefore, at the leading $1$, we can add a new run by a mismatched bit between the $0$ and the $1$, and can also add a new run by adding the $1$ itself. Therefore, adding a power of $2$ can increase the number of runs by at most $2$. After that, subtracting the $1$ adds at most another run, as observed. Therefore, adding a repunit adds at most three runs to the original binary integer. By a symmetric argument, subtracting a repunit (which is equivalent to subtracting a power of $2$, and then adding a $1$) can also increase the number of runs by at most $3$.

    Since we started with $2^{a_1} - 1$, which contained a single run, and then added or subtracted $r - 1$ other repunits, the total number of runs in the final integer is at most $1 + 3(r - 1)$. This immediately implies that $\mathsf{runs}(i) \leq 3\cdot\mathsf{elegance}(i) - 2$.

    Coming back to $B_k$, we have just shown that for $i \leq 2^k - 1$, the number of edges in the in-order traversal spanning the subinterval $(h_i, h_{i+1})$ is at most $3\cdot\mathsf{elegance}(i) - 2$, which by Proposition \ref{prop:elegance} is at most $3\cdot\delta_{B_k}(i) + 1$. We now consider a couple of cases.

    We note that, for $i \leq 2^k - 1$, we have $\delta_{B_k}(i) = 1$ if and only if $i = 2^{k_1} - 1$ for some $k_1 \leq k$. This follows just by observing that every edge in a complete binary tree has a complete binary subtree on one side, and the other side cannot have size less than $2^k$. But in that case, $\mathsf{runs}(i) = \delta_{B_k}(i)$, and so on these subintervals, the in-order traversal only subtends a single edge (and is therefore optimal).

    On the other hand, for $i \leq 2^k - 1$, if $\delta_{B_k}(i) \geq 2$, then $1 \leq (\ffrac{1}{2})\cdot\delta_{B_k}(i)$.

    Therefore, the number of edges over any subinterval $(h_i, h_{i+1})$ in the range $i \leq 2^k - 1$ is at most $3\cdot\delta_{B_k}(i) + (\ffrac{1}{2})\cdot\delta_{B_k}(i) = 3.5\cdot\delta_{B_k}(i)$. By symmetry, this is true over all smallest subintervals of the valuation interval. Therefore, the number of edges passing over every smallest subinterval is at most $3.5$ times the minimum possible number of edges passing over that subinterval, and this yields the desired result.
\end{proof}


\begin{remark}\label{rem:optimizedinorderproof}
    The proof of Theorem \ref{thm:inorder} can be optimized slightly for a more nuanced analysis. As observed in the proof, adding or subtracting repunits can be thought of as adding or subtracting powers of $2$, followed by subtracting or adding off $1$s. Consider performing all operations with the powers of $2$ first, and then finally adding or subtracting the number obtained by the $\pm 1$s. Each power-of-$2$ operation increases the number of runs by at most $2$, and the final additional number is at most $r - 1$, which has $1 + \lfloor\log(r - 1)\rfloor$ bits. Therefore, the total number of runs is actually $2 + 2(r - 1) + \lfloor\log(r - 1)\rfloor$. Note that this number is at most $(7/3)r$ for all $r \geq 18$, and in fact, the ratio of this number to $r$ gets arbitrarily close to $2$ as $r$ gets larger.
\end{remark}

It is instructive to check why this technique does not hold for arbitrary binary trees. Proposition \ref{prop:elegance} does not hold in general for non-complete binary trees. A complete binary tree ensures that there is always a binary subtree of the size given by a repunit representation to include on one side of the cut, but we lose this guarantee for non-complete trees.

We leave it as an open problem to construct either value-agnostic deterministic algorithms that achieve an approximation ratio better than $3.5$, or to obtain any polynomial-time algorithm (which cannot be value-agnostic) to obtain any approximation ratio better than $1.67$ for complete binary trees. We believe there should be an exact algorithm for this very special class of graphs, and hope that this will instigate future research into this problem.
\section{Conclusions}\label{sec:conclusions}

We explored the approximability of {\GHA}, presenting tight approximation algorithms for several classes of connected graphs, to our knowledge the first such results in the area. In particular, we gave polynomial-time algorithms exploiting graph structures to approximate the optimal envy on general graphs, trees, planar graphs, bounded-degree graphs, bounded-degree planar graphs, and bounded-degree trees; for each of these classes, we also gave a matching lower bound. Our algorithms were value-agnostic, i.e., they took into account only the input graph and the ordering among the house values but not the values themselves. We showed that any allocation on a random graph is a $(1 + o(1))$-approximation, and also gave a value-agnostic algorithm to show a $3.5$-approximation on all instances on complete binary trees.


The main question we leave for future work is the complexity of \GHA{} on complete binary trees. We know by the results in Section \ref{sec:completebintrees} that no exact algorithm can be value agnostic, but there seems to be no obvious way of leveraging the values, on even such a structured class of graphs.  
\begin{conjecture}\label{conj:completebintrees}
    {\GHA} is polynomial-time solvable on complete binary trees.
\end{conjecture}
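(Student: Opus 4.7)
The natural approach is a bottom-up dynamic program over the subtrees of $B_k$. Each subtree $T$ of $B_k$ is itself isomorphic to some $B_j$, and one would define $\mathrm{opt}(T, X, x)$ as the minimum envy of the sub-instance $(T, X)$ over allocations that place the value $x \in X$ at the root of $T$. The recurrence splits $X \setminus \{x\}$ into $X_L \uplus X_R$ of equal size, recurses on the two $B_{j-1}$ children (each with a choice of root value $x_L, x_R$), and adds the two local envies $|x - x_L| + |x - x_R|$. If the collection of subproblems that actually need to be visited could be enumerated in polynomial time, this would yield the desired algorithm.

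The obstruction is that there are exponentially many candidate multisets $X$ per subtree. The plan to get polynomial time is to prove a structural lemma of the form: \emph{there exists an optimal allocation $\pi^*$ in which, for every subtree $T$ of $B_k$, the multiset $\{\pi^*(v) : v \in T\}$ is a union of $O(1)$ contiguous intervals of the sorted sequence $(h_1, \ldots, h_n)$}. Such a lemma would cut the admissible subsets $X$ at each subtree down to $n^{O(1)}$, parameterized by a constant number of breakpoints in $[n]$, and the DP above would then run in polynomial time. A sensible first step is to prove this lemma for instances in which all values of $H$ are distinct; there, one might hope that the global median property (the $c=1$ special case) actually does hold, via a direct exchange argument combining the local median property of \citet{canon} with convexity of $|\cdot|$. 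Note that Example~\ref{ex:globalrefutation} crucially uses tied values, so it does not immediately preclude this. Ties would then be handled through a perturbation argument: replace each repeated value $h$ by $h + i\epsilon$ and take $\epsilon \to 0$, arguing that the limiting optimum still admits an $O(1)$-interval description per subtree once a canonical tie-breaking rule is imposed.

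The main obstacle is the structural lemma itself. The exchange argument must carefully propagate local modifications across the tree: swapping two values between disjoint subtrees affects envy along the entire root-to-root path connecting them in $B_k$, and one must show that the net change is strictly negative whenever the interval count in some subtree becomes large. A secondary concern is that the distinct-value case may itself harbor counterexamples analogous to Example~\ref{ex:globalrefutation}. In that event, a fallback is to weaken the lemma to allow $O(\log n)$ intervals per subtree (which is natural because $\delta_{B_k}(S) \le \mathrm{elegance}(|S|) = O(\log n)$ by Proposition~\ref{prop:elegance}, so every reasonable $X$ is already describable by $O(\log n)$ marked edges of $B_k$); this alone would only give a quasi-polynomial DP, so additional leverage would be needed—perhaps parameterizing DP states by the combinatorial \emph{shape} of the cut in $B_k$ isolating $X$, rather than by $X$ directly, and exploiting the high degree of automorphism of $B_k$ to identify equivalent states. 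Combining such a refined parameterization with the structural lemma is the most promising route I see to closing the gap to genuine polynomial time.
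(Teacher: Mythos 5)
First, note that the statement you are proving is posed in the paper as \Cref{conj:completebintrees}, an \emph{open} conjecture: the paper contains no proof of it, and indeed the surrounding discussion (the failure of the global median property in Example~\ref{ex:globalrefutation}, and the $5/3$ lower bound for value-agnostic algorithms in Proposition~\ref{prop:traversalimpossible}) is offered precisely as evidence that no proof is currently known. So there is no paper proof to compare against; the only question is whether your argument is itself a proof. It is not, and you essentially concede this: the entire weight of the argument rests on the structural lemma asserting that some optimal allocation gives every subtree a union of $O(1)$ contiguous intervals of the sorted values, and that lemma is nowhere established. Everything downstream of it (the DP over breakpoint-parameterized states, the polynomial state count, the transition enumeration) is routine \emph{given} the lemma, so the proposal reduces the conjecture to a strictly stronger unproven structural conjecture rather than resolving it. The perturbation step for ties has the same status: to conclude anything about the tied instance you need that the $\epsilon \to 0$ limit of optimal allocations of the perturbed instances is both optimal for the original and still admits the $O(1)$-interval description under your canonical tie-breaking, and neither claim is argued. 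Your fallback is also not a rescue as stated: $O(\log n)$ intervals per subtree gives $n^{O(\log n)}$ states, which is quasi-polynomial, and the proposed ``cut-shape'' parameterization exploiting automorphisms of $B_k$ is only a suggestion, with no argument that inequivalent states collapse to polynomially many. (Minor point: your inequality $\delta_{B_k}(S) \le \mathsf{elegance}(|S|)$ should refer to the minimum cut $\delta_{B_k}(|S|)$; an arbitrary $S$ of a given size can have a much larger cut, so ``every reasonable $X$'' needs to mean $X$ arising from a min-cut-respecting optimal allocation, which again presupposes structure you have not proved.)

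One substantive observation in your favor, worth recording: you are right that Example~\ref{ex:globalrefutation} does not kill the distinct-values version of the structural program, and in fact the example is consistent with an even cleaner lemma than the one you state. The optimal (envy-$5$) allocation in that example is \emph{weakly} separated at every internal node --- each subtree receives a single contiguous interval of the sorted multiset, with violations of strictness occurring only among tied values. If one could prove this weak-separation property (one interval per subtree, ties broken arbitrarily) for all instances, the DP would need only $O(n)$ states per subtree and the conjecture would follow immediately; the paper's counterexample refutes only the strict version. So your plan identifies a genuinely plausible route, and the distinct-values-plus-perturbation decomposition is a sensible way to organize it. But the gap is exactly the lemma: you give no exchange argument, and you correctly flag that propagating a local swap along the root-to-root path between two subtrees changes envy on every edge of that path, which is where any naive convexity argument breaks. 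As submitted, this is a research program with an honest accounting of its obstacles, not a proof of the conjecture.
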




\section*{Acknowledgments}

We wish to thank Justin Payan for many helpful discussions during the early stages of this work. We also thank Paul Seymour for some discussions related to the material in Section \ref{sec:completebintrees}. Rohit Vaish acknowledges support from SERB grant no. CRG/2022/002621 and DST INSPIRE grant no. DST/INSPIRE/04/2020/000107. Andrew McGregor and Rik Sengupta acknowledge support from NSF grant CCF-1934846. 
Hadi Hosseini acknowledges support from NSF IIS grants \#2144413 and \#2107173.

\bibliographystyle{plainnat}
\bibliography{abb,references}

\newpage

\appendix

\section{Proofs from Section \ref{sec:lower}}\label{apdx:lower}

\generalgraphs*
\begin{proof}
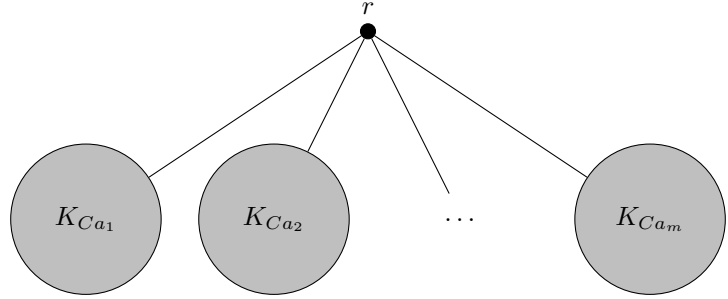
\begin{figure*}[ht]
    \centering
    \begin{tikzpicture}[,
    mycirc/.style={circle,fill=black, draw = black,minimum size=0.2cm,inner sep = 1.2pt},
    mycirc2/.style={circle,fill=white,minimum size=0.75cm,inner sep = 3pt},
    mycirc3/.style={circle,fill=lightgray,draw=black,minimum size=2cm,inner sep = 3pt},
    level 1/.style={sibling distance=25mm},
    level 2/.style={sibling distance=10mm},
    level distance = 2.5cm,
    BC/.style = {decorate,  
                     decoration={calligraphic brace, amplitude = 4mm, mirror},
                     very thick, pen colour={black}
                    }
    ]
    \node[mycirc,label=above:$r$] {} 
        child {node[mycirc3] {$K_{Ca_1}$}}
        child {node[mycirc3] {$K_{Ca_2}$}}
        child {node[mycirc2] {$\dots$}}
        child {node[mycirc3] {$K_{Ca_m}$}};
    \end{tikzpicture}
    \caption{Mapping a \UTP instance to a connected graph.}
    \label{fig:general-reduction-graph}
\end{figure*}
We present a similar reduction from the \UTP Problem. The main difference in construction from Theorem \ref{thm:trees-approx-lower-bound} is that $x_i$ and its children are replaced with a clique of size $Ca_i$. We use the high density of the clique to show that the envy is much higher $(\Omega(C^2))$ in the absence of a valid $3$-partition.  

For some constant $\varepsilon > 0$, assume there is an efficient $O(n^{2- \varepsilon})$ approximation algorithm $\mathsf{ALG}_{\cal G}$ where $\cal G$ is the class of all connected graphs. Assume there is some constant $\gamma$ such that for all instances on connected graphs, $\mathsf{ALG}_{\cal G}$ outputs an allocation with total envy within a multiplicative factor of $\gamma n^{2- \varepsilon}$ to the optimal envy.

Given an instance of \UTP, we construct an instance of {\GHA} as follows: for each value $a_i$ in the multiset $A$, we create a clique of size $Ca_i$. We then connect these $3m$ cliques to a node $r$ (as described in Figure \ref{fig:general-reduction-graph}). Once again, $C$ is a positive integer whose exact value we shall choose later.
The total number of nodes in this graph is $CmT+1$, similar to before. The house values are defined similarly as well: for each $j \in [m]$, there are $CT$ houses valued at $j$ and there is one house valued at $0$.

We show that, with the appropriate $C$, the total envy output by $\mathsf{ALG}_{\cal G}$ for the constructed {\GHA} instance is greater than or equal to $\left \lceil (96\gamma m^4 T^2 + 1)^{\ffrac{1}{\varepsilon}} \right \rceil^2$ if and only if there exists {\em no} valid solution for the original \textsc{3-Partition} instance, i.e.,~the original instance was a NO instance.

$(\Leftarrow)$ Assume there is a valid $3$-partition for the original instance. This case follows the exact same way as Theorem \ref{thm:trees-approx-lower-bound}. The minimum envy is upper bounded by $3m^2$ and the envy output by $\mathsf{ALG}_{\cal G}$ is upper bounded by $3m^2\gamma(CmT+1)^{2-\varepsilon}$.

$(\Rightarrow)$ Assume there is no valid $3$-partition in the original instance. We will show that any allocation must have an envy of at least $\ffrac{C^2}{4}$. 

If, for some allocation $\pi$, there exists a clique where no value $j \in [m] \cup \{0\}$ is allocated to more than half of its nodes, then this lower bound trivially holds. Since the clique has a size of at least $C$, each node in the clique envies at least $C/2$ neighbors by at least $1$. 

Assume that for all cliques, there is some value allocated to at least half the nodes in the clique; we refer to this value as a {\em majority value} of the clique. Let clique $i$ correspond to the clique $K_{Ca_i}$. For notational convenience, assume WLOG that cliques $\{1, \dots, \ell_1\}$ have majority value $1$, $\{{\ell_1 +1}, \dots, {\ell_1 + \ell_2}\}$ have majority value $2$ and so on. Using analysis similar to \Cref{thm:trees-approx-lower-bound}, there exists at least one $j \in [m]$ such that $\sum_{h \in [\ell_j]}a_{\ell_1 + \dots + \ell_{j-1} +h} > T$. Assume again for notational convenience that $j = 1$. Since all these values are integers, we can restate the equation above as $\sum_{h \in [\ell_1]}a_{h} \ge T+ 1$. This implies $\sum_{h \in [\ell_1]} Ca_{h} \ge CT + C$. 

Coming back to our allocation $\pi$, $\sum_{h \in [\ell_1]} Ca_{h} \ge CT + C$ implies that there are at least $C$ nodes in cliques $\{1,2, \dots, \ell_1\}$ which are not allocated a value of $1$. Since $1$ is a majority value in each of these cliques, the envy that each of the $C$ nodes in $S$ will have towards the nodes with value $1$ is at least $C/2$. This implies that the total envy of allocation $\pi$ is at least $C^2/2$. 

We set $C = 2\left \lceil (96\gamma m^4 T^2 + 1)^{\ffrac{1}{\varepsilon}} \right \rceil$ to complete the reduction. When there is no valid $3$-partition, the minimum total envy (and therefore, the envy output by $\mathsf{ALG}_{\cal G}$) is at least $\ffrac{C^2}{4} = \left \lceil (96\gamma m^4 T^2 + 1)^{\ffrac{1}{\varepsilon}} \right \rceil^2$. However, when there is a valid $3$-partition, the envy output by $\mathsf{ALG}_{\cal G}$ is strictly upper bounded by:
\begin{align*}
    3m^2\gamma(CmT+1)^{2-\varepsilon} \le 6m^4 \gamma T^2C^{2-\varepsilon} \le 24m^4 \gamma T^2 \left ( \left \lceil (96\gamma m^4 T^2 + 1)^{\ffrac{1}{\varepsilon}} \right \rceil \right )^{2 - \varepsilon} < \left \lceil (96\gamma m^4 T^2 + 1)^{\ffrac{1}{\varepsilon}} \right \rceil^2.
\end{align*}
This concludes the proof.
\end{proof}



\bdp*
\begin{proof}
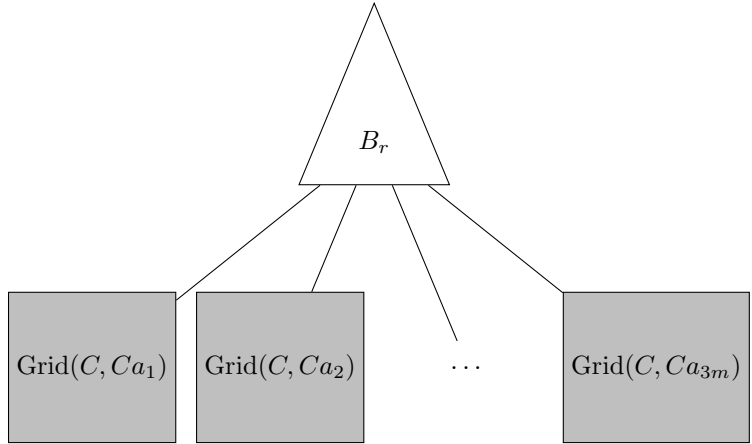
\begin{figure*}[ht]
    \centering
    \begin{tikzpicture}[,
    mycirc/.style={circle,fill=white, draw = black,minimum size=0.75cm,inner sep = 3pt},
    mycirc2/.style={circle,fill=white,minimum size=0.75cm,inner sep = 3pt},
    mycirc3/.style={rectangle,fill=lightgray,draw=black,minimum size=2cm,inner sep = 3pt},
    mycirc4/.style={isosceles triangle, shape border rotate = 90, fill=white,draw=black,minimum size=2cm,inner sep = 3pt},
    level 1/.style={sibling distance=25mm},
    level 2/.style={sibling distance=10mm},
    level distance = 3cm,
    BC/.style = {decorate,  
                     decoration={calligraphic brace, amplitude = 4mm, mirror},
                     very thick, pen colour={black}
                    }
    ]
    \node[mycirc4] {$B_r$} 
        child {node[mycirc3] {$\Grid(C, Ca_1)$}}
        child {node[mycirc3] {$\Grid(C, Ca_2)$}}
        child {node[mycirc2] {$\dots$}}
        child {node[mycirc3] {$\Grid(C, Ca_{3m})$}};

    \end{tikzpicture}
    \caption{Mapping a \UTP instance to a bounded-degree planar graph.}
    \label{fig:bdp-reduction-graph}
\end{figure*}
We present a similar reduction from the \UTP Problem. 
For some constant $\varepsilon > 0$, assume there is an efficient $O(n^{0.5 - \varepsilon})$ approximation algorithm $\mathsf{ALG}_{\cal G}$ where $\cal G$ corresponds to the class of all planar graphs with max degree at most $5$. In other words, for all instances on these graphs, $\mathsf{ALG}_{\cal G}$ outputs an allocation with total envy within a multiplicative factor of $\gamma n^{0.5- \varepsilon}$ to the optimal envy, where $\gamma$ is some fixed constant.

Given an instance of \UTP, we construct an instance of {\GHA} as follows: the graph $G$ has $3m$ grids, each grid $i \in [3m]$ has $C$ rows and $C a_i$ columns for some $C$ we will define later. Each of the grids has a single edge to a unique leaf of a binary tree $B_r$ (see Figure \ref{fig:bdp-reduction-graph}). 

The binary tree $B_r$ is constructed such that it has at most $12m$ nodes. It is easy to see that the smallest balanced binary tree with at least $3m$ leaves satisfies this constraint. This construction is trivially a planar graph with degree at most $5$. There are $C^2mT + |B_r|$ nodes in this graph. The house values are defined roughly the same way as before: for each $j \in [m]$, there are $C^2T$ houses valued at $j$ and there are $|B_r|$ houses valued at $0$. As long as $C$ and $|B_r|$ are polynomial, this is a polynomial-time reduction.

We show that, with an appropriate choice of $C$, the minimum total envy output by $\mathsf{ALG}_{\cal G}$ for the house allocation instance is greater than or equal to $\left \lceil (288\gamma m^3 T + 1)^{\ffrac{1}{(2\varepsilon)}} \right \rceil$ if and only if there exists {\em no} valid partition for the original \textsc{3-Partition} instance i.e. the original instance was a NO instance.

$(\Leftarrow)$ Assume there is a valid $3$-partition for the original instance. This case follows the exact same way as Theorem \ref{thm:trees-approx-lower-bound}. The minimum envy is upper bounded by $3m^2$ and the envy output by $\mathsf{ALG}_{\cal G}$ is upper bounded by $3m^2\gamma(C^2mT+|B_r|)^{0.5-\varepsilon} \le 3m^2\gamma(C^2mT+12m)^{0.5-\varepsilon}$.

$(\Rightarrow)$ Assume there is no valid $3$-partition in the original instance. We will show that any allocation must have an envy of at least $\ffrac{C}{4}$. 

Similar to Theorem \ref{thm:general-approx-lower-bound}, we refer to a value $j \in [m] \cup \{0\}$ as a majority value of a grid $\Grid(C, Ca_i)$ if this value has been allocated to at least $\ffrac{C^2a_i}{2}$ nodes in the grid. 

For some allocation $\pi$, assume there exists a grid, $\Grid(C, Ca_1)$ with no majority value. Then this grid has nodes with at least $3$ different values. Let $Q_j$ be the set of nodes with value $j$ in this grid under allocation $\pi$. Note that the number of edges from $Q_j$ to nodes with a different value is at least $\min\{\sqrt{|Q_j|}, \ffrac{C}{2}\}$ (Lemma \ref{lem:grid-graph-property}). Therefore the total number of edges between nodes with different value is lower bounded by 
\[    \frac{\sum_{j=0}^k \min\{\sqrt{|Q_j|}, \ffrac{C}{2}\}}{2} \ge
    \frac{\min \{\sum_{j=0}^k \sqrt{|Q_j|}, \ffrac{C}{2}\}}{2} 
    \ge \frac{\min\{ \sqrt{\sum_{j=0}^k |Q_j|}, \ffrac{C}{2}\}}{2} 
    \ge \ffrac{C}{4}.
\]
Note that we divide by $2$ since each edge gets counted at most twice. Since each of these edges has an envy of $1$, we are done.

From here on, assume that each grid has a majority value. Note that a grid can only have one majority value.
Let grid $i$ correspond to $\Grid(C,Ca_i)$. For notational convenience, assume WLOG that grids $\{1, \dots, \ell_1\}$ have majority value $1$, $\{{\ell_1 +1}, \dots, {\ell_1 + \ell_2}\}$ have majority value $2$ and so on. Using analysis similar to \Cref{thm:trees-approx-lower-bound}, there exists one $j \in [m]$ such that $\sum_{h \in [\ell_j]}a_{\ell_1 + \dots + \ell_{j-1} +h} > T$. Assume again for notational convenience that $j = 1$. Since all these values are integers, we can restate it as $\sum_{h \in [\ell_1]}a_{h} \ge T + 1$. This implies $\sum_{h \in [\ell_1]} C^2a_{h} \ge C^2T + C^2$. 

Coming to our allocation $\pi$, $\sum_{h \in [\ell_1]} C^2a_{h} \ge C^2T + C^2$ implies that there are at least $C^2$ nodes in grids $\{1,2, \dots, \ell_1\}$ which are not allocated a value of $1$. Let $A_i$ correspond to the set of nodes in grid $i$ allocated a value other than $1$. Note that $\sum_{i \in [\ell_1]} |A_i| \ge C^2$. We can lower bound the number of edges from $A_i$ to nodes with value $1$ using Lemma \ref{lem:grid-graph-property} as follows:
\[
    \sum_{i \in [\ell_1]} \min\{\sqrt{|A_i|}, \ffrac{C}{2}\} \ge
    \min\{\sum_{i \in [\ell_1]} \sqrt{|A_i|}, \ffrac{C}{2}\} 
    \ge \min\{ \sqrt{\sum_{i \in [\ell_1]}|A_i|}, \ffrac{C}{2}\} 
    \ge \ffrac{C}{2} \ .
\] 
Each of these edges have envy at least $1$.

We set $C = 4\left \lceil (288\gamma m^3 T + 1)^{\ffrac{1}{(2\varepsilon)}} \right \rceil$ to complete the reduction. When there is no valid $3$-partition, the total minimum envy (and therefore, the envy output by $\mathsf{ALG}_{\cal G}$) is at least $\ffrac{C}{4} = \left \lceil (288\gamma m^3 T + 1)^{\ffrac{1}{(2\varepsilon)}} \right \rceil^2$. However, when there is a valid $3$-partition, the envy output by $\mathsf{ALG}_{\cal G}$ is strictly upper bounded by:
\begin{align*}
    3m^2\gamma(C^2mT+12m)^{0.5-\varepsilon} \le 36m^3 \gamma TC^{1-2\varepsilon} <& 144m^3 \gamma T \left ( \left \lceil (288\gamma m^3 T + 1)^{\ffrac{1}{(2\varepsilon)}} \right \rceil \right )^{1 - 2\varepsilon} \\ <& \left \lceil (288\gamma m^3 T + 1)^{\ffrac{1}{(2\varepsilon)}} \right \rceil.\qedhere
\end{align*}
\end{proof}

\boundeddeg*
\begin{proof}

We, again, present a reduction from the \UTP Problem. 
For some constant $\varepsilon > 0$, assume there is an efficient $O(n^{1 - \varepsilon})$ approximation algorithm $\mathsf{ALG}_{\cal G}$ where $\cal G$ is the set of all connected graphs with degree at most $4$. For all instances, $\mathsf{ALG}_{\cal G}$ outputs an allocation with total envy within a multiplicative factor of $\gamma n^{1- \varepsilon}$ to the optimal envy for some constant $\gamma$.

Given an instance of \UTP, we construct $3m$ disjoint graphs as follows: for each $a_i$ in the multiset $A$, we construct a 3-regular Ramanujan bipartite multigraph of size $Ca_i$. This can be done in polynomial time and is well-defined when $Ca_i$ is an even integer greater than or equal to $6$ \citep{cohen2016ramanujan}. We will ensure this by setting $C$ appropriately. Note that these $3m$ multigraphs still may have multiple edges between the same pair of nodes. To convert them into simple graphs, we simply remove any repeated edges. For each $a_i \in A$, we refer to this graph as the $R'$-graph $i$ (or simply $R'(3, Ca_i)$). These graphs are neither Ramanujan graphs, nor are they $3$-regular. However, as we will crucially show, they still have the expansion properties we require. This result will use the Cheeger's inequality from \citet[Section 9.2]{alon04probabilistic} applied to Ramanujan graphs \citep{lubotzky1988ramanujan}. While the result in \citet[Section 9.2]{alon04probabilistic} is for simple graphs, the exact same proof can be extended to multigraphs; so we present it without proof.

\begin{lemma}[Cheeger's Inequality]\label{lem:cheegers-inequality-apdx}
Let $G'$ be a $d$-regular Ramanujan graph (or multigraph) defined on a set of $V$ nodes. Then,
\begin{align*}
    \min_{S \subseteq V: 0 < |S| \le |V|/2} \frac{\delta_{G'}(S)}{|S|} \ge \frac12(d - 2\sqrt{d-1}).
\end{align*}
\end{lemma}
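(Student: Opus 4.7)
The plan is to deduce this as a combination of the standard (easy) direction of the discrete Cheeger inequality with the defining spectral property of Ramanujan graphs. Set $n = |V|$ and let $A$ denote the adjacency matrix of $G'$ (in the multigraph case, $A_{uv}$ counts multi-edges). Since $G'$ is $d$-regular, the Laplacian $L := dI - A$ is positive semidefinite with eigenvalues $0 = \mu_1 \le \mu_2 \le \dots \le \mu_n$ and $\mu_2 = d - \lambda_2$, where $\lambda_2$ is the second largest eigenvalue of $A$. The Ramanujan hypothesis is exactly that every nontrivial eigenvalue $\lambda$ of $A$ satisfies $|\lambda| \le 2\sqrt{d-1}$; in particular $\lambda_2 \le 2\sqrt{d-1}$, so $\mu_2 \ge d - 2\sqrt{d-1}$.

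The key step is the variational principle for $\mu_2$: any vector $x \in \mathbb{R}^V$ orthogonal to the all-ones eigenvector of $\mu_1 = 0$ satisfies $x^T L x \ge \mu_2 \|x\|^2$. Fix $S \subseteq V$ with $0 < |S| \le n/2$ and plug in the centered indicator $x = \mathbf{1}_S - (|S|/n)\mathbf{1}$. Short direct calculations give $x \perp \mathbf{1}$, $\|x\|^2 = |S|(n-|S|)/n$, and
\[
    x^T L x \;=\; \sum_{\{u,v\} \in E(G')}(x_u - x_v)^2 \;=\; \sum_{\{u,v\} \in E(G')}(\mathbf{1}_S(u) - \mathbf{1}_S(v))^2 \;=\; \delta_{G'}(S),
\]
since the only edges contributing are precisely those crossing the cut. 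Combining with the variational inequality,
\[
    \delta_{G'}(S) \;\ge\; \mu_2 \cdot \frac{|S|(n-|S|)}{n} \;\ge\; \frac{\mu_2}{2}\cdot |S|,
\]
using $n - |S| \ge n/2$ in the last step. Dividing by $|S|$ and invoking $\mu_2 \ge d - 2\sqrt{d-1}$ yields the claimed lower bound.

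There is no substantive obstacle: the easy direction of Cheeger and the Ramanujan spectral bound are standard (and referenced in the paper as \citet{alon04probabilistic} and \citet{lubotzky1988ramanujan}). The only bookkeeping is in the multigraph case, where one checks that $L = dI - A$ remains positive semidefinite and that $x^T L x = \sum_{e \in E(G')}(x_u - x_v)^2$ continues to hold when multi-edges are counted with multiplicity, so that the resulting sum matches $\delta_{G'}(S)$ as defined in the paper. This is precisely the observation the paper uses to port the statement from simple graphs to multigraphs without reproof.
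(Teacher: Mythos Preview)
Your proof is correct and is precisely the standard easy-direction Cheeger argument combined with the Ramanujan spectral bound. The paper itself does not prove this lemma at all---it cites \citet[Section 9.2]{alon04probabilistic} and \citet{lubotzky1988ramanujan} and states the result without proof, merely remarking that the same argument goes through for multigraphs; your write-up is essentially what that citation unpacks to.
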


\begin{lemma}\label{lem:ramanujan-graph-property}
For any $a_i \in A$, let $G' = R'(3, Ca_i)$ be the connected graph defined as above, on a set of nodes $V$, say. Let $B \subseteq V$ be such that $|B| \le \ffrac{Ca_i}{2}$. Then, $\delta_{G'}(B) \ge \ffrac{|B|}{100}$. 
\end{lemma}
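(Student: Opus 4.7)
The plan is to apply Cheeger's inequality (Lemma \ref{lem:cheegers-inequality-apdx}) to the original $3$-regular Ramanujan bipartite multigraph $R$ from which $G'$ is obtained, and then account for the edges that get identified when parallel edges are collapsed to simple ones. Since $B$ satisfies $0 < |B| \le Ca_i/2 = |V|/2$ (the case $|B| = 0$ is trivial), I would invoke the inequality with $d = 3$ to obtain
$$\delta_R(B) \;\ge\; \frac{3 - 2\sqrt{2}}{2}\cdot|B| \;\approx\; 0.086\,|B|.$$

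The next step is to control the multiplicities of the edges that were discarded. I would show that every edge of $R$ has multiplicity at most $2$: if some pair $\{u,v\}$ were joined by $3$ parallel edges in $R$, then $u$ and $v$ would exhaust all of their degree on each other, rendering $\{u,v\}$ an isolated $2$-vertex component of $R$. But the Ramanujan property forces a spectral gap ($2\sqrt{2} < 3$), hence connectivity whenever $|V(R)| = Ca_i \ge 6$, and this lower bound on $Ca_i$ is guaranteed by the construction in the proof of Theorem \ref{thm:bounded-degree-approx-lower-bound}. Consequently, every distinct edge of $G'$ that crosses the cut $(B, V \setminus B)$ accounts for at most $2$ edge-instances in the multigraph cut, yielding
$$\delta_{G'}(B) \;\ge\; \frac{\delta_R(B)}{2} \;\ge\; \frac{3 - 2\sqrt{2}}{4}\cdot|B| \;>\; \frac{|B|}{100},$$
where the last inequality uses $(3 - 2\sqrt{2})/4 \approx 0.043 > 1/100$.

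The only non-routine step I anticipate is the multiplicity-at-most-$2$ bound, and even that reduces to the standard fact that non-trivial Ramanujan graphs are connected. The constant $1/100$ in the statement is fairly slack --- the argument above actually delivers a bound of roughly $|B|/23$ --- so the round constant in the lemma appears to be chosen for convenience rather than tightness, which gives the proof some slack against minor losses in either estimate.
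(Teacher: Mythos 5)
Your proof is correct and follows essentially the same route as the paper's: apply Cheeger's inequality (Lemma \ref{lem:cheegers-inequality-apdx}) to the original Ramanujan multigraph, then divide by the maximum edge multiplicity lost when parallel edges are collapsed. The only difference is that you sharpen the multiplicity bound from the paper's trivial factor of $3$ to $2$ via the connectivity argument, which marginally improves the constant but is unnecessary for the stated bound $\delta_{G'}(B) \ge |B|/100$.
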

\begin{proof}
In the original Ramanujan multigraph $R(3, Ca_i)$ (which creates $R'(3, Ca_i)$ after removing repeated edges), the cut $(B, V \setminus B)$ has at least $\ffrac{3|B|}{100}$ edges, using Cheeger's inequality (Lemma \ref{lem:cheegers-inequality-apdx}). In $R'(3, Ca_i)$, the cut size drops by a factor of at most $3$ since we only remove repeated edges, and there can be at most $3$ edges between any two nodes in the graph.
\end{proof}

We construct an instance of {\GHA} by placing these $3m$ $R'$-graphs at unique leaves of a binary tree $B_r$ (see Figure \ref{fig:bounded-degree-reduction-graph}). Like Theorem \ref{thm:bounded-degree-planar-approx-lower-bound}, the binary tree $B_r$ is constructed such that it has at least $3m$ leaves and at most $12m$ nodes. It is easy to see that the smallest balanced binary tree with at least $3m$ leaves satisfies this constraint. Note that the constructed graph has maximum degree $4$.

There are $CmT + |B_r|$ nodes in this graph.
The valuations of the houses are defined the same way as before: for each $j \in [m]$, there are $CT$ houses valued at $j$ and there are $|B_r|$ houses valued at $0$. $C$ again will be decided later. Along with ensuring $C$ is polynomial, we will also ensure that $C$ is even and greater than or equal to $6$.

We show that the minimum total envy output by $\mathsf{ALG}_{\cal G}$ for the house allocation instance is greater than or equal to $\left \lceil (14400\gamma m^3 T + 1)^{\ffrac{1}{\varepsilon}} \right \rceil$ if and only if there exists {\em no} valid partition for the original \textsc{3-Partition} instance i.e. the original instance was a NO instance.

$(\Leftarrow)$ Assume there is a valid $3$-partition for the original instance. This case follows the exact same way as Theorem \ref{thm:trees-approx-lower-bound}. The minimum envy is upper bounded by $3m^2$ and the envy output by $\mathsf{ALG}_{\cal G}$ is upper bounded by $3m^2\gamma(CmT+|B_r|)^{1-\varepsilon} \le 3m^2\gamma(CmT+12m)^{1-\varepsilon}$.

$(\Rightarrow)$ Assume there is no valid $3$-partition in the original instance. We will show that any allocation must have an envy of at least $\ffrac{C}{200}$. 

Similar to Theorem \ref{thm:general-approx-lower-bound}, we refer to a value $j \in [m] \cup \{0\}$ as a majority value of a graph $R'(3, Ca_i)$ if this value has been allocated to at least $\ffrac{Ca_i}{2}$ nodes in the grid. 

For some allocation $\pi$, assume there exists a graph $R'(3, Ca_1)$ with no majority value. Let $Q_j$ be the set of nodes with value $j$ in this graph under allocation $\pi$. Note that the number of edges from $Q_j$ to nodes with a different value is at least $\ffrac{|Q_j|}{100}$ (Lemma \ref{lem:ramanujan-graph-property}). Therefore the total number of edges between nodes with different value is lower bounded by $\ffrac{C}{200}$.
We divide by $2$ since each edge gets counted at most twice. Since each of these edges has an envy of $1$, we are done.

From here on, assume that each $R'$-graph has a majority value. 
Let graph $i$ correspond to $R'(3,Ca_i)$. For notational convenience, assume WLOG that graphs $\{1, \dots, \ell_1\}$ have majority value $1$, $\{{\ell_1 +1}, \dots, {\ell_1 + \ell_2}\}$ have majority value $2$ and so on. Using analysis similar to \Cref{thm:trees-approx-lower-bound}, there exists one $j \in [m]$ such that $\sum_{h \in [\ell_j]}a_{\ell_1 + \dots + \ell_{j-1} +h} > T$. Assume again for notational convenience that $j = 1$. Since all these values are integers, we can restate it as $\sum_{h \in [\ell_1]}a_{h} \ge T + 1$. This implies $\sum_{h \in [\ell_1]} Ca_{h} \ge CT + C$. 

Coming to our allocation $\pi$, $\sum_{h \in [\ell_1]} Ca_{h} \ge CT + C$ implies that there are at least $C$ nodes in graphs $\{1,2, \dots, \ell_1\}$ which are not allocated a value of $1$. Let $A_i$ correspond to the set of nodes in grid $i$ allocated a value other than $1$. Note that $\sum_{i \in [\ell_1]} |A_i| \ge C$. For each $i$, the number of edges between nodes in $A_i$ and nodes with value $1$, is at least $\ffrac{|A_i|}{100}$ (Lemma \ref{lem:ramanujan-graph-property}). This gives us a lower bound of $\ffrac{C}{100}$ edges with envy at least $1$.

We set $C = 200\left \lceil (14400\gamma m^3 T + 1)^{\ffrac{1}{\varepsilon}} \right \rceil$ to complete the reduction; this setting crucially ensures that each $Ca_i$ is even and at least $6$. When there is no valid $3$-partition, the total minimum envy (and therefore, the envy output by $\mathsf{ALG}_{\cal G}$) is at least $\ffrac{C}{200} = \left \lceil (14400\gamma m^3 T + 1)^{\ffrac{1}{\varepsilon}} \right \rceil$. However, when there is a valid $3$-partition, the envy output by $\mathsf{ALG}_{\cal G}$ is strictly upper bounded by:
\begin{align*}
    3m^2\gamma(CmT+12m)^{1-\varepsilon} \le 36m^3 \gamma TC^{1-\varepsilon} < 7200m^3 \gamma T \left ( \left \lceil (14400\gamma m^3 T + 1)^{\ffrac{1}{\varepsilon}} \right \rceil \right )^{1 - \varepsilon} < \left \lceil (14400\gamma m^3 T + 1)^{\ffrac{1}{\varepsilon}} \right \rceil.
\end{align*}
This concludes the proof.
\end{proof}

\end{document}